\newcommand{\B}{\mathcal{B}}
\newcommand{\id}{\text{\rm id}}
\let\Im\undefined
\DeclareMathOperator{\Hom}{Hom}
\DeclareMathOperator{\Im}{im}
\DeclareMathOperator{\Ker}{ker}
\DeclareMathOperator{\Coker}{coker}
\DeclareMathOperator{\Frac}{Frac}
\DeclareMathOperator{\Gal}{Gal}
\DeclareMathOperator{\rgcd}{rgcd}
\newcommand{\Fp}{\mathbb{F}_p}
\newcommand{\Fq}{{\mathbb{F}_q}}
\newcommand{\Kbar}{{\overline{K}}}
\newcommand{\Ksep}{K^{\text{\rm sep}}}
\newcommand{\NN}{\mathbb{Z}_{\geqslant 0}}
\newcommand{\ZZ}{\mathbb{Z}}
\newcommand{\FF}{\mathbb{F}}
\renewcommand{\geq}{\geqslant}
\renewcommand{\leq}{\leqslant}
\renewcommand{\a}{\mathfrak{a}}
\newcommand{\m}{\mathfrak{m}}
\newcommand{\A}{\mathcal{A}}
\newcommand{\Aq}{\mathcal{\A_{\q}}}
\newcommand{\p}{\mathfrak{p}}
\newcommand{\q}{\mathfrak{q}}
\newcommand{\E}{\mathbb{E}}
\newcommand{\T}{\mathbb{T}}
\newcommand{\M}{\mathbb{M}}
\DeclareMathOperator{\Frob}{Frob}
\newcommand{\Ktau}{K\{\tau\}}
\newcommand{\detideal}{\mathfrak{det}\,}
\newcommand{\norm}{\mathfrak n}
\newcommand{\Nrd}{N_{\text{\rm rd}}}
\newcommand{\lc}{\text{\rm lc}}
\newcommand{\Otilde}{\ensuremath{\mathop{O\hspace{0.2ex}\tilde{~}}}}
\newcommand{\Opower}{\ensuremath{\mathop{O^\bullet}}}
\newcommand{\FMFF}{\texttt{F}-\texttt{MFF}\xspace}
\newcommand{\FMKU}{\texttt{F}-\texttt{MKU}\xspace}
\newcommand{\FCSA}{\texttt{F}-\texttt{CSA}\xspace}
\newcommand{\MotiveCoordinates}{\textsc{MotiveCoordinates}{}}
\newcommand{\MotiveTauAction}{\textsc{MotiveTauAction}{}}
\newcommand{\MotiveMatrix}{\textsc{MotiveMatrix}{}}
\newcommand{\CSAMatrix}{\textsc{Matrix-CSA}{}}
\newcommand{\FrobeniusCharpoly}{\textsc{FrobeniusCharpoly-CSA}{}}
\newcommand{\IsogenyNorm}{\textsc{IsogenyNorm}{}}
\newcommand{\EndomorphismCharpoly}{\textsc{EndomorphismCharpoly}{}}
\DeclareMathOperator{\SM}{SM}
\newcommand{\SMgeq}{{\SM^{\geqslant 1}}}
\newcommand{\coord}{\Gamma}
\renewcommand{\epsilon}{\varepsilon}
\definecolor{purple}{rgb}{0.6,0,0.6}
\newcounter{thintro}
\newtheorem{theo*}[thintro]{Theorem}
\newtheorem{theo}{Theorem}[section]
\newtheorem{lem}[theo]{Lemma}
\newtheorem{prop}[theo]{Proposition}
\newtheorem{cor}[theo]{Corollary}
\theoremstyle{definition}
\newtheorem{rem}[theo]{Remark}
\newtheorem{ex}[theo]{Example}
\newtheorem{deftn}[theo]{Definition}
\newcommand{\functiondef}[4]{
  \begin{center}
    \begin{tabular}{rcl}
      $#1$ & $\to$     & $#2$ \\
      $#3$ & $\mapsto$ & $#4$
    \end{tabular}
  \end{center}
}
\newcommand{\functiondefname}[5]{
  \begin{center}
    \begin{tabular}{rrcl}
      $#1$ : & $#2$ & $\to$     & $#3$ \\
             &   $#4$ & $\mapsto$ & $#5$
    \end{tabular}
  \end{center}
}
\newcommand{\functionname}[3]{
  \[
      #1 : \quad #2 \; \to \; #3
  \]
}
\newcommand{\ttau}{t}
\begin{document}

  \title{Algorithms for computing norms and characteristic polynomials on general Drinfeld~modules}
  \date\today

  \author{%
    Xavier Caruso\footnote{Université de Bordeaux, CNRS, INRIA, 351, cours de la Libération, 33405 Talence, France},
    Antoine Leudière\footnote{Université de Lorraine, INRIA, CNRS, 615 rue du Jardin Botanique, 54600 Villers-lès-Nancy, France}%
  }

  \maketitle

  \begin{abstract}
    We provide two families of algorithms to compute characteristic 
    polynomials of endomorphisms and norms of isogenies of Drinfeld 
    modules. Our algorithms work for Drinfeld modules of any rank,
    defined over any base curve.
    When the base curve is $\mathbb P^1_{\Fq}$, we do a thorough
    study of the complexity, demonstrating that our algorithms are,
    in many cases, the most asymptotically performant.
    The first family of algorithms relies on the correspondence
    between Drinfeld modules and Anderson motives, reducing the computation to
    linear algebra over a polynomial ring. The second family, available only for the Frobenius endomorphism,
    is based on a formula expressing
    the characteristic polynomial of the Frobenius as a
    reduced norm in a central simple algebra.
  \end{abstract}

\setcounter{tocdepth}{2}
\tableofcontents

\section*{Introduction}

Drinfeld modules were introduced in 1974 to serve as the foundations of the
class field theory of function fields~\cite{drinfeld-paper}. Although they were
initially considered as mathematical abstract objects, recent papers
highlighted a growing interest for the computational aspects in these topics:
in the recent years, a PhD thesis~\cite{caranay-thesis} and at least three
papers focused on the algorithmics of Drinfeld modules~\cite{brenner_computing_2020, musleh-schost-1,
musleh-schost-2}.
Due to their striking similarities with elliptic curves, Drinfeld modules
were considered several times for their applications in cryptography~\cite{joux_drinfeld_2019,
scanlon_public_2001, bombar_codes_2022, leudiere_computing_2024}. Other
applications saw them being used to efficiently factor polynomials in $\Fq[T]$
\cite{doliskani-narayanan-schost}.

The present paper is a contribution to the algorithmic toolbox of 
Drinfeld modules. More precisely, we focus on the effective and 
efficient computation of characteristic polynomials of endomorphisms
of Drinfeld modules, as well as norms of general isogenies.

\paragraph{Context.}

Before going deeper into our results, we recall briefly the purpose and the
most significant achievements of the theory of Drinfeld modules. Classical
class field theory aims at describing abelian extensions of local and global
fields, using information available solely at the field's level
\cite{chevalley_theorie_1940, conrad_history_2009}. Premises of the theory go
back to Gau\ss' \emph{Disquisitiones Arithmeticae}, and in 1853, Kronecker
stated the famous Kronecker-Weber theorem: every abelian number field lies
inside a cyclotomic field \cite{kronecker-weber, hilbert_neuer_1932}. Another
crucial theorem from class field theory is the Kronecker \emph{Jugendtraum},
relating maximal abelian unramified extensions of quadratic imaginary number
fields and the theory of complex multiplication of elliptic curves. More
generally, a result conjectured by Hilbert, and proved by Takagi in 1920
\cite{takagi_collected_2014}, asserts that every number field $K$ is contained
inside a maximal abelian unramified extension $H$ whose class group is isomorphic to
$\Gal(H/K)$. The field $H$ is called the \emph{Hilbert class field} of $K$ and,
apart from abelian number fields and imaginary quadratic number fields, it is
generally hard to describe, yet even to compute.

One of the goals of the introduction of Drinfeld modules is to set up an
analogue of these results for function fields. They were also instrumental in
proving a special case of the Langlands program for $\mathrm{GL}_r$ of a function
field (see \cite{drinfeld-paper} for $r=2$ and \cite{laumon_cohomology_1995}
for general $r$). Lafforgue proved the global Langlands correspondence for
$\mathrm{GL}_r$ of a function field using generalizations of Drinfeld modules
called \emph{shtukas}. He was awarded the Fields medal for this
work~\cite{lafforgue}.

A Drinfeld module is an object defined in the following
setting: a base curve $C$ over $\Fq$ which is projective, smooth and
geometrically connected (\emph{e.g.} $C = \mathbb P^1_{\Fq}$); a fixed point
$\infty$ of~$C$; the ring $A$ of rational functions on $C$ regular outside
$\infty$ (\emph{e.g.} $A = \Fq[T]$); a base field $K$ with a structure of
$A$-algebra given by an $\Fq$-algebra morphism $\gamma: A \to K$. We then talk
about Drinfeld $A$-modules.
In this setting, we define $\p = \ker \gamma$; it is an ideal of $A$ acting as a function
field analogue of the more classical characteristic $p$.
An important feature of Drinfeld modules is that
they endow the algebraic closure $\Kbar$ of $K$ with a structure of $A$-module. When $A =
\Fq[T]$, this structure surprisingly resembles to the $\ZZ$-module
structure on the points of an elliptic curve. Important references on Drinfeld
modules include \cite{gek91, gos98, rosen_number_2002,
poonen_introduction_2022, villa_salvador_topics_2006, hayes_brief_2011,
papikian_drinfeld_2023}.

The simplest Drinfeld modules are the rank $1$ Drinfeld modules over the curve
$\mathbb P^1_\Fq$, where $K$ is the function field $\Fq(T)$, \emph{i.e.} the
Drinfeld $\Fq[T]$-modules of rank $1$ over $\Fq(T)$. They were studied by
Carlitz \cite{carlitz_certain_1935}, and provide function field analogues of
roots of unity, and consequently, of cyclotomic fields; the analogue of the
Kronecker-Weber theorem was subsequently proved by Hayes
\cite{hayes_explicit_1974}. Coming to the \emph{Jugendtraum}, we need to go to
Drinfeld modules of rank $1$ over general curves and Drinfeld $\Fq[T]$-modules
or rank $2$ over finite fields. The latter have a theory of complex
multiplication which shares many similarities with that of elliptic curves over
finite fields. As an illustration, we mention that the endomorphism ring of
such a Drinfeld module is either an order in a quadratic imaginary function field
or a maximal order in a quaternion algebra.

\paragraph{Algorithmic results.}

Like in the classical setting, the theory of complex multiplication of Drinfeld
modules depends heavily on the notion of \emph{characteristic polynomial of the
Frobenius endomorphism}, which we compute in this paper. This polynomial lies in 
$A[X]$ and is an invariant of primary importance: it determines the
isogeny class of the underlying Drinfeld module, it controls the theory
of complex multiplication and it is the main building block in the
construction of the attached $L$-series (see~\cite{caruso-gazda} and 
references therein). Moreover, in the case of rank $2$ Drinfeld modules over
$\Fq[T]$, it determines if a Drinfeld module is ordinary or supersingular,
as happens with elliptic curves. The characteristic polynomial of the Frobenius also defines curves and
extensions that naturally arise in the class field theory of function fields
\cite{leudiere_computing_2024}. More generally, characteristic polynomials can
be defined for any endomorphism in any rank and over any base.

In the present paper, we design algorithms for computing the characteristic
polynomial of any endomorphism of a Drinfeld module on the one hand,
and for computing the norm of any isogeny between Drinfeld modules on the other
hand. When $A = \Fq[T]$, we moreover do a thorough analysis of their 
complexity.
To state our complexity results, it is convenient to use Laudau's
$O$-notation and some of its variants. Precisely, if $f$ and $g$
are two positive quantities depending on parameters, we write
\begin{itemize}[itemsep=0.5ex,parsep=0ex,topsep=0.5ex]
\item $g \in O(f)$ if there exists an absolute positive constant $C$ 
such that $g \leq C {\cdot} f$,
\item $g \in \Otilde(f)$ if there exist absolute positive constant $C$ and
$k$ such that $g \leq C {\cdot} f \log^k f$,
\item $g \in \Opower(f)$ if, for all $\varepsilon > 0$, there exists 
a positive constant $C_\varepsilon$ 
such that $g \leq C_\varepsilon {\cdot} f^{1 + \varepsilon}$,
\end{itemize}
where all inequalities are required to hold true for \emph{all} choices 
of parameters.

Let also $\omega \in [2,3]$ denote a feasible exponent for matrix 
multiplication; by this, we mean that we are given an algorithm which is 
able to compute the product of two $n \times n$ matrices over a ring $R$ 
for a cost of $O(n^\omega)$ operations in~$R$. The naive algorithm leads
to $\omega = 3$; however, better algorithms do exist and, currently, the one
with the lowest $\omega$ does so for $\omega$ approximately equal to
$2.37188$~\cite{matrices:omega}.
Similarly, let $\Omega$ be a feasible exponent for the computation of the
characteristic polynomial of a matrix over polynomials rings over a field.
Using Kaltofen and Villard's algorithm, it is known that one can reach
$\Omega < 2.69497$~\cite{matrices:kaltofen-villard}.
If $K$ is a finite extension of $\Fq$ of degree $d$, we also denote by
$\SMgeq(n,d)$ a log-concave function with respect to the variable $n$
having the following property:
the number of operations in $\Fq$\footnote{Here,
we assume that applying the Frobenius of $K$ counts for $\Otilde(d)$
operations
in $\Fq$, see \S \ref{subsec:ore-computations} for more details.} needed
for multiplying two Ore polynomials in $K\{\tau\}$ of degree~$n$ is
in $\Otilde(\SMgeq(n,d))$.

Our first result is about the computation of the characteristic polynomial of
an endomorphism of a Drinfeld module.

\begin{theo*}[see Theorems~\ref{theo:motive-charpoly-comp} and~\ref{theo:charpoly-finitefield}]
\label{thintro:charpoly}
  Let $\phi$ be a Drinfeld $\Fq[T]$-module of rank $r$ over a field $K$, and
  let $u$ be an endomorphism\footnote{We refer to \S \ref{ssec:backgrounddrinfeld} 
  for the definition of an endomorphism of a Drinfeld module and of its degree.}
  of $\phi$ of degree $n$.
  The characteristic polynomial of $u$ can be computed for a cost of
  $\Otilde(n^2 + (n+r)r^{\Omega - 1})$
  operations in $K$ and $O(n^2 + r^2)$
  applications of the Frobenius.

  Moreover, when $K$ is a finite extension of $\Fq$ of degree~$d$, 
  the characteristic polynomial of $u$ can be computed for a cost of
  \[
    \Otilde(d \log^2 q) + 
    \Opower\big( \big(\SMgeq(n, d) + ndr + (n + d)r^\omega\big)\cdot\log q\big)
  \]
  bit operations.
\end{theo*}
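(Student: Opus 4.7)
The plan is to use the anti-equivalence between Drinfeld modules and Anderson motives. Under this correspondence, $\phi$ gives rise to a free $K[T]$-module $\M(\phi)$ of rank $r$, and an endomorphism $u$ of $\phi$ becomes a $K[T]$-linear endomorphism $U$ of $\M(\phi)$ whose characteristic polynomial, computed as that of a matrix over the commutative ring $K[T]$, coincides with the characteristic polynomial of $u$ in the Drinfeld sense. The computation thus reduces to two classical steps: (i) produce a matrix realization $U \in \Mat_r(K[T])$ of the induced endomorphism; (ii) compute the characteristic polynomial of $U$ via fast linear algebra over $K[T]$.

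For step (i), I would invoke the subroutine \MotiveMatrix developed earlier in the paper: starting from the Ore representation of $u \in \Ktau$ of degree $n$, one expresses each $u \cdot \tau^j$ in the canonical basis $(1, \tau, \ldots, \tau^{r-1})$ of the motive by repeatedly reducing modulo $\phi_T$. The crucial point is that each elementary reduction only invokes a single Frobenius twist on the leading coefficient, so that careful scheduling yields $O(n^2+r^2)$ Frobenius applications and $\Otilde(n^2)$ additional operations in $K$. For step (ii), the entries of $U$ have $T$-degree controlled by $n/r$, and applying Kaltofen-Villard's algorithm to this $r \times r$ matrix gives a cost of $\Otilde((n+r)\, r^{\Omega-1})$ operations in $K$. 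Summing the two contributions yields the first bound.

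For the bit complexity statement over $K = \Fq^d$, I would proceed by substitution: each operation in $K$ costs $\Otilde(d \log q)$ bit operations, and each Frobenius application on~$K$ can be carried out in $\Otilde(d \log q)$ bit operations after an initial precomputation of $\Otilde(d \log^2 q)$ bit operations (using the $q$-th power Frobenius on $\Fq^d$). The $\SMgeq(n, d)$ term then arises when one reformulates step (i) to exploit fast Ore multiplication in $\Ktau$ during the conversion between the Ore polynomial and motive viewpoints, while the $(n+d)\, r^\omega$ contribution reflects the fact that, over a finite field, one can carry out the charpoly of $U$ by genuine matrix multiplication, replacing the exponent $\Omega$ by~$\omega$. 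The $ndr$ term accounts for the plain storage and manipulation cost of the motive matrix itself.

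The main obstacle is securing the sharp $\Otilde(n^2)$ bound in step (i): a naïve conversion between the Ore polynomial and motive representations would yield a higher exponent, and one must carefully amortise the Frobenius twists using fast polynomial arithmetic. A secondary subtlety, in the finite field analysis, is to balance the Ore-arithmetic cost (contributing $\SMgeq(n, d)$) against the commutative linear algebra over $K[T]$ (contributing $r^\omega$) so as to achieve the announced bound uniformly; this will force a slight case analysis according to the relative sizes of $n$, $r$, and~$d$.
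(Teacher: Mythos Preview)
Your proposal is correct and follows essentially the same approach as the paper: reduce to the characteristic polynomial of $\M(u)$ on the motive (Theorem~\ref{th:norm-endomorphism}), build the matrix via \MotiveMatrix{} (Algorithms~\ref{algo:motive-coordinates}--\ref{algo:motive-matrix}) with the degree bounds of Corollary~\ref{cor:bound-coeffs}, and then invoke Kaltofen--Villard over $K[T]$ for the general case and the evaluation trick of Lemma~\ref{lem:matrice-comp-1} over a finite field to replace $\Omega$ by~$\omega$. Two minor points: the columns of the matrix are the coordinates of $\tau^j u$, not $u\,\tau^j$ (since $\M(u)\colon f\mapsto fu$), and step~(i) costs $O(n^2+r^2)$ operations in~$K$ rather than $\Otilde(n^2)$, the extra $r^2$ coming from the $r{-}1$ calls to \MotiveTauAction.
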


We then study more particularly the special case of the Frobenius
endomorphism (which is only defined when $K$ is a finite field),
for which we provide three different algorithms that we call \FMFF,
\FMKU and \FCSA respectively.

\begin{theo*}
\label{thintro:frob}
  Let $\phi$ be a Drinfeld $\Fq[T]$-module of rank $r$ over a finite 
  extension $K$ of $\Fq$ of degree~$d$. The characteristic polynomial of the Frobenius
  endomorphism of $\phi$ can be computed for a cost of either
  \begin{itemize}[itemsep=0.5ex,parsep=0ex,topsep=0.5ex]
    \item {\normalfont [\FMFF algorithm, see \S \ref{sssec:charpoly-finitefield}]\hspace{1ex}}
      $\Otilde(d \log^2 q) + \Opower\big((\SMgeq(d, d) + d^2r + dr^\omega)\cdot \log q\big)$, or

    \item {\normalfont [\FMKU algorithm, see \S \ref{sssec:charpoly-kedlaya-umans}]\hspace{1ex}}
      $\Otilde(d \log^2 q) + \Opower\big((d^2r^{\omega - 1} + dr^\omega)\cdot \log q\big)$, or

    \item {\normalfont [\FCSA algorithm, see \S \ref{ssec:CSAP1}]\hspace{1ex}}
      $\Otilde(d \log^2 q) + \Opower(r d^\omega \log q)$
  \end{itemize}
  bit operations.
\end{theo*}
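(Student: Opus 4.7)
The plan is to prove each of the three complexity estimates independently, as they correspond to three genuinely different algorithmic strategies sharing only the initial $\Otilde(d\log^2 q)$ term. This term accounts for constructing the finite extension $K$ of degree $d$ over $\Fq$ (through the standard search for an irreducible polynomial of degree $d$ over $\Fq$); once $K$ is available, all subsequent computations are counted in operations in $K$ or $\Fq$ and then converted to bit operations.

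For the \FMFF bound, I would simply specialize Theorem A to the Frobenius endomorphism. Over a degree-$d$ extension, the Frobenius has degree $n=d$, so substituting this value into the second estimate of Theorem A collapses $(n+d)r^\omega$ into $dr^\omega$ and yields exactly the stated bound. No additional ingredient is required.

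For the \FMKU bound, the aim is to remove the factor $\SMgeq(d,d)$ (which may be as large as $\Otilde(d^2)$ in the worst case) and replace it by the tighter $d^2 r^{\omega-1}$. I would revisit the proof of Theorem A, isolate the step that relies on generic Ore-polynomial arithmetic, and observe that what is really being computed there is the iterated action of $\tau$ on the Anderson motive, viewed as a composition of the form $P(F)$ where $P$ is a polynomial and $F$ is a matrix of polynomials of controlled degree. This is precisely the setting of Kedlaya--Umans modular composition: invoking it here trades $\Otilde(\SMgeq(d,d))$ for $\Opower(d^2 r^{\omega-1})$, and the remaining step of computing the characteristic polynomial of an $r\times r$ matrix whose entries are polynomials of degree $d$ contributes the $dr^\omega$ term.

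For the \FCSA bound, I would exploit the structural identification of the endomorphism algebra of $\phi$ (tensored with the appropriate fraction field) with a central simple algebra $\calC$ of dimension $r^2$, in which the characteristic polynomial of the Frobenius coincides with its reduced characteristic polynomial. The hard part, and the main obstacle of the whole theorem, is algebraic rather than algorithmic: one must prove that the Frobenius does embed into such a $\calC$ in a controllable way, and exhibit an explicit faithful representation of $\calC$ by $d\times d$ matrices in which the image of the Frobenius can be computed within the target bound. Once this representation is built, computing the reduced characteristic polynomial reduces to standard linear algebra on a $d\times d$ matrix with entries in an $\Fq[T]$-algebra of rank $r$; fast matrix arithmetic then yields the $\Opower(rd^\omega\log q)$ bit-complexity estimate.
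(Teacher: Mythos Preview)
Your proposal has the right overall architecture (three independent algorithms, each analyzed separately), but several of the concrete steps do not match what actually happens, and in one case the approach would not work.

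First, a minor but systematic error: the $\Otilde(d\log^2 q)$ term is \emph{not} the cost of constructing the field $K$. In the paper's complexity model (\S\ref{sssec:complexitymodel}), $K$ is given; the $\Otilde(d\log^2 q)$ is the once-only cost of precomputing $\alpha^q$ (where $\alpha$ generates $K$ over $\Fq$), so that subsequent Frobenius applications can be done by Kedlaya--Umans modular composition in $\Opower(d\log q)$ each.

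Your \FMFF paragraph is fine and matches the paper exactly (Corollary~\ref{cor:charpoly-finitefield-frobenius} is literally ``apply Theorem~\ref{theo:charpoly-finitefield} with $n=d$'').

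For \FMKU, your description ``a composition of the form $P(F)$ where $P$ is a polynomial and $F$ a matrix'' misses the mechanism. The paper's key observation is that $\tau^d$ is \emph{central}, so its action on the motive can be computed as a \emph{left} multiplication. Left multiplication by $\tau$ is semi-linear with an explicit companion matrix $M_1$, and the matrix of $\M(F_\phi)$ is the twisted product $M_d = M_1 \cdot M_1^\tau \cdots M_1^{\tau^{d-1}}$, computed by square-and-multiply via $M_{2s}=M_s\cdot M_s^{\tau^s}$. Kedlaya--Umans is invoked only to apply $\tau^s$ to the \emph{entries} of $M_s$, not for any matrix-level modular composition. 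Without the centrality observation and the square-and-multiply recursion, there is no route to the $d^2 r^{\omega-1}$ bound.

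For \FCSA, your identification of the central simple algebra is wrong, and the argument as stated is inconsistent. You propose using the endomorphism algebra of $\phi$ tensored with a fraction field, which you say has dimension $r^2$; but then you also want a faithful $d\times d$ matrix representation, which would force dimension $d^2$. More importantly, the structure of $\End(\phi)$ is exactly what one is trying to compute (it is governed by $\pi(F_\phi)$), so this is circular. The paper's algebra is the Ore ring $\Frac(A_K)[\ttau;\varphi]$ itself, which is central simple of dimension $d^2$ over $\Frac(A)(\ttau^d)$ regardless of $\phi$. The main theorem (Theorem~\ref{theo:charpolyNrd}) is that $\pi(F_\phi)(\ttau^d)=\Nrd(\phi_T-T)$; since $\phi_T \in K\{\tau\}$, the right multiplication by $\phi_T$ is $\Fq[\ttau]$-linear on $K\{\tau\}$ and is represented by a $d\times d$ matrix with entries in $\Fq[\ttau]$ of degree at most $r$. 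Its characteristic polynomial then costs $\Otilde(rd^\omega)$ by Lemma~\ref{lem:matrice-comp-2}. The variables $T$ and $\ttau^d$ swap roles in this computation (Equation~\eqref{eq:charpolyFphiFqT}), which is the genuinely nonobvious point.
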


We finally come to general isogenies between different Drinfeld modules.
In this case, the characteristic polynomial is not well-defined, but the
norm is.

\begin{theo*}[see Theorems~\ref{theo:iso-norm} and~\ref{theo:iso-norm-finite}]
  Let $\phi$ and $\psi$ be two Drinfeld $\Fq[T]$-modules of rank $r$ over a
  field $K$, and let $u : \phi \to \psi$ be an isogeny of degree $n$. 
  The norm of $u$ can be computed for a cost of 
  $\Otilde(n^2 + nr^{\omega-1} + r^\omega)$
  operations in $K$ and $O(n^2 + r^2)$ applications of the Frobenius.

  Moreover, when $K$ is a finite extension of $\Fq$ of degree~$d$, 
  the norm of $u$ can be computed for a cost of 
  \[
    \Otilde(d \log^2 q) + 
    \Opower\big(\big(\SMgeq(n, d) + ndr + n\min(d,r)r^{\omega-1} + dr^\omega\big)
    \cdot \log q \big)
  \]
  bit operations.
\end{theo*}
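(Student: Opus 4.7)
The plan is to mimic the motive-theoretic strategy already employed for Theorem~\ref{thintro:charpoly}, replacing the characteristic-polynomial computation by a determinant computation; this is precisely what accounts, at the complexity level, for the replacement of the exponent $\Omega$ by $\omega$. Via the anti-equivalence between Drinfeld $\Fq[T]$-modules of rank $r$ and Anderson motives, the isogeny $u : \phi \to \psi$ corresponds to a $K[T]$-linear map between two free rank-$r$ $K[T]$-modules. Fixing bases on each side, this map is represented by an $r \times r$ matrix $U$ over $K[T]$ whose entries have degree $O(n/r)$; the norm of $u$ is then $\det(U)$, up to the usual normalization by a leading coefficient.

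The algorithm proceeds in two stages. First, compute the matrix $U$ by reusing, with obvious minor adaptations, the motive-matrix routine already introduced for endomorphisms: since it only uses the $\tau$-action of $\phi$ and a basis of the motive of $\psi$, allowing the source and target Drinfeld modules to differ is immediate. This step yields $U$ for a cost of $\Otilde(n^2)$ operations in $K$ together with $O(n^2 + r^2)$ Frobenius applications. Second, apply a fast polynomial-matrix determinant algorithm to $U$, which computes $\det(U)$ in $\Otilde(n r^{\omega - 1} + r^\omega)$ operations in $K$ without invoking the Frobenius. Summing the two stages proves the first complexity estimate.

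For the bit-complexity statement, each arithmetic operation in $K = \Fq^d$ is handled exactly as in the proof of Theorem~\ref{thintro:charpoly}: an $\Otilde(d \log q)$ cost per $K$-operation, the same cost per Frobenius application after an $\Otilde(d \log^2 q)$ setup, and the replacement of the naive $\Otilde(n^2 d)$ bound for Ore multiplication by $\Otilde(\SMgeq(n, d)\cdot\log q)$. The terms $\SMgeq(n, d)$, $ndr$ and $dr^\omega$ are thus inherited verbatim from the matrix-computation stage of the endomorphism case. The only genuinely new ingredient is the $n \min(d, r) r^{\omega - 1}$ contribution coming from the determinant stage, which is obtained by a more careful analysis exploiting the $\Fq$-algebra structure on the entries of $U$: concretely, when $d \leq r$ one gains a factor of $r/d$ over the naive bound by working with the $d$-dimensional $\Fq$-space of coefficients rather than with the whole of $K$. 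The main subtlety of the proof is, precisely, to propagate this $\min(d, r)$ refinement cleanly through the polynomial-matrix determinant routine; once this is in place, the rest is a routine assembly of the subroutines developed for Theorem~\ref{thintro:charpoly}.
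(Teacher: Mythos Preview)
Your overall strategy is correct and coincides with the paper's: compute the matrix of $\M(u)$ in the canonical bases (Algorithm~\ref{algo:motive-matrix}), then take its determinant, invoking Theorem~\ref{th:norm-general} for correctness. The first complexity estimate follows exactly as you outline.

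However, your account of the $n\min(d,r)r^{\omega-1}$ term in the finite-field case is backwards, and the mechanism you describe is not the one that works. In the paper's proof of Theorem~\ref{theo:iso-norm-finite} the case split goes the other way. When $d \leq r$ one simply runs the standard polynomial-matrix determinant algorithm over~$K$, at cost $\Otilde(nr^{\omega-1} + r^\omega)$ operations in~$K$, i.e.\ $\Otilde(ndr^{\omega-1} + dr^\omega)$ operations in~$\Fq$; there is no special trick here, and nothing is gained by ``working with the $d$-dimensional $\Fq$-space of coefficients.'' The genuine improvement occurs in the opposite regime $d \geq r$, and it is obtained not by restricting to a smaller space but by \emph{enlarging} the base field: one applies Lemma~\ref{lem:matrice-comp-1}, evaluating $T$ at a generator of an extension of~$\Fq$ of degree roughly~$n$, so that the determinant computation collapses to a single $r\times r$ computation over that extension, at cost $\Opower((n+d)r^\omega)$ operations in~$\Fq$. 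Taking the minimum of the two branches gives $n\min(d,r)r^{\omega-1} + dr^\omega$. Your description would not produce the stated bound.

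A minor additional point: the $dr^\omega$ term is not ``inherited verbatim from the matrix-computation stage'' as you write; the matrix stage contributes only $\SMgeq(n,d) + ndr + dr^2$ (see the proof of Theorem~\ref{theo:charpoly-finitefield}). The $dr^\omega$ term comes from the determinant stage, in both branches of the case split above.
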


Moreover, we propose extensions of all our algorithms to Drinfeld 
modules defined over a general curve $C$ (and not just $\mathbb P^1_{\Fq}$). 
However, we do not carry out, in the present paper, a thorough study of 
the complexity in this general setting.

Finally, we mention that, in the case of $\mathbb P^1_{\Fq}$, our algorithms
have been implemented in SageMath~\cite{software-presentation} and will be 
hopefully publicly available soon in the standard distribution.
Meanwhile, the interested user may read tutorials and try out our 
software package online on the platform plm-binder~at:

\medskip

\noindent
\hfill\url{https://xavier.caruso.ovh/notebook/drinfeld-modules}\hfill\null

\medskip

\paragraph{Comparison with previous results.}

To the authors' knowledge, it is the first time that algorithms are 
presented for Drinfeld modules defined over a general curve; so far,
only the case of $\mathbb P^1_{\Fq}$ was adressed.
Also, we are not aware of previous works on the explicit computations
of norms of general isogenies between different Drinfeld modules.

In contrast, the question of the explicit computation of the 
characteristic polynomial of the Frobenius endomorphism, especially 
in the case of rank $2$, was already considered by many
authors~\cite{narayanan, doliskani-narayanan-schost, garai-papikian, musleh-schost-1, musleh-schost-2}.
Our algorithms for this task are however new and they turn out to be
competitive for a large range of parameters.
More precisely, prior to our work, the most efficient algorithm was due 
to Musleh and Schost~\cite{musleh-schost-2}.
Depending on the relative values of $r$, $d = [K: \Fq]$ and $m = 
\deg(\p)$, all four algorithms (\FMFF, \FMKU, \FCSA and Musleh-Schost's
algorithm) achieve the best asymptotic complexity in at least one regime, 
as shown in Figure~\ref{fig:comparison}.
As a rule of thumb, the reader can memorize that our algorithms are
better when $r \gg \sqrt d$ (or even $r \gg 
d^{0.431}$ if one takes into account fast algorithms for matrix 
multiplication); on the contrary, when $r \ll \sqrt d$, our algorithms 
may still be competitive, depending on the relative values of $\log(m) / 
\log(d)$ and $\log(r)/\log(d)$.

For a more complete review on existing algorithms and comparison 
between complexities, we refer to the tables of Appendix~\ref{appendix:review}
(page~\pageref{appendix:review}).

\def\cA{(0.3805, 0.761)}
\def\cB{(0.3805, 0)}
\def\cC{(0.4308, 1)}
\def\ph{\vphantom{$A^A_A$}}

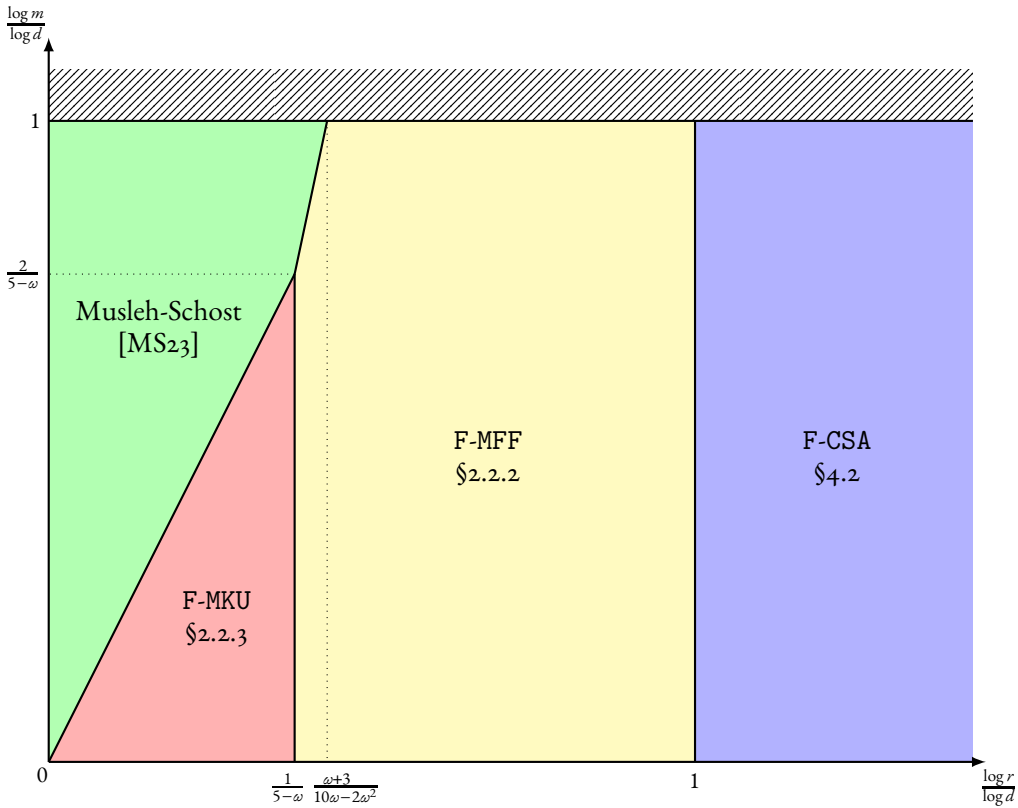
\begin{figure}
\hfill%
\begin{tikzpicture}[scale=7.7]
\fill[pattern=north east lines] (0,1) rectangle (1.43,1.08);
\fill[green!30] (0,0)--\cA--\cC--(0,1);
\fill[red!30] (0,0)--\cA--\cB;
\fill[yellow!30] \cB--\cA--\cC--(1,1)--(1,0);
\fill[blue!30] (1,0) rectangle (1.43,1);
\begin{scope}[thick]
\draw[-latex] (0,0)--(1.45,0);
\draw[-latex] (0,0)--(0,1.13);
\draw (0,1)--(1.43,1);
\draw (0,0)--\cA--\cC;
\draw \cA--\cB;
\draw (1,0)--(1,1);
\end{scope}
\node[scale=0.8] at (1.47,-0.04) { $\frac{\log r}{\log d}$ };
\node[scale=0.8] at (-0.035, 1.15) { $\frac{\log m}{\log d}$ };
\draw[dotted] \cA--(0, 0.761);
\draw[dotted] \cC--(0.4308, 0);
\node[scale=0.8] at (-0.01,-0.02) { $0$ };
\node[scale=0.8] at (-0.02,1) { $1$ };
\node[scale=0.8] at (1,-0.03) { $1$ };
\node[scale=0.8] at (0.37,-0.04) { $\frac 1{5-\omega}$ };
\node[scale=0.8] at (-0.04, 0.761) { $\frac 2{5-\omega}$ };
\node[scale=0.8] at (0.46,-0.04) { $\frac {\omega+3\hspace{3mm}}{10\omega - 2 \omega^2}$ };
\node at (0.17,0.7) { \ph Musleh-Schost };
\node at (0.17,0.65) { \ph \cite{musleh-schost-2} };
\node at (0.26,0.25) { \ph \FMKU };
\node at (0.26,0.2) { \ph \S \ref{sssec:charpoly-kedlaya-umans} };
\node at (0.68,0.5) { \ph \FMFF };
\node at (0.68,0.45) { \ph \S \ref{sssec:charpoly-finitefield} };
\node at (1.22,0.5) { \ph \FCSA };
\node at (1.22,0.45) { \ph \S \ref{ssec:CSAP1} };
\end{tikzpicture}%
\hfill\null

\caption{The best algorithm for computing the characteristic polynomial 
of the Frobenius endomorphism, depending on the size of $r$, $d$ and $m$.\\
\emph{Assumptions:} $2 \leq \omega \leq 3$ and $\omega \leq \Omega \leq \omega + 1$.}
\label{fig:comparison}
\end{figure}

\paragraph{Anderson motives.}

The main theoretical input upon which all our algorithms are based is the
\emph{motive} attached to a Drinfeld module, introduced by Anderson in
1986~\cite{anderson_t-motives_1986} (see also~\cite{gos98, weil-pairing,
grishkov_introduction_2020}). In the classical setting of algebraic geometry,
Grothendieck describes the motive $\M(X)$ of an algebraic variety $X$ as the
ultimate object able to encode all the ``linear'' properties of~$X$. Since
characteristic polynomials and norms are obviously constructions of linear
nature, we expect to be able to recover them at the level of motives. However,
in the classical setting, motives are usually quite complicated objects, often
defined by accumulating subtle categorical constructions. More or less, this
totally prevents using them for algorithmic applications.

It is striking that the situation for Drinfeld modules is much more tractable:
the Anderson motive $\M(\phi)$ of a Drinfeld module $\phi$ is a very 
explicit object---concretely, it is just $K\{\tau\}$ equipped with extra 
structures---which is very well-adapted to algorithmic manipulations. 
However, $\M(\phi)$~exhibits all the theoretical features one expects; in 
particular, it retains all the information we need on characteristic
polynomials of endomorphisms and norms of isogenies.
In the present paper, we make an intersive use of this yoga. In 
particular, we highlight that our methods are not an adaptation of 
existing methods from elliptic curves.

More precisely, an endomorphism $u$ of a Drinfeld module corresponds to 
a linear endomorphism $\M(u)$ at the level of Anderson motives. It is 
moreover a well-known fact that the characteristic polynomial of $\M(u)$ 
agrees with that of $u$ (see \cite[Proposition~3.6.7]{papikian_drinfeld_2023}
for the case of $\mathbb P^1_{\Fq}$). In the present paper, we give a new proof 
of this theorem, and extend it to general isogenies, establishing that 
the norm of an isogeny $u$ is the ideal generated by the determinant
of~$\M(u)$ (see Theorem~\ref{th:norm-general}).
We then use this result to reduce the computations we are interested in 
to the computation of the determinant or the characteristic polynomial
of an actual matrix. In the case of $\mathbb P^1_{\Fq}$, this is immediate 
since Anderson motives are free over $K[T]$, with an explicit 
canonical basis.
For a general curve, Anderson motives are not always free but only
projective, which induces technical difficulties for algorithmics. 
Although it should be doable to tackle these issues head-on, we 
choose to work around them by reducing the problem to the case of 
$\mathbb P^1_{\Fq}$ treated previously.

\paragraph{The \emph{central simple algebra} method.}

The previous discussion applies to all our algorithms, except the
algorithm \FCSA which is different in nature: it is based on a
formula interpreting the characteristic polynomial of the Frobenius
endomorphism as a reduced norm in some well-suited central simple
algebra (see Remark~\ref{rem:csa}).
This reduces the computation of the characteristic
polynomial of the Frobenius endomorphism to the computation of a
reduced characteristic polymonial which, using classical techniques,
further reduces to the computation of the characteristic polynomial
of an actual matrix over of size $d \times d$ (with $d = [K:\Fq]$ as
above) over~$\Fq[T]$.

\medskip

To conclude, we would like to mention that, on the theoretical side, 
Anderson motives are not only a powerful tool for studying Drinfeld 
modules; they are nowadays considered as a vast generalization of 
Drinfeld modules, providing more flexibility in the constructions
and having their own interest. The methods presented in this article strongly 
suggest that designing algorithms in the framework of general Anderson 
motives is completely in our reach (and maybe easier!). We then do
believe that time is ripe to go beyond Drinfeld modules and start 
working with Anderson motives at the algorithmic level.

\paragraph{Acknowledgements.}

We thank Pierre-Jean Spaenlehauer and Emmanuel Thomé for their guidance.
We thank Cécile Armana, Alain Couvreur, Quentin Gazda, Federico Pellarin
and Floric Tavarès-Ribeiro for helpful discussions. We thank Mihran Papikian
for his comments.
This work benefited from the financial support of the ANR projects
CLap-CLap (ANR-18-CE40-0026-01), Barracuda (ANR-21-CE39-0009) and
PadLEfAn (ANR-22-CE40-0013), as well as of the France 2030 program managed by the French National Research Agency under grant agreement No. ANR-22-PETQ-0008 PQ-TLS.

\section{Background}

This section serves as a gentle preliminary part in which we 
introduce the setup of this article. On the theoretical side, we 
recall basic definitions and constructions on Drinfeld modules while,
on the computational side, we specify our complexity model and discuss
several algorithmic primitives we shall constantly use throughout 
this article.

\subsection{Drinfeld modules}
\label{ssec:backgrounddrinfeld}

Throughout this paper, we fix a finite field $\Fq$ of cardinality $q$. Let $C$
be smooth, projective, geometrically connected curve over $\Fq$. Let $\infty$
be a distinguished closed point on $C$ and let $A$ denote the ring of rational
functions on $X$ that are regular outside $\infty$. If $F$ is an extension of
$\Fq$, we write $A_F = F \otimes_{\Fq} A$. Thanks to our assumptions on $C$,
the ring $A_F$ is a Dedekind domain. We recall that the \emph{degree} of an
ideal $\a$ of $A_F$, denoted by $\deg(\a)$, is defined as the $F$-dimension
of $A_F/\a$.
For $a \in A_F$, we will often write $\deg(a)$ for $\deg(a A_F)$.

We consider an extension $K$ of $\Fq$ and fix an algebraic
closure $\Kbar$ of $K$. We fix in addition a homomorphism of 
$\Fq$-algebras
\[
  \gamma: A \to K.
\]
The kernel of $\gamma$, a prime ideal of $A$, is denoted by $\p$ and referred to as the
\emph{characteristic}.
An ideal of $A$ is said \emph{away from the characteristic} if it is
coprime to $\p$. Finally, we let $\Ktau$ be
the algebra of \emph{Ore polynomials} over $K$ in $\tau$, in which
the multiplication is twisted according to the rule $\tau a = a^q
\tau$ for all $a \in K$.

\subsubsection{Drinfeld modules and isogenies}

We define Drinfeld modules and their morphisms.

\begin{deftn}[Drinfeld modules]

  A \emph{Drinfeld $A$-module} (or a \emph{Drinfeld module} for short)
  over $K$ is a ring homomorphism
  \[
    \phi: A \to \Ktau
  \]
  whose constant coefficient agrees with $\gamma$ and whose image is not
  contained in $K$.

\end{deftn}

For $a \in A$, we write $\phi_a$ for $\phi(a)$.
By definition, the \emph{rank} of $\phi$ is the unique positive integer 
$r$ such that $\deg(\phi_a) = r \deg(a)$ for all $a \in A$ (see 
\cite[Definition~1.1]{gek91}).

\begin{ex}

  The simplest Drinfeld modules are those for which $C = \mathbb P^1_{{\Fq}}$ and
  $\infty$ is the point at infinity, \emph{i.e.} $A = \Fq[T]$.
  In this case, a Drinfeld module $\phi$ of rank $r$ is defined by 
  the datum of an Ore polynomial
  \[
    \phi_T = \gamma(T) + g_1 \tau \cdots + g_r \tau^r,
  \]
  with $g_1, \ldots,
  g_r \in K$ and $g_r \neq 0$. The \emph{Carlitz module} is commonly defined as
  the rank one Drinfeld $\Fq[T]$-module defined by $T + \tau$ on $K = \Fq(T)$,
  or by $T - \tau$, which was originally studied.

\end{ex}

\begin{deftn}[Morphisms]
Let $\phi, \psi$ be two Drinfeld modules.
  A \emph{morphism} 
$u: \phi \to \psi$ is, by definition, an Ore polynomial $u$ such that
$u \phi_a = \psi_a u$
for every $a \in A$. An \emph{isogeny} is a nonzero morphism.
\end{deftn}

This definition equips the class of Drinfeld
modules with a structure of category, in which the composition is given by
the product in the ring of Ore polynomials.
We say that $\phi$ and $\psi$ are \emph{isogenous} if there exists
an isogeny between $\phi$ and $\psi$. One checks that two isogenous Drinfeld
modules have the same rank.
For any $a \in A$, $\phi_a$ defines an endomorphism of $\phi$.
If $K$ is a finite field of degree $d$ over $\Fq$, then $\tau^d$ defines an 
endomorphism called the \emph{Frobenius endomorphism} of $\phi$; 
it is denoted by $F_\phi$.

Let $u : \phi \to \psi$ be an isogeny defined by the degree $n$ Ore polynomial
\[
  u = u_0 + u_1 \tau + \cdots + u_n \tau^n.
\]
We say that $n$ is the $\tau$-degree of $u$.
By definition, the \emph{height} of $u$ is the smallest integer
$h$ for which $u_h \neq 0$. In what follows, we denote it by $h(u)$.
When $h(u) = 0$, we say that $u$ is \emph{separable}.
When the characteristic $\p$ is zero, any 
isogeny is separable. On the contrary, when $\p$ does not vanish,
$h(u)$ is a necessarily a multiple of $\deg(\p)$, and $u$ decomposes as
$u = u_s \circ \tau^{h(u)}$,
where $\tau^{h(u)}$ defines an isogeny from $\phi$ to a second 
Drinfeld module $\phi'$ and $u_s : \phi' \to \psi$ is a \emph{separable} 
isogeny.

\subsubsection{Torsion points, Tate module, and Anderson motives}
\label{ssec:motive}

Let $\phi$ and $\psi$ be two rank $r$ Drinfeld modules. We define the most
important algebraic structures attached to a Drinfeld module.

\begin{deftn}[$A$-module]
  \begin{enumerate}[label=(\roman*)]
  \item The $A$-\emph{module} of $\phi$, denoted $\E(\phi)$, is the $A$-module
    $\Kbar$ equipped with the structure given by
    \[
      a \cdot z = \phi_a(z)
    \]
    for $a \in A$ and $z \in \E(\phi)$.
  \item Given an ideal $\a$ of $A$,
    we define the $\a$-torsion $\E_{\a}(\phi)$ of $\phi$ as the
    $\a$-\emph{torsion} of the module $\E(\phi)$, that
    is the subset of $\Kbar$ consisting of elements $z$ for which
    $\phi_a(z) = 0$ for all $a \in \a$.
    For an element $a \in A$, we write $\E_a(\phi)$
    for $\E_{aA}(\phi)$.
  \end{enumerate}
\end{deftn}

Any morphism of Drinfeld modules $u : \phi \to \psi$ induces
$A$-linear morphisms
\functiondefname
  {\E(u)}
  {\E(\phi)}
  {\E(\psi)}
  {z}
  {z(u)}
and
$\E_\a(u): \E_\a(\phi) \to \E_\a(\psi)$.
For any nonzero ideal
$\a \subset A$ away from the characteristic, the module $\E_\a(\phi)$ is free of rank $r$ over $A/\a$, \emph{i.e.}
$\E_\a(\phi) \simeq (A/\a)^r$ \cite[Remark~4.5.5.1]{gos98}. This classical fact
highlights one of the first similarities with elliptic curves, of which
rank two Drinfeld modules are said to be function field analogues.

\begin{deftn}[Tate module]
  Let $\q$ be a maximal ideal of $A$, away from the characteristic.
  We define the $\q$-adic \emph{Tate module} of $\phi$ as the inverse
  limit
  \[
    \T_\q(\phi) = \varprojlim \E_{\q^{hn}}(\phi)
  \]
  where $h$ is positive integer such that $\q^h$ is principal
  (which always exists because $A$ has finite class number), the
  transition map $\E_{\q^{h(n+1)}}(\phi) \to \E_{\q^{hn}}(\phi)$
  being given by $\phi_a$ where $a$ is a generator of $\q^h$.
\end{deftn}

\begin{rem}
The Tate module $\T_\q(\phi)$ does not depend, up to isomorphism,
on the choice of $h$, nor on the choice of a generator of $\q^h$.
\end{rem}

The Tate module $\T_\q(\phi)$ is a module over the completion 
$A_\q$ of $A$ with respect to the place $\q$.
It is free of rank $r$, and morphisms $u: \phi \to \psi$ give rise 
to $A_\q$-linear maps $\T_\q(u): \T_\q(\phi) \to \T_\q(\psi)$.

\begin{deftn}[Anderson motive]
\begin{enumerate}[label=(\roman*)]
\item
  The $A$-\emph{motive} of $\phi$, denoted by $\M(\phi)$, is the $A_K$-module $\Ktau$ 
  equipped with the structure given by
  \[
    (\lambda \otimes a) \cdot f = \lambda f \phi_a
  \]
  where $\lambda \in K$, $a \in A$, $f \in \M(\phi)$ and the
  multiplication in the right hand side is computed in $\Ktau$.
\item
  Given in addition an ideal $\a$ of $A$, we define
  \[
    \M_{\a}(\phi) = A/\a \otimes_A \M(\phi) = \M(\phi)/\a \M(\phi).
  \]
  For an element $a \in A$, we write $\M_a(\phi)$
  for $\M_{aA}(\phi)$.
\end{enumerate}
\end{deftn}

\begin{rem}
\label{rem:tauaction}
In classical references (\emph{e.g.} \cite[Section 5.4]{gos98}), the 
$A$-motive $\M(\phi)$ carries more structure: it is a module over
the noncommutative ring $K\{\tau\} \otimes_{\Fq} A = A_K\{\tau\}$. 
This additional $\tau$-action is important, but never
used in this article. Therefore, for simplicity, we only retain the structure of $A_K$-module.
\end{rem}

It is well known that $\M(\phi)$ is projective of rank $r$ over
$A_K$ (see \cite[Lemma~5.4.1]{gos98}). 
When $A = \Fq[T]$, we have $A_K \simeq K[T]$ and $\M(\phi)$ is 
free with basis $(1, \tau, \dots, \tau^{r-1})$ \cite[Lemma~3.4.4]{papikian_drinfeld_2023}.
We stress that this has significant importance for our algorithmic purpose.
In general, a morphism of Drinfeld modules $u : \phi \to \psi$ induces
a morphisms of $A_K$-modules
\functiondefname
  {\M(u)}
  {\M(\psi)}
  {\M(\phi)}
  {f}
  {fu}
and
$\M_\a(u): \M_\a(\psi) \to \M_\a(\phi)$.
We refer to \cite[Ch.~5]{gos98} or \cite[Section~2]{weil-pairing} for 
more details and generalizations. The degree of the Ore polynomial defining an
element $f \in \M(\phi)$ (resp. $\M(u)$) is called the $\tau$-degree of $f$
(resp. $\M(u)$).

\begin{rem}  
  Let $\a$ and $\q$ be ideals of $A$, with $\q$ maximal.
  The constructions $\E$, $\E_\a$, $\T_\q$, $\M$ and $\M_\a$ define 
  functors from the category of Drinfeld modules:
  \begin{itemize}[itemsep=0.5ex,parsep=0ex,topsep=0.5ex]
    \item $\E$ (resp. $\E_\a$) is a covariant functor to the category of
      $A$-modules (resp. $A/\a$-modules);
    \item $\T_\q$ is a covariant functor to the category of $A_\q$-modules;
    \item $\M$ (resp. $\M_\a$) is a contravariant functor to the category of
      $A_K$-modules\footnote{More precisely, $\M$ is a functor to the category of
      Anderson motives.} (resp. $A_K/\a A_K$-modules).
  \end{itemize}
  In standard references, the $\a$-torsion is denoted by $\phi[\a]$.
  In this article, we prefer the notation $\E_\a(\phi)$ because it 
  better underlines
  the functorial properties of the construction, which will later play a
  leading role.
\end{rem}

\subsubsection{Norms and characteristic polynomials}
\label{ssec:norm}

The norm of an isogeny is defined in \cite[\S 3.9]{gek91}, in terms of \emph{Euler-Poincaré 
characteristic}.
Let us take a step back, and fix a Dedekind domain $\A$.
The 
Euler-Poincaré characteristic, denoted by $\chi_\A$, is a function defined on
the class of finitely generated $\A$-modules and assuming
values in the set of ideals of $\A$. It is uniquely determined
by the following conditions:
\begin{enumerate}[label=(\roman*)]
\item
$\chi_\A(\A/\a) = \a$ for every ideal $\a$ of $\A$;
\item
$\chi_\A(M_2) = \chi_\A(M_1) \cdot \chi_\A(M_3)$
for every exact sequence $0 \to M_1 \to M_2 \to M_3 \to 0$ of finitely
generated $\A$-modules.
\end{enumerate}
The formation of Euler-Poincaré characteristic commutes with flat scalar extension. 
In particular, given a finitely generated $\A$-module $M$ and a maximal
ideal $\q \subset \A$, we have
\[
  \chi_\A(M) \otimes_\A \Aq = \chi_{\Aq}(M \otimes_\A \Aq).
\]
Similarly, if $\A'$ is another Dedekind domain lying above $\A$, we have
\[
  \chi_\A(M) \otimes_\A \A' = \chi_{\A'}(M \otimes_\A \A').
\]
If $M$ is torsion, the Noether's theorem on the structure of finitely generated
modules over Dedekind domains~\cite[Exercise~19.6]{eisenbud} implies that $M$
decomposes as $M \simeq \A/\a_1 \times \cdots \times \A/\a_\ell$, where
$\a_1, \ldots, \a_\ell$ are ideals of $\A$. In that case, $\chi_\A(M) =
\a_1\cdots \a_\ell$.

\begin{deftn}[Norm]

  Let $u: \phi \to \psi$ be an isogeny. The norm of $u$, denoted by
  $\norm(u)$, is defined~as
  \[
    \norm(u) = \p^{\frac{h(u)}{\deg(\p)}} \cdot \chi_A(\ker \E(u)).
  \]

\end{deftn}

\begin{rem}
  We recall that $h(u)$ denotes the height of $u$.
  This definition takes into account that an isogeny and its 
  separable part have the same kernel: the correction by the factor 
  $\p^{h(u)/\deg(\p)}$ corresponds to the purely inseparable part.
\end{rem}

\begin{ex}
  Let $r$ be the rank of $\phi$.
  For $a \in A$, we have $\norm(\phi_a) = a^r A$.
  If $\p \neq 0$ then $\norm(\tau^{\ell \deg(\p)}) = \p^\ell$ for 
  all $\ell \in \NN$. In particular, when $K$ is a
  finite extension of degree $d$ of $\Fq$, the norm of the
  Frobenius endomorphism $F_\phi$ is explicitly given by
  $\norm(F_\phi) = \p^{d/\deg(\p)}$.
\end{ex}

One proves~\cite[Lemma~3.10]{gek91} that the norm is multiplicative: 
if $u$ and $v$ are composable isogenies, we have
$\norm(v\circ u) = \norm(v) \cdot \norm(u)$.
When $u$ is an endomorphism, its action on the Tate module $\T_\q(u)$ 
is a linear endomorphism, whose determinant lies in $A$ and generates 
$\norm(u)$ \cite[Lemma~3.10.iii]{gek91}:
\[
  \norm(u) = \det(\T_\q(u)) \cdot A.
\]

\begin{deftn}[Characteristic polynomial]

  Let $u: \phi \to \phi$ be an endomorphism.
  We define the \emph{characteristic polynomial} of $u$ as the
characteristic polynomial of $\T_q(u)$.

\end{deftn}

Since $\T_{q}(\phi)$
has rank $r$ over $A_\q$, the characteristic polynomial of $u$ has degree $r$.
It is also proven that it has coefficients in $A$ \cite[Corollary~3.4]{gek91}.

\begin{ex}
  In this example, we assume that $A = \Fq[T]$, that $K$ is finite of degree $d$ over
  $\Fq$, and that $\phi$ is a rank two Drinfeld module defined by
  $\phi_T = \gamma(T) + g \tau + \Delta \tau^2$.
  The characteristic polynomial of the Frobenius endomorphism of $\phi$
  takes the form~\cite[Theorem~2.11]{gek08}
  \[
    X^2 \,-\, tX \,+\, (-1)^d \text{N}_{K/\Fq}(\Delta)^{-1} \p^{d/\deg(\p)}
  \]
  where $\text{N}_{K/\Fq}$ is the norm from $K$ to $\Fq$ and, in a slight 
  abuse of notation, the notation $\p$ is used to denote the monic generator of
  the characteristic.
  The coefficient $t \in \Fq[T]$ is called the \emph{Frobenius 
  trace} of $\phi$ and we have $\deg_T(t) \leq d/2$. We refer to
  Remark~\ref{rem:frobenius-norm} for more
  information about the \emph{Frobenius norm}.
  The endeavour of computing this polynomial has been the object of many
  research articles, leading to a variety of algorithms. We refer to Appendix~\ref{appendix:review} for a review of
  their respective complexities.
\end{ex}

\subsubsection{Restriction of Drinfeld modules}
\label{sssec:restriction}

We consider $\gamma' : A' \to K$, a second
base for Drinfeld modules satisfying the assumptions
of \S \ref{ssec:backgrounddrinfeld}, and we assume that we are given in
addition an injective homomorphism of rings $f : A' \to A$ such
that $\gamma' = \gamma \circ f$.
Thanks to our assumptions on $A$ and $A'$, we find that $f$ endows
$A'$ with a structure of finite $A$-algebra.
If $\phi : A \to K\{\tau\}$ is a Drinfeld module, the composite
\[
  \phi \circ f : A' \to A \to K\{\tau\}
\]
defines a Drinfeld module over $A'$, denoted by $f^* \phi$ and referred to as
the \emph{restriction} of $\phi$ along~$f$.

Considering two Drinfeld $A$-modules as well as a morphism $u : \phi \to \psi$,
one checks that the Ore polynomial defining $u$ also defines
an isogeny $f^* \phi \to f^* \psi$, which we denote by $f^* u$. 
The construction $f^*$ defines a functor from the category
of Drinfeld modules over $A$ to the category of Drinfeld modules
over $A'$.
The action of $f^*$ on the motives is easy to describe: the motive
$\M(f^* \phi)$ is simply $\M(\phi)$ with the restricted action of
$A$ and, for any morphism $u: \phi \to \psi$, the maps $\M(f^*u)$ 
and $\M(u)$ are the same (up to the above identification).

\subsection{Algorithmics}

We now move to algorithmics and discuss the complexity of
performing basic operations on matrices on the one hand, and on
Ore polynomials on the other hand.

\subsubsection{Complexity model}
\label{sssec:complexitymodel}

We recall the Landau's notation $O$, $\Otilde$ and $\Opower$
from the introduction: if $f$ and $g$
are two positive quantities depending on parameters, we write
\begin{itemize}[itemsep=0.5ex,parsep=0ex,topsep=0.5ex]
\item $g \in O(f)$ if there exists an absolute positive constant $C$
such that $g \leq C {\cdot} f$ for all choices of parameters,
\item $g \in \Otilde(f)$ if there exist absolute positive constants $C$
and $k$ such that $g \leq C {\cdot} f \log^k f$ for all choices of 
parameters,
\item $g \in \Opower(f)$ if, for all $\varepsilon > 0$, there exists
a positive constant $C_\varepsilon$ such that $g \leq C_\varepsilon
{\cdot} f^{1 + \varepsilon}$ for all choices of parameters.
\end{itemize}
We notice that $O(f) \subset \Otilde(f) \subset \Opower(f)$ for
all $f$ as above. Moreover, if $f_1$ and $f_2$ are two quantities
as above, one checks that
$O(f_1) + O(f_2) \subset O(f_1 + f_2)$,
$\Otilde(f_1) + \Otilde(f_2) \subset \Otilde(f_1 + f_2)$ and,
similarly, $\Opower(f_1) + \Opower(f_2) \subset \Opower(f_1 + f_2)$.

In this article, we measure complexity in two different ways.
When $K$ is an arbitrary field, we use \emph{arithmetic
complexity}, meaning that we count separately arithmetic
operations (addition, subtraction, multiplication and division)
in~$K$ on the one hand, and applications of Frobenius (that is the
computation of $x^q$ for a given $x \in K$) on the
other hand.

On the contrary, when $K$ is a finite field, we rather use 
\emph{bit complexity}, meaning that we count operations on 
bits.
When $K$ is a finite extension of $\Fq$ of degree~$d$ presented as 
a quotient $K = \Fq[X]/Q(X)$ (for some irreducible polynomial $Q(X)
\in \Fq[X]$ of degree~$d$) and when $\Fq$ is itself presented as a 
quotient of $\Fp[X]$, classical algorithms based on Fast Fourier 
Transform allows for performing all arithmetic operations in~$K$ 
for a cost of $\Otilde(d \log q)$ bit operations (see for instance
\cite[Chapter~II]{gathen}).

Estimating the cost of applying the Frobenius endomorphism of~$K$ 
is more challenging, even though partial results are
available in the literature. First of all, Kedlaya and Umans'
algorithm~\cite{kedlaya-umans} for fast modular composition is 
theoretically capable to compute an image by Frobenius for a cost of 
$\Opower(d \log q)$ bit operations. However, if $\alpha$ denotes the 
image of $X$ in $K$, one needs nevertheless to precompute $\alpha^q$,
\emph{i.e.} to write $\alpha^q$ on the
canonical monomial basis $(1, \alpha, \ldots, \alpha^{d-1})$. Using a fast
exponentiation algorithm, this can be done for an initial cost of 
$\Otilde(d \log^2 q)$ bit operations. Another
flaw with this approach is that, as far as we know, one still 
lacks an efficient implementation of Kedlaya and Umans' algorithm.

Another option, which achieves quasi-optimal complexity, is to use 
the elliptic normal bases of Couveignes and Lercier~\cite{couveignes-lercier}
instead of the classical monomial basis.
Indeed, in those bases, all arithmetic operations and applications
of Frobenius can be computed for a cost of $\Otilde(d)$ operations
in $\Fq$, corresponding to $\Otilde(d \log q)$ bit operations.
The drawback of this solution is that constructing an elliptic
normal basis can be costly. Nevertheless this needs to be done only once, at the
instantiation of $K$.

Taking all of this into account, we choose to follow the convention of~\cite{musleh-schost-2}
and opt for the first option: we make the 
assumption that all arithmetic operations and applications of 
Frobenius in~$K$ cost $\Opower (d \log q)$ bit operations, plus
a unique initial cost of $\Otilde(d \log^2 q)$ operations
for the precomputation of~$\alpha^q$.

\subsubsection{Polynomial matrices}
\label{sec:matrices-computations}

We give a rough review of the literature on the computation of determinants and
characteristic polynomials of polynomial matrices. We recall from the
introduction that the notation $\omega \in [2,3]$ refers to feasible exponent
for matrix multiplication. When matrices have coefficients in a field $L$, both
computing determinants and characteristic polynomials reduce to matrix
multiplication~\cite{matrices:field-charpoly-mult, matrices:field-charpoly}.
Computing the determinant of a polynomial matrix also reduces to matrix multiplication~\cite{matrices:poly-computations-03,
matrices:poly-computations-05}. However, the situation of the characteristic
polynomial is more delicate. Consider a $s$-by-$s$ matrix with entries in
$L[T]$. Computing its characteristic polynomial can be done for a cost of
$\Otilde(s^\Omega n)$ operations in $L$ with $\Omega < 2.69497$
\cite{matrices:kaltofen, matrices:kaltofen-villard}.

When $M$ is a $s$-by-$s$ matrix, we use the notation $\pi(M)$ to be to
its monic characteristic polynomial, that is $\pi(M) = \det(X{\cdot}I_s - M)$
where $I_s$ is the identity matrix of size $s$. In the next
two lemmas, we derive two useful algorithms, for two specific situations.

\begin{lem}\label{lem:matrice-comp-1}
  We assume that $L$ is a finite field of degree $d$ over $\Fq$.
  Let $M$ be a $s$-by-$s$
  matrix with coefficients in $L[T]$.
  Let~$n$ be a uniform upper bound on the degree of the coefficients of
  $\pi(M)$.
  There exists a Las Vegas algorithm that computes
  the $\pi(M)$ for a cost of 
  $\Opower(n/d) + \Otilde((n {+} d)s^\omega)$ operations in~$\Fq$.
\end{lem}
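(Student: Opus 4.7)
The plan is to use evaluation--interpolation in the variable $T$: since the characteristic polynomial commutes with specialization (i.e., $\pi(M)(\alpha) = \pi(M(\alpha))$ for any $\alpha$ in an extension of $L$) and $\pi(M) \in L[T][X]$ has $T$-degree at most $n$, it is uniquely determined by its values at $n+1$ distinct points.

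I would first ensure that enough evaluation points are available. If $q^d \geq n+1$, they are chosen inside $L$ itself; otherwise, a Las Vegas procedure (drawing random monic polynomials and testing irreducibility) constructs a small extension $L' = \Fq^{de}$ of $L$ with $q^{de} \geq n+1$, in which the points then live. Keeping $e$ minimal ensures $[L':\Fq] \leq \max(d,\,\log_q(n{+}1))$, so the preprocessing (construction of $L'$ plus the cost of switching arithmetic from $L$ to $L'$) is contained in the $\Opower(n/d)$ term.

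Using fast multipoint evaluation, one then computes simultaneously the specializations $M(\alpha_0), \ldots, M(\alpha_n) \in L'^{s \times s}$ entry by entry, at a cost of $\Otilde(ns^2)$ operations in $L'$. For each $i$, the scalar characteristic polynomial $\pi(M(\alpha_i)) \in L'[X]$ is obtained via the standard $O(s^\omega)$ field algorithm, for a total of $\Otilde(n s^\omega)$ operations in $L'$. Finally, fast Lagrange interpolation reconstructs the $s+1$ coefficients of $\pi(M)$, each a polynomial of degree $\leq n$ in $T$, for a cost of $\Otilde(ns)$ operations in $L'$.

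Summing the contributions and translating operations in $L'$ into operations in $\Fq$ (each costing essentially $[L':\Fq]$) yields the dominant term $\Otilde(n s^\omega)$; the additive $\Otilde(d s^\omega)$ summand corresponds to handling the output as $L$-valued coefficients (and to the sub-regime where one simply computes a single $s \times s$ characteristic polynomial over $L$). The main obstacle is the delicate cost bookkeeping: one has to carefully reconcile the two regimes $d \geq \log_q(n{+}1)$ (where we stay inside $L$) and $d < \log_q(n{+}1)$ (where the extension is needed), and verify that the Las Vegas preprocessing together with all arithmetic overheads combine cleanly into the claimed $\Opower(n/d) + \Otilde((n+d)s^\omega)$ bound.
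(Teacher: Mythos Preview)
Your evaluation--interpolation approach is natural, but it does \emph{not} achieve the stated complexity bound, and the error is in the final accounting. In the main regime $q^d \geq n+1$ you take $L'=L$, so $[L':\Fq]=d$. Step~3 then computes $n{+}1$ characteristic polynomials of $s\times s$ matrices over $L$, costing $\Otilde(n s^\omega)$ operations \emph{in $L$}, hence $\Otilde(n\,d\,s^\omega)$ operations in $\Fq$. That is a factor of $d$ worse than the target $\Otilde((n{+}d)s^\omega)$; the sentence ``translating operations in $L'$ into operations in $\Fq$ \ldots\ yields the dominant term $\Otilde(n s^\omega)$'' silently drops this factor. No rearrangement of the multipoint scheme avoids this: with $n{+}1$ specializations you are forced into $\Theta(n)$ scalar characteristic polynomial computations over a field containing $L$.

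The paper's proof uses a genuinely different idea that exploits a hypothesis you did not use (and which is implicit in the lemma as applied later): the output $\pi(M)$ actually lies in $\Fq[T][X]$, not merely in $L[T][X]$. One then builds an extension $L'/L$ of degree $\lceil n/d\rceil$, picks a generator $\alpha$ of $L'$ over $\Fq$, and performs a \emph{single} evaluation $T=\alpha$. Since $1,\alpha,\ldots,\alpha^n$ are $\Fq$-linearly independent, the $\Fq$-coordinates of $\pi(M(\alpha))\in L'[X]$ are exactly the $T$-coefficients of $\pi(M)$. Thus only one $s\times s$ characteristic polynomial is computed, over a field of degree $\approx n{+}d$ over $\Fq$, giving $\Otilde((n{+}d)s^\omega)$. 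The $\Opower(n/d)$ term is the cost of constructing $L'$. Your approach would work for arbitrary $\pi(M)\in L[T][X]$, but at the price of the extra factor $d$; the paper's single-point trick trades generality for the sharper bound.
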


\begin{proof}
Let $L'$ be an extension of $L$ of degree $\lceil n / d \rceil$; such
an extension, altogether with a generator $\alpha$ of $L'$ over $\Fq$, can be found out using Couveignes and Lercier's Las Vegas 
algorithm, whose complexity is in $\Opower(\frac n d)$ operations
in~$\Fq$~\cite{couveignes-lercier-2}.
The degree of the extension $L'/\Fq$ is then in the range $[n, n{+}d]$. 
Let $M(\alpha)$ denote the evaluation of $M$ at $T = \alpha$, and
write its characteristic polynomial as follows:
\[
  \pi(M(\alpha)) = \sum_{i=0}^s \sum_{j=0}^n a_{i, j} \alpha^i X^i.
\]
where the coefficients $a_{i,j}$ are in $\Fq$. Then
\[
  \pi(M) = \sum_{i=0}^s \sum_{j=0}^n a_{i, j} T^i X^i.
\]
The generator $\alpha$ being known, computing $\pi(M(\alpha))$ costs $\Otilde(s^\omega)$
operations in $L'$, which corresponds to $\Otilde((n{+}d) s^\omega) $
operations in $\Fq$.
\end{proof}

\begin{lem}\label{lem:matrice-comp-2}
  Let $M$ be a
  $s$-by-$s$ matrix with coefficients in $\Fq[T]$ and let $n$ be
  a uniform upper bound on the degrees of the entries of $M$.
  We assume that the coefficients of $\pi(M)$ fall in $\Fq[T^s]$.
  There exists a Las Vegas algorithm that computes $\pi(M)$ with
  probability of success at least $\frac 1 2$ for a cost of
  $\Otilde(n s^\omega)$ operations in $\Fq$.
\end{lem}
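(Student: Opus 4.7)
The plan is to exploit the $\Fq[T^s]$ structure of the coefficients of $\pi(M)$ via the substitution $U = T^s$: writing $\pi(M)(X,T) = \sum_{i=0}^{s} c_i(T) X^i$, the hypothesis provides $d_0, \ldots, d_s \in \Fq[U]$ with $c_i(T) = d_i(T^s)$; since $\deg_T c_i \leq (s-i)n \leq sn$, each $d_i$ has $\deg_U d_i \leq n$. Determining $\pi(M)$ thus reduces to recovering $s+1$ univariate polynomials of degree at most $n$, which requires only $n+1$ scalar evaluations, instead of the $sn+1$ one would need without the $\Fq[T^s]$ assumption.

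I would first pick the smallest $k$ such that $q^k \geq 2s(n+1)^2$, so that $k = O(\log(sn))$, and construct an extension $L' = \Fq^k$ by finding an irreducible polynomial of degree $k$ over $\Fq$ (of negligible cost). I would then sample $n+1$ elements $\alpha_0, \ldots, \alpha_n$ uniformly at random in $L'$ and set $\beta_j = \alpha_j^s$. The map $x \mapsto x^s$ on $(L')^\times$ has fibers of size $\gcd(s, |L'|-1) \leq s$, so a standard birthday argument shows that with our choice of $|L'|$ the $\beta_j$ are pairwise distinct with probability at least $1/2$. The check of this distinctness produces either a valid set of interpolation nodes or an abort, which constitutes the Las Vegas aspect.

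When the $\beta_j$ are distinct, I use fast multipoint evaluation on each of the $s^2$ entries of $M$ to compute $M(\alpha_0), \ldots, M(\alpha_n) \in (L')^{s \times s}$ for a cost of $\Otilde(ns^2)$ operations in $L'$. For each $j$, computing the characteristic polynomial $\pi(M(\alpha_j)) \in L'[X]$ of this scalar matrix over a field costs $\Otilde(s^\omega)$ operations in $L'$, so $\Otilde(ns^\omega)$ in aggregate. Finally, for each $i \in \{0, \ldots, s\}$, fast univariate interpolation recovers $d_i \in L'[U]$ from the $n+1$ values read off the $X^i$-coefficients of the $\pi(M(\alpha_j))$, at a total cost of $\Otilde(ns)$. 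By uniqueness of the interpolant, the reconstructed $d_i$ automatically lie in $\Fq[U]$, and one outputs $\pi(M)(X,T) = \sum_i d_i(T^s) X^i$.

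Since each operation in $L'$ costs $\Otilde(k) = \Otilde(\log(sn))$ operations in $\Fq$, the overall cost is $\Otilde(ns^\omega)$ operations in $\Fq$, the polylogarithmic factor being absorbed by the $\Otilde$ notation. The main obstacle is to calibrate $L'$ so that the birthday bound gives success probability at least $1/2$ while keeping $[L':\Fq]$ polylogarithmic in $sn$; this is handled by the choice $q^k \geq 2s(n+1)^2$.
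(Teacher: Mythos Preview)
Your proposal is correct and follows essentially the same approach as the paper: evaluate $M$ at points $\alpha_j$ whose $s$-th powers are pairwise distinct, compute the scalar characteristic polynomials $\pi(M(\alpha_j))$, and interpolate the coefficients viewed as polynomials in $T^s$, passing to a polylogarithmic-degree extension of $\Fq$ when $q$ is too small. The only cosmetic differences are that you make the substitution $U=T^s$ and the birthday bound more explicit, and you sample the $\alpha_j$ randomly rather than deterministically, but the underlying algorithm and its analysis are the same.
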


\begin{proof}
Let $\alpha_1, \dots, \alpha_n \in \Fq$ be such that $\alpha_i^s \neq 
\alpha_j^s$ whenever $i \neq j$. 
We compute the matrices $M(\alpha_1), \ldots, M(\alpha_n)$ 
and compute their characteristic polynomials 
$\pi(M(\alpha_1)), \ldots, \pi(M(\alpha_n))$, for a total cost of
$\Otilde(ns^\omega)$ operations in~$\Fq$. Thanks to our 
assumption, $\pi(M)$ can be seen as having $s$ polynomial coefficients 
of degree at most $n$. Using fast interpolation algorithms~\cite[\S II.10]{gathen},
$\pi(M)$ can therefore 
be recovered from the $\pi(M(\alpha_i))$'s for a cost of $\Otilde(ns)$ 
operations in $\Fq$. We end up with a total of $\Otilde(n s^\omega)$ 
operations in $\Fq$.

This procedure only works if $\Fq$ is large enough to 
pick a valid set $\{\alpha_1, \dots, \alpha_n\}$.
Let $\rho = \frac{\gcd(q-1, s)}{q-1}$ be the proportion of 
elements in $\FF_q^\times$ that are $d$-th roots of unity. A family 
$(\alpha_1, \dots, \alpha_n) \in (\FF_q^\times)^n$ has probability 
$p_n = (1 - \rho)(1 - 2\rho)\cdots (1 - n \rho)$ to form a valid set. As 
$p_n \geq 1 - \frac{n(n+1)}{2} \rho$, the process has a chance of 
success greater than $\frac{1}{2}$ as soon as $q > 1 + s n (n+1)$. 
If $\Fq$ is not large enough, we do all computations in a finite 
extension of $\Fq$. With these estimations, we conclude that it is enough to work in an 
extension whose degree has order of magnitude $\log_q(s n^2)$. Building 
this extension, as well as computing in it, does not affect the 
announced complexity.
\end{proof}

\subsubsection{Ore polynomials}
\label{subsec:ore-computations}

In full generality,
multiplications and Euclidean divisions of Ore polynomials in 
$K\{\tau\}$ of degree at most $n$ can be achieved with the naive 
algorithm for a cost of $O(n^2)$ operations in $K$ and $O(n^2)$ extra 
applications of the Frobenius endomorphism.

However, when $K$ is a finite field, we can take advantage of fast Ore polynomial 
multiplication~\cite{caruso-leborgne-1, caruso-leborgne-2}. As before,
we use the letter~$d$ to denote the degree of the extension $K/\Fq$.
Let $\SM(n,d)$ denote a function having the following property: the 
number of bit operations needed for multiplying two 
Ore polynomials in $\Ktau$ of degree less than $n$ is in
$\Opower(\SM(n,d)\log q)$. 
At the time of writing this article, the best known value of $\SM$ is
given in \cite{caruso-leborgne-2}\footnote{In~\cite{caruso-leborgne-2}, 
the complexity is given in number of
operations in the ground field~$\Fq$, with the assumption that 
applying the Frobenius endomorphism of~$K$ requires at most 
$\Otilde(d)$ operations in~$\Fq$. Consequently one operation in
$\Fq$ in the setting of~\cite{caruso-leborgne-2} corresponds to
$\Opower(\log q)$ bit operations in the complexity model of this
article (see \S \ref{sssec:complexitymodel}).}\textsuperscript{,}%
\footnote{Note that there is a typo in \cite{caruso-leborgne-2}: the 
critical exponent is not $\frac{5-\omega}2$ but $\frac 2{5-\omega}$.}:

\begin{center}
  \begin{tabular}{r@{\hspace{0.5ex}}ll}
    $\SM(n, d)$ & $= n^{\frac{\omega+1}{2}} d$
         & for $n \leq d^{\frac 2 {5 - \omega}}$, \\
         & $= n^{\omega-2} d^2$
         & for $d^{\frac 2 {5 - \omega}} \leq n \leq d$, \\
         & $= n d^{\omega - 1}$
         & for $d \leq n$.
  \end{tabular}
\end{center}

\noindent
Let also $\SMgeq$ be the function defined by
\[
  \SMgeq(n, d) = \sup_{0 < m \leqslant n} \SM(m, d) \frac{n}{m}.
\]
The function $\SMgeq$ is the smallest log-concave function above $\SM$.
It is proved in \cite{caruso-leborgne-2} that computing the 
right-Euclidean division of Ore polynomials in $\Ktau$ of degree less 
than $n$ requires at most $\Opower(\SMgeq(n, d)\log q)$ bit operations.
With the above values for $\SM(n,d)$, we have

\begin{center}
  \begin{tabular}{r@{\hspace{0.5ex}}ll}
    $\SMgeq(n, d)$ & $= n^{\frac{\omega+1}{2}} d$
               & for $n \leq d^{\frac 2{5 - \omega}}$, \\
               & $= n d^{\frac{4}{5 - \omega}}$
               & for $d^{\frac 2 {5 - \omega}} \leq n$.
  \end{tabular}
\end{center}

\section{Characteristic polynomials of endomorphisms}
\label{sec:endomorphisms}

In this section, we recall that characteristic polynomials
of \emph{endomorphisms} of Drinfeld modules can be read off at the
level of Anderson motives. We then take advantage of this motivic
interpretation to design fast algorithms (including the algorithms
\FMFF and \FMKU mentioned in the introduction) for computing Drinfeld module
endomorphism characteristic polynomials.

\subsection{Duality between torsion points and $A$-motives}

It is a standard result in the theory of Drinfeld modules that 
$A$-motives are duals to the so-called $A$-modules which, in some sense, 
correspond to torsion points (see for instance \cite[Sections~5.4, 
5.6]{gos98} or \cite[\S 3.6]{papikian_drinfeld_2023}). We hereby propose 
a concrete incarnation of this yoga, establishing a duality between the 
functors $\E_\a$ and $\M_\a$. The material presented in this subsection 
is somehow classical. However, we believe that our presentation is more elementary than 
those from aforementioned references: for 
instance, we do not need the introduction of (abelian) $A$-modules. As such, we
include all proofs, hoping they will be of interest for some readers.

Let $\a$ be an ideal of $A$ away from the characteristic. We consider the
evaluation map
\functiondef
  {\B : \quad \E(\phi) \times \M(\phi)}
  {\Kbar}
  {(z, f)}
  {f(z).}
It is easily checked that $\B$ is $\Fq$-linear with respect to
the variable $z$ and $K$-linear with respect to the variable
$f$. Moreover, it follows from the definitions that $\B$ vanishes
on the subset $\E_\a(\phi) \times \a \M(\phi)$ and therefore
induces a bilinear mapping
\functionname
  {\B_\a}
  {\E_\a(\phi) \times \M_\a(\phi)}
  {\Kbar.}
We consider the scalar extensions $\E_\a(\phi)_{\Kbar} = \Kbar \otimes_{\Fq}
\E_\a(\phi)$ and $\M_\a(\phi)_{\Kbar} = \Kbar \otimes_K \M_\a(\phi)$. The map
$\B_\a$ induces a $\Kbar$-bilinear form
\functionname
  {\B_{\a,\Kbar}}
  {\E_\a(\phi)_{\Kbar} \times \M_\a(\phi)_{\Kbar}}
  {\Kbar.}

\begin{prop}\label{prop:pairing}
  The bilinear form $\B_{\a,\Kbar}$ is a perfect pairing.
\end{prop}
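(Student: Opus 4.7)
The plan is to establish that $\B_{\a,\Kbar}$ is a perfect pairing in three steps: match dimensions on both sides, reduce to the case of a principal ideal, and then settle the principal case via an explicit Moore-matrix computation.

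For the first step, both sides should have $\Kbar$-dimension $r \deg(\a)$. The freeness of $\E_\a(\phi)$ of rank $r$ over $A/\a$, combined with $\dim_{\Fq}(A/\a) = \deg(\a)$, gives $\dim_{\Fq} \E_\a(\phi) = r \deg(\a)$. On the other side, since $\M(\phi)$ is projective of rank $r$ over $A_K$, the quotient $\M_\a(\phi)$ is projective of rank $r$ over $A_K/\a A_K \cong K \otimes_{\Fq} A/\a$, a $K$-algebra of dimension $\deg(\a)$, yielding $\dim_K \M_\a(\phi) = r \deg(\a)$. Base change to $\Kbar$ preserves these dimensions, so both $\E_\a(\phi)_{\Kbar}$ and $\M_\a(\phi)_{\Kbar}$ have dimension $r\deg(\a)$.

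For the reduction, I pick $a \in \a \setminus \p$, which exists because $\a$ is coprime to the characteristic. The containment $(a) \subset \a$ induces an inclusion $\E_\a(\phi) \hookrightarrow \E_{(a)}(\phi)$ and a surjection $\M_{(a)}(\phi) \twoheadrightarrow \M_\a(\phi)$ with kernel $\a\M(\phi)/a\M(\phi)$, both compatible with the pairings. Writing any $f \in \a\M(\phi)$ as $\sum g_i \phi_{b_i}$ with $b_i \in \a$ and using $\phi_{b_i}(z) = 0$ for $z \in \E_\a(\phi)$ gives $f(z) = \sum g_i(\phi_{b_i}(z)) = 0$; hence $\E_\a(\phi) \subset \bigl(\a\M(\phi)/a\M(\phi)\bigr)^{\perp}$ under $\B_{(a)}$. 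The dimension count from Step~1 forces equality, and the general fact that a perfect pairing $V \times W \to L$ induces a perfect pairing on $V' \times W/(V')^{\perp}$ for any subspace $V' \subset V$ then deduces the perfection of $\B_{\a,\Kbar}$ from that of $\B_{(a),\Kbar}$.

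For the principal case, right-Euclidean division by $\phi_a$ (of $\tau$-degree $r\deg(a)$) gives a $K$-basis $(1, \tau, \ldots, \tau^{r\deg(a)-1})$ of $\M_{(a)}(\phi) = \Ktau/\Ktau \cdot \phi_a$. Choosing any $\Fq$-basis $(z_1, \ldots, z_{r\deg(a)})$ of $\E_{(a)}(\phi) = \ker \phi_a$, the matrix of $\B_{(a),\Kbar}$ in these bases has entries $\tau^i(z_j) = z_j^{q^i}$, which is a Moore matrix. Its determinant is nonzero exactly because the $z_j$ are $\Fq$-linearly independent, proving the perfection of $\B_{(a),\Kbar}$. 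The main obstacle is the identification in Step~2 that the annihilator of $\a\M(\phi)/a\M(\phi)$ is \emph{exactly} $\E_\a(\phi)_{\Kbar}$ rather than merely containing it; this is precisely what the dimension matching from Step~1 makes possible.
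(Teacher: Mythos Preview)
Your proof is correct and takes a genuinely different route from the paper's. The paper also opens with the dimension match, but then proves nondegeneracy on the left directly for arbitrary~$\a$: assuming a nonzero $x = \sum_{i=1}^n \lambda_i \otimes z_i \in \E_\a(\phi)_{\Kbar}$ (with $n$ minimal) pairs to zero against every $1 \otimes \tau^j$, it combines the relations $\sum_i \lambda_i z_i^{q^j} = 0$ with their Frobenius twists to manufacture a strictly shorter nonzero solution, contradicting minimality. This is essentially the classical proof that a Moore matrix on $\Fq$-linearly independent elements is nonsingular, run inline and without ever reducing to a principal ideal. Your approach, by contrast, isolates the Moore determinant as a known black box and pays for it with the reduction step (choosing $a \in \a \setminus \p$, identifying $\E_\a(\phi)_{\Kbar}$ with the annihilator of $\a\M(\phi)/a\M(\phi)$ via the dimension count, and invoking the perp-quotient lemma for perfect pairings). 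What you gain is modularity and transparency for a reader who already knows the Moore determinant; what the paper gains is self-containment and a single uniform argument that avoids the reduction entirely.
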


\begin{proof}
  Recall that, since $\a$ is away from the characteristic, 
  $\E_\a(\phi)$ is free with rank $r$ over $A/\a$. Therefore,
  $\dim_{\Fq} \E_\a(\phi) = r \cdot \deg(\a) = \dim_K \M_\a(\phi)$,
  and $\E_\a(\phi)_{\Kbar}$ and $\M_\a(\phi)_{\Kbar}$ have the same
  dimension over $\Kbar$.

  It is then enough to prove that $\B_{\a,\Kbar}$ is nondegenerate on the left,
  meaning that if $x \in \E_\a(\phi)_{\Kbar}$ satisfies $\B_{\a,\Kbar} (x,y) =
  0$ for all $y \in \M_\a(\phi)_{\Kbar}$, then $x$ must vanish. More
  generally, we are going to prove that there is no nonzero $x \in
  \E_\a(\phi)_{\Kbar}$ having the following property: 
  $\B_{\a,\Kbar}(x, 1 \otimes \tau^j) = 0$ for all $j$ large enough. 
  We argue by contradiction and
  consider an element $x \in \E_\a(\phi)_{\Kbar}$ satisfying the above
  property. We write
  \[
    x = \lambda_1 \otimes z_1 + \cdots + \lambda_n \otimes z_n.
  \]
  with $\lambda_i \in \Kbar$ and $z_i \in \E_\a(\phi)$. Moreover, we assume
  that $x$ is chosen in such a way that the number of terms $n$ is minimal.
  This ensures in particular that the $z_i$'s are linearly independent over
  $\Fq$. Writing that $\B_{\a,\Kbar}(x, 1 \otimes \tau^j)$ vanishes, we obtain
  the relation
  \[
    (E_j) : \quad
    \lambda_1 z_1^{q^j} + \cdots + \lambda_n z_n^{q^j} = 0,
  \]
  which, in turn, implies
  \[
    (E'_j) : \quad
    \lambda_1^q z_1^{q^{j+1}} + \cdots + \lambda_n^q z_n^{q^{j+1}} = 0.
  \]
  Combining the relations $(E_{j+1})$ and $(E'_j)$, we find
  \[
    (\lambda_1^q - \lambda_n^{q-1} \lambda_1) \cdot z_1^{q^{j+1}} + \cdots + 
    (\lambda_{n-1}^q - \lambda_n^{q-1} \lambda_{n-1}) \cdot z_{n-1}^{q^{j+1}} = 0.
  \]
  In other words, the vector
  \[
    y = (\lambda_1^q - \lambda_n^{q-1} \lambda_1) \otimes z_1^{q^{j+1}} + \cdots +
    (\lambda_{n-1}^q - \lambda_n^{q-1} \lambda_{n-1}) \otimes z_{n-1}^{q^{j+1}}
    \in \E_\a(\phi)_{\Kbar}
  \]
  is a new solution to our problem.

  This will contradict the minimality condition in the choice of $x$ if we can
  prove that $y$ does not vanish. To do this, we again argue by contradiction.
  Given that the $z_i$'s are linearly independent over $\Fq$, the vanishing of
  $y$ would imply $\lambda_i^q - \lambda_n^{q-1} \lambda_i = 0$ for all
  $i$, from which we would deduce that all the quotients
  $\frac{\lambda_i}{\lambda_n}$ lie in $\Fq$. Thanks to the relations $(E_j)$,
  this again contradicts the linear independence of the $z_i$'s over $\Fq$.
\end{proof}

\begin{rem}
  Proposition~\ref{prop:pairing} can be seen as a Drinfeld analogue of the
  classical pairing between the singular homology and the de Rham cohomology of
  a complex abelian variety: the space $\E_\a(\phi)$ plays the role of the
  singular homology (\emph{via} the étale viewpoint), while the space
  $\M_\a(\phi)$ can be thought of as the incarnation of the de Rham cohomology
  (see~\cite{angles}).
\end{rem}

Proposition~\ref{prop:pairing} gives a natural identification
\[
  \alpha_\phi :
  \quad
  \E_\a(\phi)_{\Kbar} 
  \simeq \Hom_{\Kbar}\big(\M_\a(\phi)_{\Kbar}, \Kbar\big)
  \simeq \Hom_K\big(\M_\a(\phi), \Kbar\big),
\]
where $\Hom_{\Kbar}$ (resp. $\Hom_K$) refers to the space of $\Kbar$-linear
(resp. $K$-linear) morphisms. \emph{A priori}, the isomorphism $\alpha_\phi$ is
only $\Kbar$-linear; we upgrade it and make it $A_{\Kbar}$-linear.

\begin{deftn}
  Let $M$ be a module over $A_K$. We set $M^\ast = \Hom_K(M, K)$ and equip it
  with the structure of $A_K$-module given by
  \[
    a \cdot \xi = \big(m \mapsto \xi(am)\big),
  \]
  where $a \in A_K$ and $\xi \in M^\ast$.
\end{deftn}

One checks that the construction $M \mapsto M^\ast$ is functorial, in the sense
that if $g : M_1 \to M_2$ is a morphism of $A_K$-modules, then the dual map
$g^\ast : M_2^\ast \to M_1^\ast$ is $A_K$-linear as well. We define
$\M_\a(\phi)^\ast_{\Kbar} = \Kbar \otimes_K \M_\a(\phi)^\ast$; it is a module
over $A_{\Kbar}$.
A direct adaptation of \cite[Lemma~3.6.2]{papikian_drinfeld_2023} 
using Noether's structure theorem for finitely generated modules over a 
Dedekind domain~\cite[Theorem~A3.2]{eisenbud} gives the following lemma.

\begin{lem}
\label{lem:isomorphisme-dual}
Any torsion finitely generated $A_K$-module $M$ is
(noncanonically) isomorphic to its dual~$M^\ast$.
\end{lem}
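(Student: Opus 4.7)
The plan is to reduce the statement, through two successive decompositions, to the case of a cyclic module of the form $R = A_K/\q^e$ for a maximal ideal $\q$ of $A_K$ and an integer $e \geq 1$, and then to exhibit an explicit $A_K$-linear isomorphism $R \to R^\ast$.

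First, since $A_K$ is a Dedekind domain, Noether's structure theorem yields a decomposition $M \simeq \bigoplus_{i=1}^\ell A_K/\a_i$ for some nonzero ideals $\a_i$. The functor $(-)^\ast = \Hom_K(-,K)$ turns finite direct sums into direct sums of $A_K$-modules, so it suffices to handle the cyclic case $M = A_K/\a$. Factoring $\a = \q_1^{e_1} \cdots \q_s^{e_s}$ into powers of distinct maximal ideals and applying the Chinese Remainder Theorem gives $A_K/\a \simeq \prod_i A_K/\q_i^{e_i}$ as $A_K$-modules, and so it reduces us further to $R = A_K/\q^e$.

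The key structural input for the final step is that $R$ is a local Artinian ring whose socle is $\q^{e-1}/\q^e$, a one-dimensional space over the residue field $k = A_K/\q$, and that this socle is contained in every nonzero ideal of $R$. This comes from the Dedekind property of $A_K$: one has $R \simeq (A_K)_\q/\q^e(A_K)_\q$, a uniserial Artinian quotient of a discrete valuation ring. I would then pick any $K$-linear form $\xi : R \to K$ whose restriction to the socle $\q^{e-1}/\q^e$ is nonzero---such a form exists because the socle is a nonzero $K$-subspace of $R$ and any nonzero $K$-linear form on it extends to $R$. Setting
\[
  \Phi : R \longrightarrow R^\ast,
  \quad
  a \longmapsto \big(b \mapsto \xi(ab)\big),
\]
a direct check using commutativity of $A_K$ shows that $\Phi$ is $A_K$-linear. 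For injectivity, if $a \in R$ is nonzero then the principal ideal $aR$ is a nonzero ideal, hence contains the socle, and therefore $\Phi(a)$ cannot vanish identically on $R$ by the choice of~$\xi$. A comparison of $K$-dimensions (both equal to $e \cdot [k:K]$) upgrades injectivity to an isomorphism.

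The bulk of the argument is thus formal bookkeeping. The only substantive ingredient is the third step, which rests on the observation that $A_K/\q^e$ is a Gorenstein ring---equivalently, has a one-dimensional socle over its residue field. This is a standard feature of quotients of discrete valuation rings and is where the Dedekind hypothesis on $A_K$ is really used; the rest of the proof amounts to combining this with Noether's theorem and the Chinese Remainder Theorem.
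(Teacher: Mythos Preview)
Your proof is correct and follows essentially the route the paper points to: the paper does not spell out an argument but simply invokes Noether's structure theorem together with a reference to \cite[Lemma~3.6.2]{papikian_drinfeld_2023}, which amounts to the same reduction to cyclic modules that you perform. Your explicit treatment of the primary cyclic case via the socle of $A_K/\q^e$ is the standard Gorenstein argument and fills in exactly the detail the paper leaves to the reader.
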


\begin{theo}\label{th:pairing}
  The perfect pairing $\B_{\a,\Kbar}$ induces an $A_{\Kbar}$-linear
  isomorphism:
  \[
    \alpha_\phi :
    \quad \E_\a(\phi)_{\Kbar} 
    \;
    \stackrel{\sim}{\longrightarrow}
    \;
    \M_\a(\phi)^\ast_{\Kbar}.
  \]
  Moreover, given a Drinfeld module morphism $u : \phi \to \psi$,
  the following diagram is commutative:
  \[\begin{tikzcd}[column sep=huge,row sep=large]
    {\E_\a(\phi)_{\Kbar}} & {\E_\a(\psi)_{\Kbar}} \\
    {\M_\a(\phi)^\ast_{\Kbar}} & {\M_\a(\psi)^\ast_{\Kbar}}
    \arrow["{\id \otimes \E_\a(u)}", from=1-1, to=1-2]
    \arrow["\alpha_\phi"', from=1-1, to=2-1]
    \arrow["{\id \otimes \M_\a(u)^\ast}"', from=2-1, to=2-2]
    \arrow["\alpha_\psi", from=1-2, to=2-2]
  \end{tikzcd}\]
\end{theo}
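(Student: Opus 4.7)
The plan is to upgrade the $\Kbar$-linear perfect pairing $\B_{\a,\Kbar}$ from Proposition~\ref{prop:pairing} to an $A_{\Kbar}$-compatible one, and then to derive the commutativity of the diagram from the naturality of $\B$ in the Drinfeld module. Only two algebraic identities are at the heart of the argument, each obtained by unwinding definitions.

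First I would check that the pairing $\B_\a$ is ``sesquilinear'' with respect to the $A$-action: for every $a \in A$, $z \in \E_\a(\phi)$ and $f \in \M_\a(\phi)$, we have
\[
\B_\a(a \cdot z, f) \;=\; \B_\a(z, a \cdot f).
\]
Indeed, by the definitions of $\E(\phi)$ and $\M(\phi)$, we have $a \cdot z = \phi_a(z)$ in $\Kbar$, while $a \cdot f = f \phi_a$ in $K\{\tau\}$. Evaluating, $(f \phi_a)(z) = f(\phi_a(z))$, which is exactly the required identity. Extending scalars to $\Kbar$, the map $\alpha_\phi$ induced by the perfect pairing on the left factor automatically satisfies $\alpha_\phi(a \cdot x)(y) = \alpha_\phi(x)(a \cdot y)$ for $a \in A_{\Kbar}$, which is precisely the $A_{\Kbar}$-linearity for the dual structure defined on $M^\ast$. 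To turn $\Hom_{\Kbar}(\M_\a(\phi)_{\Kbar}, \Kbar)$ into $\M_\a(\phi)^\ast_{\Kbar}$ I would use the standard identification $\Hom_K(N, \Kbar) \simeq \Kbar \otimes_K \Hom_K(N, K)$ valid for any finite-dimensional $K$-vector space $N$, applied here with $N = \M_\a(\phi)$; this identification is manifestly $A_{\Kbar}$-linear. Combining these steps with Proposition~\ref{prop:pairing} gives the isomorphism $\alpha_\phi$.

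For the commutativity of the diagram, the second identity I need is the naturality of the pairing in the Drinfeld module, namely
\[
\B_{\a,\psi}\bigl(\E_\a(u)(z),\, f\bigr) \;=\; \B_{\a,\phi}\bigl(z,\, \M_\a(u)(f)\bigr)
\quad \text{for } z \in \E_\a(\phi),\ f \in \M_\a(\psi).
\]
Using the definitions $\E_\a(u)(z) = u(z)$ and $\M_\a(u)(f) = fu$, both sides equal $(fu)(z)$ in $\Kbar$, so equality is immediate. Translating this identity through the isomorphisms $\alpha_\phi$ and $\alpha_\psi$ yields that $\alpha_\psi \circ (\id \otimes \E_\a(u))$ and $(\id \otimes \M_\a(u)^\ast) \circ \alpha_\phi$ induce the same element of $\Hom_{\Kbar}(\M_\a(\psi)_{\Kbar}, \Kbar)$ when paired against any $f$, whence the diagram commutes.

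I do not anticipate a real obstacle: everything reduces to the bookkeeping of the three $A$-actions (on $\E$, on $\M$, and on the dual $\M^\ast$) and their compatibility under the evaluation map $\B(z,f) = f(z)$. The only mildly non-trivial ingredient is already available, namely the perfect pairing statement of Proposition~\ref{prop:pairing}; once that is known, the present theorem is essentially an exercise in transporting structures and checking functoriality.
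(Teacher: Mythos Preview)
Your proposal is correct and follows essentially the same approach as the paper: both start from the $\Kbar$-linear isomorphism of Proposition~\ref{prop:pairing}, verify $A$-linearity via the identity $(f\phi_a)(z) = f(\phi_a(z))$, and obtain the commutative diagram from the naturality identity $(fu)(z) = f(u(z))$. In fact you spell out more detail than the paper, which dispatches the second assertion with ``easily checked.''
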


\begin{proof}
  For the first assertion, we already know that $\alpha_\phi$ is a
  $\Kbar$-linear isomorphism. It then only remains to verify that it is
  $A$-linear. Let $a \in A$ and $z \in \E_\a(\phi)$. By definition $a{\cdot}z =
  \phi_a(z)$ and $a{\cdot}f = f \phi_a$ for $f \in \M(\phi)$. Hence
  $\alpha_\phi(a{\cdot}z)$ is the function $f \mapsto 
    f\big(\phi_a(z)\big) = (f \phi_a) (z) = (a{\cdot}f)(z)$,
  which means that $\alpha_\phi(a{\cdot}z) = a {\cdot} \alpha_\phi(z)$
  as desired.
  The second assertion is easily checked.
\end{proof}

\begin{rem}
  Theorem~\ref{th:pairing} shows that $\E_\a(\phi)_{\Kbar}$ determines
  $\M_\a(\phi)_{\Kbar}$ and \emph{vice versa}. One can actually do much better
  and obtain a direct correspondence between $\E_\a(\phi)$ and
  $\M_\a(\phi)$ without extending scalars to $\Kbar$
  (see, for instance, \cite[Equation~(3.6.9)]{papikian_drinfeld_2023}).
  For this, we need to add
  more structures. On the one hand, on $\M_\a(\phi)$, we retain the
  $\tau$-action as discussed in Remark~\ref{rem:tauaction}. On the other hand,
  on $\E_\a(\phi)$, we have a Galois action. Precisely let $\Ksep$ denote the
  separable closure of $K$ inside $\Kbar$. From the fact that $\a$ is away from
  the characteristic, we deduce that $\E_\a(\phi)$ lies in $\Ksep$, and endow
  with an action of the Galois group $G_K = \Gal(\Ksep/K)$. We now have the following
  identifications refining those of Theorem~\ref{th:pairing}:
  \begin{align*}
    \E_\a(\phi) & \simeq \Hom_{\Ktau}\big(\M_\a(\phi), \Ksep\big) \\
    \M_\a(\phi) & \simeq \Hom_{\Fq[G_K]}\big(\E_\a(\phi), \Ksep\big)
  \end{align*}
  where, in the first (resp. second) line, we consider
  $K$-linear morphisms commuting with the $\tau$-action (resp. $\Fq$-linear
  morphisms commutating with the Galois action). In other words, the Galois representation
  $\E_\a(\phi)$ and the $\tau$-module $\M_\a(\phi)$
  correspond one to the other under Katz' anti-equivalence of
  categories~\cite[Proposition~4.1.1]{katz}.
\end{rem}

\begin{rem}
  In \cite{weil-pairing}, van der Heiden proposes another approach,
  proving that there is a canonical $A$-linear isomorphism:
  \[
    \E_\a(\phi) \simeq
    \Hom_{A/\a}\big(\M_\a(\phi)^\tau, \Omega_A / \a \Omega_A\big)
  \]
  where $\M_\a(\phi)^\tau$ denotes the subset of fixed points of
  $\M_\a(\phi)$ by the $\tau$-action and
  $\Omega_A$ is the module of Kähler differential forms
  of $A$ over $\Fq$ (see Proposition~4.3 of \emph{loc.~cit.}).
  However, the formulation of Theorem~\ref{th:pairing} is better
  suited for the applications we shall develop in this article.
\end{rem}

If $M$ is a finitely generated projective $A_K$-module of rank $n$, 
we let
\[
  \det M = \bigwedge^n M
\]
denote the maximal exterior power of $M$.
Any $A_K$-linear endomorphism $f : M \to M$ induces a linear map $\det f : 
\det M \to \det M$. The latter is
the multiplication by some element of $A_K$, that we call the
\emph{determinant} of $f$ and denote by 
$\det f$ in a slight abuse of notation.
Similarly, we define the characteristic polynomial of~$f$ as the
determinant of the $A_K[X]$-linear map $X{-}f$ acting on $A_K[X]
\otimes_{A_K} M$.

A classical consequence of Theorem~\ref{th:pairing} is the following.

\begin{theo}\label{th:norm-endomorphism}
  Let $\phi$ be a Drinfeld module and 
  let $u : \phi \to \phi$ be an endomorphism.
  Let $\q \subset A$ be
  a maximal ideal away from the characteristic. Then
  the characteristic polynomials of $\T_\q(u)$ and $\M(u)$ are equal.

  In particular, $\norm(u)$ is the principal ideal generated by $\det(\M(u))$.
\end{theo}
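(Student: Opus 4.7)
The plan is to apply the duality of Theorem~\ref{th:pairing} at every finite torsion level $\a = \q^n$ and then pass to the inverse limit over $n$. For each $n \ge 1$, the commutative square there furnishes an $A_{\Kbar}$-linear isomorphism $\alpha_\phi : \E_{\q^n}(\phi)_{\Kbar} \simeq \M_{\q^n}(\phi)^\ast_{\Kbar}$ that intertwines $\id \otimes \E_{\q^n}(u)$ with $\id \otimes \M_{\q^n}(u)^\ast$. Since a linear endomorphism of a finitely generated projective module of constant rank shares its characteristic polynomial with its (appropriate) dual, I would conclude
\[
\mathrm{charpoly}\bigl(\E_{\q^n}(u)_{\Kbar}\bigr) \;=\; \mathrm{charpoly}\bigl(\M_{\q^n}(u)_{\Kbar}\bigr)
\]
as elements of $(A_{\Kbar}/\q^n A_{\Kbar})[X]$.

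Next, I would descend this equality from $A_{\Kbar}$ to $A$. The left-hand side a priori lives in $(A/\q^n)[X]$ because $\E_{\q^n}(\phi)$ is free of rank $r$ over $A/\q^n$ (using that $\q$ is away from the characteristic). The right-hand side is the reduction of $\mathrm{charpoly}(\M(u)) \in A_K[X]$ modulo $\q^n A_K$. The chain of inclusions
\[
A/\q^n \;\hookrightarrow\; A_K/\q^n A_K \;\hookrightarrow\; A_{\Kbar}/\q^n A_{\Kbar}
\]
is injective because $A_K = A \otimes_{\Fq} K$ is $A$-free (any $\Fq$-basis of $K$ being an $A$-basis of $A_K$), and similarly for $\Kbar$. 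This forces $\mathrm{charpoly}(\M(u)) \bmod \q^n A_K$ to lie in the image of $A/\q^n$ and to agree there with $\mathrm{charpoly}(\E_{\q^n}(u))$.

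I would then take the inverse limit over $n$. On the torsion side, the characteristic polynomials stabilize to $\mathrm{charpoly}(\T_\q(u)) \in A_\q[X]$, already known to lie in $A[X]$ by \cite[Corollary~3.4]{gek91}. On the motivic side, each coefficient of $\mathrm{charpoly}(\M(u)) \in A_K[X]$ lies in $\bigcap_n (A + \q^n A_K)$; expanding any such coefficient along an $\Fq$-basis of $K$ and using its finite support reduces this to the Krull intersection $\bigcap_n (A + \q^n) = A$. Hence $\mathrm{charpoly}(\M(u)) \in A[X]$ and coincides with $\mathrm{charpoly}(\T_\q(u))$. The statement on $\norm(u)$ then follows from the recalled identity $\norm(u) = \det(\T_\q(u))\cdot A$ together with the observation that $\det$ is, up to the sign $(-1)^r$, the constant coefficient of the characteristic polynomial.

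The main obstacle I anticipate is the first step: justifying cleanly that $\M_{\q^n}(u)_{\Kbar}$ and the $A_K$-action-twisted $K$-linear dual $\M_{\q^n}(u)^\ast_{\Kbar}$ share the same characteristic polynomial. On a free module this is the classical matrix/transpose identity; the projective case should reduce to it by Zariski-localizing on $\mathrm{Spec}(A_{\Kbar}/\q^n A_{\Kbar})$ after observing that this dual, though defined via $\Hom_{\Kbar}$, is automatically annihilated by $\q^n$ and has the same rank as $\M_{\q^n}(\phi)_{\Kbar}$ over the Artinian ring $A_{\Kbar}/\q^n A_{\Kbar}$ (equivalently, one can invoke Lemma~\ref{lem:isomorphisme-dual} and work out the compatibility on each local factor).
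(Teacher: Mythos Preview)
Your approach is correct and essentially the same as the paper's: apply Theorem~\ref{th:pairing} at each level $\q^n$, use that an endomorphism and its dual share the same characteristic polynomial, and pass to the limit over~$n$. The paper dispatches your ``main obstacle'' in one line by remarking that in suitable bases the matrices are transposes of one another (implicitly using that $\M_{\q^n}(\phi)_{\Kbar}$ is free over the Gorenstein ring $A_{\Kbar}/\q^n A_{\Kbar}$), and it is terser than you are about the descent to~$A$, but the skeleton is identical.
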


\begin{proof}
  Let $n \in \NN$. Applying Theorem~\ref{th:pairing} with
  $\a = \q^n$, we find
  \[ 
    \pi\big(\E_{\q^n}(u)\big) 
  = \pi\big(\E_{\q^n}(u)_{\Kbar}\big)
  = \pi\big(\M_{\q^n}(u)_{\Kbar}\big) 
  = \pi\big(\M_{\q^n}(u)\big),
  \]
  the second equality being a consequence of Theorem~\ref{th:pairing} and
  the fact that two dual morphisms have the same determinant (in
  suitable bases, their matrices are transposed one to the other).
  Thus we obtain
  $\pi (\T_\q(u)) \equiv \pi (\M(u)) \pmod{\q^n}$.
  Since this holds for all positive integer $n$, we conclude that
  $\pi(\T_\q(u)) = \pi(\M(u))$.

  The last statement now
  follows from \cite[Lemma~3.10]{gek91}.
\end{proof}

\subsection{Algorithms: the case of $\mathbb P^1$}
\label{subsec:algo-motive}

In this subsection, we assume that $A = \Fq[T]$, and we
let $\phi$ be a Drinfeld module of rank $r$. We fix an endomorphism $u: \phi \to \phi$ 
and aim at designing an algorithm that computes
the characteristic polynomial (resp. norm) of~$u$.
Under the assumption that $A = \Fq[T]$, the ring $A_K \simeq K[T]$ is a principal ideal
domain and $\M(\phi)$
is free of rank~$r$. Moreover, a canonical basis is 
given by $(1, \tau, \dots, \tau^{r-1})$. Our strategy is then clear:
we compute the matrix representing the $K[T]$-linear map $\M(u)$ in
the aforementioned canonical basis and then return its characteristic
polynomial (resp. determinant); Theorem~\ref{th:norm-endomorphism} ensures that
it is
the characteristic polynomial (resp. norm) of~$u$.

\subsubsection{Generic algorithm}

Our first need is to design an algorithm for computing the coordinates of an element
$f \in \M(\phi)$, represented as an Ore polynomial, in the canonical
basis of $\M(\phi)$.
This is achieved by Algorithm~\ref{algo:motive-coordinates}, whose
correctness is immediately proved by induction on the $\tau$-degree
of~$f$.

\begin{algorithm}[h]
    \caption{\MotiveCoordinates}
    \label{algo:motive-coordinates}
    \KwIn{An element $f$ in the motive $\M(\phi)$}
    \KwOut{The coordinates $(f_0, \dots, f_{r-1})$ of $f$
    in the canonical basis of $\M(\phi)$}

    \eIf{$\deg f < r$}{
        \KwRet the vector defined by the coefficients of $f$ \;
    }{
        Set $m = \max(1, \lfloor \deg(f) / 2r \rfloor)$ \;
        Write $f = a \cdot \phi_X^m + b$ with $\deg (b) < rm$ (right Euclidean division)\;
        \KwRet $X^m \cdot \MotiveCoordinates(a) + \MotiveCoordinates(b)$ \;
    }
\end{algorithm}

\begin{lem}
\label{lemma:comp-motive-coordinates}
  For an input $f \in \M(\phi)$ of $\tau$-degree $n$,
  Algorithm~\ref{algo:motive-coordinates} requires $O(n^2)$ applications
  of the Frobenius endomorphism and $O (n^2)$ operations in $K$.
\end{lem}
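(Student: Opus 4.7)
My plan is to prove the claim by strong induction on $n$, analyzing a single recursive pass. The crux is to bound the cost of the right Euclidean division in the recursive branch and then chain the estimates through the recursion.

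First I would handle the single recursive step. With the choice $m = \max(1, \lfloor n/(2r) \rfloor)$, the Ore polynomial $\phi_T^m$ has $\tau$-degree exactly $rm$, and we have $rm \leq n/2$ as soon as $n \geq 2r$; in particular the quotient $a$ has $\tau$-degree $n - rm$ and the remainder $b$ has $\tau$-degree strictly less than $rm$, so both are bounded by roughly $n/2$ (up to an additive $O(r)$). Using the naive right Euclidean division algorithm in $K\{\tau\}$, eliminating each of the $n - rm + 1$ leading coefficients takes $O(rm)$ arithmetic operations and $O(rm)$ Frobenius applications, yielding a total of $O((n-rm){\cdot}rm) = O(n^2)$ in both cost metrics.

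Second I would set up the recurrence. Let $C(n)$ denote the total number of operations (either in $K$ or of Frobenius). The step above together with the two recursive calls on inputs of $\tau$-degree at most $n/2 + O(r)$ yields
\[
  C(n) \leq 2\, C(n/2 + O(r)) + O(n^2),
\]
with base case $C(k) = O(r)$ for $k < r$. Applying the master theorem (or summing the geometric series $n^2 + 2(n/2)^2 + 4(n/4)^2 + \cdots$ directly) gives $C(n) = O(n^2)$, as claimed.

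Finally I would check that the combining step $T^m \cdot \text{MotiveCoordinates}(a) + \text{MotiveCoordinates}(b)$ does not spoil the bound: at a node where the input has $\tau$-degree $k$, the output is a vector of $r$ elements of $K[T]$ whose degrees are bounded by $O(k/r)$, so the addition and the shift by $T^m$ cost $O(k)$ operations in $K$ and no Frobenius application. Summing $O(k)$ across the $O(\log(n/r))$ levels of the recursion tree gives an $O(n \log n)$ contribution, which is dominated by the $O(n^2)$ bound coming from the Euclidean division. The only delicate point in this plan is ensuring that the size parameters really halve at each recursive call despite the $\max(1,\cdot)$ clause and the additive $O(r)$ slack, which is handled by separating the ranges $n < 2r$ (pure base case, constant cost per leaf) and $n \geq 2r$ (where $rm \in [n/2 - r, n/2]$ makes the recursion balanced).
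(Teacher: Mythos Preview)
Your approach is essentially the same as the paper's: bound the non-recursive work at each node by $O(n^2)$, set up the recurrence $C(n) \leq 2\,C(n/2) + O(n^2)$, and invoke the Master Theorem. Your write-up is in fact more careful than the paper's on two points: you keep the factor $2$ in the recurrence (the paper writes only a single $C(\lceil s/2\rceil)$, presumably a slip), and you explicitly check the cost of the combining step and the boundary range $r \leq n < 2r$.

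There is one genuine omission, however. Before the Euclidean division can be carried out you must first \emph{compute} the divisor $\phi_T^{m}$; the algorithm does not receive it for free. The paper addresses this explicitly: with fast exponentiation the successive squarings have $\tau$-degrees $r, 2r, \ldots, rm$, so the dominant multiplication costs $O((rm)^2) = O(n^2)$ operations and Frobenius applications, and the geometric sum over the smaller squarings is of the same order. This extra $O(n^2)$ does not change your recurrence, so the conclusion survives, but as written your ``single recursive pass'' accounts only for the division and not for building $\phi_T^{m}$. You should add this line to close the gap.
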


\begin{proof}
  The first step of the algorithm consists in computing $\phi_T^m$.
  Using fast exponentiation, this costs $O (n^2)$ applications of the
  Frobenius endomorphism and $O (n^2)$ operations in $K$. The Euclidean
  division requires $O (n^2)$ applications of the Frobenius endomorphism
  and $O (n^2)$ operations in $K$ as well.

  Let $\mathrm{C}(s)$ be the cost of running the algorithm on an entry with
  degree $s$. By what precedes, $\mathrm{C}(s)$ is less than
  $\mathrm{C}(\lceil\frac{s}{2}\rceil)$, plus $O (s^2)$ operations in $K$ and
  $O (s^2)$ applications of the Frobenius endomorphism.
  We conclude using the Master
  Theorem~\cite[Theorem~4.1]{cormen_introduction_2022}.
\end{proof}

From Algorithm~\ref{algo:motive-coordinates}, we also derive the
following bounds on the size of the coefficients.

\begin{lem}\label{lemma:bound-coeffs}

  Let $f \in \M(\phi)$ and let $f_0, \ldots, f_{r-1} \in K[T]$ be the
  coordinates of $f$ in the canonical basis.
  Then for
  $0 \leq i < r$, either $\deg(f) < i$ and $f_i = 0$, or $\deg(f)
  \geqslant i$, in which case we have
  \[
    \deg_T(f_i) \leq \frac{\deg(f) - i} r.
  \]
\end{lem}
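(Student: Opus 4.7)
The plan is to unfold the $A_K$-module structure on $\M(\phi)$ explicitly and observe that the different basis vectors contribute to the Ore polynomial $f$ in non-overlapping residue classes of $\tau$-degree modulo $r$, so no cancellations can occur and the claimed bounds fall out by inspection.

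More concretely, I would first unwind the definition of the action. By definition $(\lambda \otimes a) \cdot g = \lambda\, g\, \phi_a$, so if the coordinates of $f$ in the canonical basis are $(f_0, \ldots, f_{r-1}) \in K[T]^r$, then
\[
  f \;=\; \sum_{i=0}^{r-1} f_i \cdot \tau^i \;=\; \sum_{i=0}^{r-1} \tau^i \, \phi_{f_i},
\]
where each $\phi_{f_i}$ is an Ore polynomial of $\tau$-degree $r\deg_T(f_i)$ (since $\phi$ has rank $r$). Hence, when $f_i \neq 0$, the $i$-th summand $\tau^i \phi_{f_i}$ has $\tau$-degree exactly $i + r\deg_T(f_i)$.

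The key observation is then that these $\tau$-degrees satisfy $i + r\deg_T(f_i) \equiv i \pmod r$. Since $0 \le i < r$, the summands corresponding to distinct indices $i$ live in pairwise distinct residue classes modulo $r$, and in particular their leading monomials cannot cancel one another. It follows that
\[
  \deg(f) \;=\; \max\bigl\{\, i + r \deg_T(f_i) \;:\; 0 \le i < r,\ f_i \neq 0 \,\bigr\}.
\]
Both conclusions of the lemma are immediate consequences: if $\deg(f) < i$ then the summand $\tau^i \phi_{f_i}$ would contribute a $\tau$-degree $\geq i > \deg(f)$, forcing $f_i = 0$; and whenever $f_i \ne 0$, the defining maximum gives $i + r \deg_T(f_i) \le \deg(f)$, i.e.\ $\deg_T(f_i) \le (\deg(f) - i)/r$.

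There is no real obstacle here; the only subtle point is making sure one does not confuse the two multiplications (the $A_K$-action on $\M(\phi)$ versus composition in $K\{\tau\}$), and verifying carefully that the leading term of $\tau^i \phi_{f_i}$ is non-zero with the expected degree. This is immediate because the leading coefficient of $\phi_a$ for $a = T^j$ is a power of the leading coefficient of $\phi_T$, which is non-zero by definition of the rank.
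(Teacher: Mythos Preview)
Your argument is correct, with one notational caveat: the equality $f_i \cdot \tau^i = \tau^i\,\phi_{f_i}$ is not literally true when $f_i$ has coefficients in $K \setminus \Fq$, since $K$-scalars do not commute past powers of $\tau$. Writing $f_i = \sum_j c_{ij} T^j$ with $c_{ij} \in K$, the $A_K$-action gives $f_i \cdot \tau^i = \sum_j c_{ij}\,\tau^i\,\phi_T^{\,j}$, whereas $\tau^i\,\phi_{f_i}$ (with $\phi$ extended $K$-linearly) would be $\sum_j c_{ij}^{\,q^i}\,\tau^i\,\phi_T^{\,j}$. This does not, however, affect your degree computation: each term $c_{ij}\,\tau^i\,\phi_T^{\,j}$ still has $\tau$-degree exactly $i + rj$ with nonzero leading coefficient, so $f_i \cdot \tau^i$ has $\tau$-degree $i + r\deg_T(f_i)$ as claimed, and the residue-class argument goes through unchanged.

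The paper does not spell out a proof; it only says the bound is derived from Algorithm~\ref{algo:motive-coordinates}, which suggests an induction on $\deg(f)$ following the recursive right Euclidean divisions by powers of $\phi_T$. Your direct route is arguably cleaner: it avoids the induction entirely and in fact yields the sharper equality $\deg(f) = \max\{\, i + r\deg_T(f_i) : f_i \neq 0 \,\}$, from which both conclusions are immediate. The paper's implicit approach has the contextual advantage of tying the bound to the very algorithm used to compute the coordinates, but your argument is self-contained and exposes more clearly why no cancellation can occur.
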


\begin{cor}\label{cor:bound-coeffs}
  Let $(P_{i, j})_{0 \leq i,j < r}$ be the matrix of $\M(u)$ in the canonical
  bases.
  Then for $0 \leq i, j \leq r-1$, either $\deg(u)
  + j < i$ and $P_{i, j} = 0$, or $\deg(u) + j \geqslant i$, in which
  case we have
  \[
    \deg(P_{i, j}) \leq \frac{(\deg(u) + j) - i} r.
  \]
\end{cor}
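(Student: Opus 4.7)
The plan is to observe that the corollary is an immediate application of Lemma~\ref{lemma:bound-coeffs} to each column of the matrix of $\M(u)$.

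First I would recall that, by definition, the column indexed by $j$ of the matrix $(P_{i,j})$ consists precisely of the coordinates of $\M(u)(\tau^j)$ in the canonical basis $(1, \tau, \dots, \tau^{r-1})$ of $\M(\phi)$. Since $\M(u)$ acts by right multiplication by the Ore polynomial defining $u$, we have
\[
  \M(u)(\tau^j) \,=\, \tau^j \cdot u
\]
where the product is taken in $K\{\tau\}$. This Ore polynomial has $\tau$-degree equal to $\deg(u) + j$.

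Then I would apply Lemma~\ref{lemma:bound-coeffs} to the element $f = \tau^j u \in \M(\phi)$, whose coordinates in the canonical basis are exactly $(P_{0,j}, \dots, P_{r-1,j})$. The lemma directly gives: either $\deg(u) + j < i$, in which case $P_{i,j} = 0$, or $\deg(u) + j \geq i$, in which case
\[
  \deg_T(P_{i,j}) \,\leq\, \frac{(\deg(u) + j) - i}{r},
\]
which is precisely the stated bound. There is no real obstacle here: the corollary is a direct translation of the previous lemma through the description of the columns of the matrix of $\M(u)$.
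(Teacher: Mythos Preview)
Your proof is correct and follows exactly the same approach as the paper: identify $P_{i,j}$ as the $i$-th coordinate of $\tau^j u$ in the canonical basis, note that this element has $\tau$-degree $\deg(u)+j$, and apply Lemma~\ref{lemma:bound-coeffs}. The paper's proof is just a two-line version of what you wrote.
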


\begin{proof}
  By definition, $P_{i,j}$ is the coefficient in front of $\tau^i$
  in the decomposition of $\tau^j u$ in the canonical basis. 
  The corollary then follows from Lemma~\ref{lemma:bound-coeffs}.
\end{proof}

As a consequence of the previous statements, we obtain an alternative
proof of the following 
classical result~\cite[Theorem~4.2.7]{papikian_drinfeld_2023}.

\begin{prop}\label{prop:borne-coeff-frob-charpoly}
  We assume that $K$ is a finite field.
  Let $\pi = \pi_0(T) + \cdots + \pi_r(T) X^r$ be the characteristic polynomial
  of the Frobenius endomorphism of $\phi$. Then for every $0 \leqslant i
  \leqslant r$ we have
  \[
    \deg(\pi_i) \leqslant \frac {r - i} {r} d.
  \]
\end{prop}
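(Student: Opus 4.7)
The plan is to reduce the statement to a degree bound on a matrix determinant, using the motivic interpretation of the characteristic polynomial. Since $K$ is a finite extension of $\Fq$ of degree $d$, the Frobenius endomorphism $F_\phi = \tau^d$ is an Ore polynomial of $\tau$-degree exactly $d$. By Theorem~\ref{th:norm-endomorphism}, $\pi$ coincides with the characteristic polynomial of the $K[T]$-linear map $\M(F_\phi)$, which in turn equals $\det(X I_r - P)$, where $P = (P_{i,j})_{0 \leq i,j < r}$ is the matrix representing $\M(F_\phi)$ in the canonical basis $(1, \tau, \dots, \tau^{r-1})$ of $\M(\phi)$.

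Applying Corollary~\ref{cor:bound-coeffs} with $\deg(u) = d$, I obtain that every entry $P_{i,j}$ lies in $K[T]$ with
\[
\deg_T(P_{i,j}) \leq \frac{d + j - i}{r},
\]
whenever the right-hand side is nonnegative (and $P_{i,j} = 0$ otherwise).

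Next I would expand the determinant. Writing $k = r - i$, the classical formula gives
\[
\pi_i = (-1)^{k} \sum_{\substack{I \subset \{0,\dots,r-1\} \\ |I| = k}} \det(P_I),
\]
where $P_I$ is the principal minor indexed by $I$. By the Leibniz formula, each $\det(P_I)$ is a signed sum, over permutations $\sigma$ of $I$, of the monomials $\prod_{j \in I} P_{j,\sigma(j)}$. The key observation is that, since $\sigma$ permutes $I$, one has $\sum_{j \in I}(\sigma(j) - j) = 0$, and hence
\[
\deg_T\!\Bigl(\prod_{j \in I} P_{j,\sigma(j)}\Bigr) \leq \sum_{j \in I} \frac{d + \sigma(j) - j}{r} = \frac{kd}{r} = \frac{(r-i)d}{r}.
\]

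Summing over subsets $I$ and permutations $\sigma$ preserves this bound, so $\deg(\pi_i) \leq (r-i)d/r$, as claimed. There is no real obstacle here beyond observing the telescoping of $\sigma(j) - j$ along a permutation; everything else is bookkeeping with the bound provided by Corollary~\ref{cor:bound-coeffs}.
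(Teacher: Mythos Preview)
Your proof is correct and follows essentially the same approach as the paper: both invoke Theorem~\ref{th:norm-endomorphism} to identify $\pi$ with the characteristic polynomial of the matrix of $\M(F_\phi)$, express $\pi_i$ (up to sign) as a sum of principal minors of size $r-i$, and then bound degrees via Corollary~\ref{cor:bound-coeffs}. Your write-up is in fact more explicit than the paper's, which merely states that one ``concludes using Corollary~\ref{cor:bound-coeffs}'' without spelling out the telescoping $\sum_{j\in I}(\sigma(j)-j)=0$ that you correctly isolate as the key observation.
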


\begin{proof}
  Using Theorem~\ref{th:norm-endomorphism}, we know that $\pi$ is the
  characteristic polynomial of the matrix $P$ of $\M(\tau^d)$ in the canonical
  bases. Therefore, for every $0 \leq i \leq r$, $\pi_i$ is the trace of
  $\bigwedge^i \M(\tau^d)$, which is an alternated sum on the principal minors of $P$ with
  size $i$. We conclude using Corollary~\ref{cor:bound-coeffs}.
\end{proof}

We now go back to our original setting; that is, $K$ and its function field
characteristic can be either finite or
infinite. Instead of independently computing all columns using
Algorithm~\ref{algo:motive-coordinates}, a more intelligent approach can be
employed to calculate the matrix of $\M(u)$:
in order to speed up the computation of a column, we may reuse those that are already
computed.
For this, we write
\[
  \phi_T = g_0 + g_1 \tau + \cdots + g_r \tau^r
\]
with $g_i \in K$, $g_r \neq 0$. For a polynomial $h \in K[T]$,
we let $h^\tau$ denote the polynomial
deduced from $h$ by raising all its coefficients to the $q$-th power.
An easy computation then shows that if
$(f_0, \ldots, f_{r-1})$ are the coordinates of some $f \in \M(\phi)$ 
in the canonical basis, then the coordinates $(f'_0, \ldots, f'_{r-1})$ of $\tau f$ are 
 defined by the following matrix equality:
\begin{equation}\label{eq:mult-tau}
  \begin{pmatrix}
    f'_0     \\
    f'_1     \\
    \vdots       \\
    f'_{r-1}
  \end{pmatrix}
  =
  \begin{pmatrix}
    0 & 0 & \dots  & 0 & \frac{T - g_0}{g_r}   \\
    1 & 0 & \dots  & 0 & - \frac{g_1}{g_r}     \\
      &   & \ddots &   &                          \\
    0 & 0 & \dots  & 1 & - \frac{g_{r-1}}{g_r}
  \end{pmatrix}
  \cdot
  \begin{pmatrix}
    f_0^\tau     \\
    f_1^\tau     \\
    \vdots       \\
    f_{r-1}^\tau
  \end{pmatrix}.
\end{equation}
This readily yields Algorithm~\ref{algo:motive-shift-coordinates}.

\begin{algorithm}[h]
    \caption{\MotiveTauAction}
    \label{algo:motive-shift-coordinates}
    \KwIn{The coordinates $(f_0, \dots, f_{r-1})$ of an element $f \in \M(\rho)$}
    \KwOut{The coordinates of $\tau f \in \M(\rho)$}

    Compute the polynomials $f_0^\tau, \dots, f_{r-1}^\tau$ \;
    Compute the polynomial $f'_0 = \frac{T - g_0}{g_r} f^\tau_{r-1}$ \;

    \For{$1 \leqslant i \leqslant r-1$}{
      Compute the polynomial $f'_i = f_i^\tau - \frac{g_{i+1}}{g_r} f^\tau_{r-1}$ \;
    }

    \KwRet $(f'_0, \dots, f'_{r-1})$ \;
\end{algorithm}

\begin{lem}
\label{lemma:comp-motive-shift-coordinates}
  For an input $f \in \M(\phi)$ of $\tau$-degree $n$,
  Algorithm~\ref{algo:motive-shift-coordinates} requires at most $O(n)$
  applications of the Frobenius endomorphism and $O(n)$ operations in $K$.
\end{lem}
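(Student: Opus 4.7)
The plan is to account, line by line, for the cost of Algorithm~\ref{algo:motive-shift-coordinates}, using Lemma~\ref{lemma:bound-coeffs} to control the sizes of the input coordinates.

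First I would invoke Lemma~\ref{lemma:bound-coeffs}: for $0 \leq i < r$, either $i > n$ (in which case $f_i = 0$) or $\deg_T(f_i) \leq (n-i)/r$. Summing $\deg_T(f_i) + 1$ over those indices for which $f_i \neq 0$, I would show that the total size of the input, counted as a number of elements of $K$, is $O(n)$. Indeed, when $n \geq r$ one has $\sum_{i=0}^{r-1}\bigl((n-i)/r + 1\bigr) = n - (r-1)/2 + r \leq 2n$; when $n < r$, only the first $n+1$ coordinates are nonzero and each is a scalar, so the total is at most $n+1$.

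Then I would bound the cost of each step. Line 1 applies the Frobenius endomorphism exactly once to each of the $O(n)$ coefficients of the $f_i$'s, contributing $O(n)$ Frobenius applications and no other arithmetic. Line 2 multiplies $f_{r-1}^\tau$, which has degree at most $n/r$, by a degree-one polynomial, for a cost of $O(n/r) = O(n)$ operations in $K$. Each iteration of the loop on lines 3--4 performs a scalar multiplication and a subtraction on polynomials of degree at most $n/r$; summing over the $r-1$ iterations yields $O\bigl((r-1)\cdot n/r\bigr) = O(n)$ operations in $K$. Adding these contributions gives the claimed $O(n)$ bound for both complexity measures.

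I do not expect any real obstacle: the argument reduces to the degree bound furnished by Lemma~\ref{lemma:bound-coeffs} together with direct accounting in each line. The only mildly delicate point is the separate treatment of the regime $n < r$, where most coordinates vanish and the naive bound $\sum_{i=0}^{r-1}\bigl((n-i)/r + 1\bigr)$ would overestimate the actual work; handling this case by hand is what ensures the absence of a residual $O(r)$ term in the final complexity.
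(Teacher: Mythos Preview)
Your proposal is correct and follows essentially the same approach as the paper: invoke Lemma~\ref{lemma:bound-coeffs} to bound $\deg_T(f_i)$, sum the resulting sizes to get $n+1$ Frobenius applications, and then account for the remaining arithmetic line by line. You are in fact slightly more careful than the paper, which simply asserts that ``the remaining steps can be done in $O(n)$ arithmetic operations in~$K$'' without separating out the regime $n<r$.
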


\begin{proof}
  By Lemma~\ref{lemma:bound-coeffs}, the polynomial $f_i \in K[T]$ has degree at
  most $\frac {n-i} r$. As a consequence, computing $f_i^\tau$ requires at
  most $\big\lfloor \frac {n-i} r \big\rfloor + 1$ applications of the 
  Frobenius endomorphism, and the pre-computation on line 1 costs 
  \[
    \sum_{i=0}^{r-1} \left(\left\lfloor \frac {n-i} r \right\rfloor + 1\right) = n + 1
  \]
  such applications. The
  remaining steps can be done in $O (n)$ arithmetic operations in~$K$.
\end{proof}

Computing the matrix of $\M(u)$ is now just a matter of computing the
coordinates of $u$ and iteratively applying $r$ times the $\tau$-action.
The precise procedure is presented in Algorithm~\ref{algo:motive-matrix}.

\begin{algorithm}[h]
    \caption{\MotiveMatrix}
    \label{algo:motive-matrix}
    \KwIn{An endomorphism $u : \phi \to \phi$ encoded by its defining Ore polynomial}
    \KwOut{The matrix of $\M(u)$ in the canonical bases}

    Compute $U_0 = \MotiveCoordinates(u, \phi)$ \;

    \For{$1 \leqslant i \leqslant r-1$}{
      Compute $U_i = \MotiveTauAction(U_{i-1})$ \;
    }

    \KwRet the matrix whose columns are $(U_0, \dots, U_{r-1})$ \;

\end{algorithm}

\begin{lem}
\label{lem:complexity-motive-matrix}
  For an input $u$ of $\tau$-degree $n$,
  Algorithm~\ref{algo:motive-matrix} requires 
  at most $O(n^2 + r^2)$ applications of the Frobenius endomorphism, 
  and $O(n^2 + r^2)$ operations in $K$.
\end{lem}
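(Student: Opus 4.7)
The plan is to decompose the total cost of Algorithm~\ref{algo:motive-matrix} into the single initial call to \MotiveCoordinates{} and the subsequent $r-1$ iterations of \MotiveTauAction{}, then invoke the complexity bounds already established for these subroutines in Lemmas~\ref{lemma:comp-motive-coordinates} and~\ref{lemma:comp-motive-shift-coordinates}.

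First, I would handle the initialization. The input $u$ has $\tau$-degree $n$, so computing $U_0 = \MotiveCoordinates(u, \phi)$ costs $O(n^2)$ applications of the Frobenius and $O(n^2)$ arithmetic operations in $K$ by Lemma~\ref{lemma:comp-motive-coordinates}.

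Next, I would bound the cost of each iteration of the loop. The crucial observation is that $U_{i-1}$ is, by an immediate induction on $i$, the coordinate vector of $\tau^{i-1} u \in \M(\phi)$, an element whose $\tau$-degree equals $n + i - 1$. Lemma~\ref{lemma:comp-motive-shift-coordinates} applied to this element therefore shows that the call $\MotiveTauAction(U_{i-1})$ costs $O(n + i)$ Frobenius applications and $O(n+i)$ operations in $K$. Summing the bound $O(n+i)$ for $i$ ranging from $1$ to $r-1$ yields a total contribution of $O(rn + r^2)$ for the loop.

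Adding the two contributions gives $O(n^2 + rn + r^2)$, and the inequality $2rn \leq n^2 + r^2$ (AM-GM) absorbs the cross term into the announced bound $O(n^2 + r^2)$, for both the number of Frobenius applications and the number of operations in $K$. I do not anticipate any substantive obstacle: the only point requiring a small verification is the identification of $U_{i-1}$ with the coordinate vector of $\tau^{i-1} u$, which follows directly from the definition of the $A_K$-module structure on $\M(\phi)$ and the semantics of \MotiveTauAction.
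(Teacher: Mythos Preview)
Your proposal is correct and follows essentially the same approach as the paper: both decompose the cost into the initial call to \MotiveCoordinates{} (via Lemma~\ref{lemma:comp-motive-coordinates}) and the $r-1$ iterations of \MotiveTauAction{} (via Lemma~\ref{lemma:comp-motive-shift-coordinates}), then sum the per-iteration bound $O(n+i)$. Your write-up is in fact slightly more detailed than the paper's, spelling out the identification of $U_{i-1}$ with the coordinates of $\tau^{i-1}u$ and the AM--GM absorption of the cross term $rn$.
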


\begin{proof}
  Computing $U_0$ requires $O(n^2)$ applications of the Frobenius endomorphism
  and $O(n^2)$ operations in $K$ (Lemma~\ref{lemma:comp-motive-coordinates}).
  Then, knowing $U_i$ for some $1 \leq i \leq r-1$, the computation of $U_{i+1}$ requires at most
  $O(n+i)$ applications of the Frobenius and $O(n+i)$ operations in~$K$
  by Lemma~\ref{lemma:comp-motive-shift-coordinates}.
  Summing all the contributions, we end up with the announced complexity.
\end{proof}

We now have all the ingredients to write down
Algorithm~\ref{algo:motive-endo-charpoly}, which is the
main algorithm of this section.

\begin{algorithm}[h]
    \caption{\EndomorphismCharpoly}
    \label{algo:motive-endo-charpoly}
    \KwIn{An endomorphism $u : \phi \to \phi$ encoded by its defining Ore polynomial}
    \KwOut{The characteristic polynomial of $u$}
  
    Compute $M = \MotiveMatrix(u)$ \;
    \KwRet the characteristic polynomial of $M$ \;
\end{algorithm}

\begin{theo}\label{theo:motive-charpoly-comp}
  For a morphism of Drinfeld modules $u: \phi \to \phi$ of
  $\tau$-degree~$n$,
  Algorithm~\ref{algo:motive-endo-charpoly} computes the
  characteristic polynomial of $u$ for a cost of
  $O(n^2 + r^2)$ applications of the Frobenius and
  $\Otilde(n^2 + (n{+}r)r^{\Omega - 1})$ operations in $K$.
\end{theo}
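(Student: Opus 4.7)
The plan is to combine three ingredients already at our disposal: the correctness statement of Theorem~\ref{th:norm-endomorphism}, the complexity bound on \MotiveMatrix{} given by Lemma~\ref{lem:complexity-motive-matrix}, and the Kaltofen--Villard algorithm for computing characteristic polynomials of polynomial matrices (recalled in \S \ref{sec:matrices-computations}).

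For correctness, Algorithm~\ref{algo:motive-endo-charpoly} returns the characteristic polynomial of the matrix $M$ representing $\M(u)$ in the canonical basis of $\M(\phi)$, which by Theorem~\ref{th:norm-endomorphism} coincides with the characteristic polynomial of $u$. Since this invokation of Theorem~\ref{th:norm-endomorphism} is already available, correctness reduces to a one-line observation.

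For the complexity, the first step, namely computing $M = \MotiveMatrix(u)$, costs $O(n^2 + r^2)$ operations in $K$ together with $O(n^2 + r^2)$ applications of the Frobenius, by Lemma~\ref{lem:complexity-motive-matrix}. This accounts for the entire Frobenius cost announced in the theorem (no further Frobenius is used in step~2). The key remaining point is to bound the cost of computing the characteristic polynomial of $M$. Here I would invoke Corollary~\ref{cor:bound-coeffs}: each entry $P_{i,j}$ of $M$ is a polynomial in $K[T]$ of degree at most $(n+j-i)/r \leq (n+r)/r$. Feeding this into Kaltofen--Villard for an $r \times r$ matrix over $K[T]$ with entries of degree at most $\delta = (n+r)/r$ yields a cost of $\Otilde(r^{\Omega} \delta) = \Otilde((n+r) r^{\Omega-1})$ operations in $K$.

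Summing the two contributions, we obtain $O(n^2 + r^2) + \Otilde((n+r) r^{\Omega-1})$ operations in $K$. Using $\Omega \geq 2$, we have $r^2 \leq r^{\Omega} \leq (n+r) r^{\Omega-1}$, so the term $r^2$ is absorbed and the total cost in $K$-operations is $\Otilde(n^2 + (n+r) r^{\Omega-1})$, as claimed. The main (and only mildly delicate) obstacle is to use the sharp degree bound from Corollary~\ref{cor:bound-coeffs} rather than a naive estimate such as $\deg P_{i,j} \leq n$, which would otherwise produce a weaker bound of $\Otilde(n\, r^{\Omega})$ in the second term.
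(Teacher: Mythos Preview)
Your proof is correct and follows essentially the same route as the paper: invoke Lemma~\ref{lem:complexity-motive-matrix} for the cost of building $M$, use Corollary~\ref{cor:bound-coeffs} to bound the entry degrees by $O((n+r)/r)$, and apply the Kaltofen--Villard bound $\Otilde(r^\Omega \delta)$ from \S\ref{sec:matrices-computations}. You are slightly more explicit than the paper in justifying correctness via Theorem~\ref{th:norm-endomorphism} and in absorbing the $r^2$ term, but the argument is the same.
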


\begin{proof}

  The cost of computing the matrix of $\M(u)$ is $O(n^2 + r^2)$
  applications of the Frobenius endomorphism, and $O(n^2 + r^2)$
  operations in $K$. The matrix has size $r$ and, thanks to Corollary
  \ref{cor:bound-coeffs}, we know that all its entries have
  degree less than $1 + \frac n r$.
  Its characteristic polynomial can then be computed with
  $\Otilde((n{+}r) r^{\Omega-1})$ operations in $K$ 
  (see \S \ref{sec:matrices-computations}).
  The theorem follows.
\end{proof}

\subsubsection{The case of finite fields}
\label{sssec:charpoly-finitefield}

If $K$ is a finite field, we can speed up the 
computation by using specific algorithmic primitives to compute 
characteristic polynomial of polynomial matrices (see \S \ref{sec:matrices-computations}) on
the one hand, and to compute Ore Euclidean divisions (see \S 
\ref{subsec:ore-computations}) on the other hand.

\begin{theo}
\label{theo:charpoly-finitefield}
  If $K$ is a finite extension of $\Fq$ of degree $d$ and
  $u$ is an endomorphism of $\tau$-degree~$n$ of a Drinfeld 
  module $\phi$ of rank~$r$, then
  Algorithm~\ref{algo:motive-endo-charpoly} computes the
  characteristic polynomial of $u$ for a cost of
  \[
    \Otilde(d\log^2 q) +
    \Opower\big(\big(\SMgeq(n, d) + ndr + nr^\omega + dr^\omega\big)\cdot\log q\big)
  \]
  bit operations.
\end{theo}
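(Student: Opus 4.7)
The plan is to follow the structure of Algorithm~\ref{algo:motive-endo-charpoly}, re-analyzing its two ingredients---\MotiveMatrix on the one hand, and the characteristic polynomial of the resulting $r \times r$ matrix over $K[T]$ on the other hand---under the finite-field complexity model. Compared to Theorem~\ref{theo:motive-charpoly-comp}, the speed-up comes from two independent sources: fast Ore polynomial arithmetic (\S\ref{subsec:ore-computations}) inside \MotiveCoordinates, and Lemma~\ref{lem:matrice-comp-1} for the characteristic polynomial of a polynomial matrix over~$K$.

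First I would redo the analysis of \MotiveCoordinates. At each recursive call on an input of $\tau$-degree~$N$, the two Ore polynomial operations involved---fast exponentiation to produce $\phi_T^m$ (with $m = \lfloor N/(2r)\rfloor$, so that $\phi_T^m$ has $\tau$-degree at most $N/2$) and the right Euclidean division of the input by $\phi_T^m$---can both be carried out, over a finite field of degree $d$, in $\Opower(\SMgeq(N,d)\log q)$ bit operations by the results of~\cite{caruso-leborgne-2}. The bit complexity $C(n)$ therefore satisfies $C(n) \leq 2\,C(n/2) + \Opower(\SMgeq(n,d)\log q)$. Because $n \mapsto \SMgeq(n,d)/n$ is non-decreasing by construction, we have $2^k\,\SMgeq(n/2^k,d) \leq \SMgeq(n,d)$ at every recursion depth~$k$, so summing over the $O(\log n)$ levels gives $C(n) = \Opower(\SMgeq(n,d)\log q)$.

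Next I would bound the loop of $r-1$ calls to \MotiveTauAction performed by \MotiveMatrix. At step~$i$, the coordinates handed to \MotiveTauAction encode $\tau^i u$, of $\tau$-degree $n+i$, so Lemma~\ref{lemma:comp-motive-shift-coordinates} controls the call by $O(n+i)$ Frobenius applications and $O(n+i)$ operations in~$K$. Summing over $i = 0, \ldots, r{-}1$ yields $O(nr + r^2)$ of each, which under the complexity model of \S\ref{sssec:complexitymodel} amounts to $\Opower((ndr + dr^2)\log q)$ bit operations; the $dr^2$ term is absorbed by the $dr^\omega$ appearing at the next step since $\omega \geq 2$. Finally, to compute the characteristic polynomial of $M = \MotiveMatrix(u)$, I would invoke Lemma~\ref{lem:matrice-comp-1}: by Corollary~\ref{cor:bound-coeffs}, $M$ is $r \times r$ over $K[T]$ with entries of $T$-degree at most $1 + n/r$, so its characteristic polynomial has coefficients of $T$-degree at most $n+r$. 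The lemma then gives a cost of $\Opower\big((n+r)/d\big) + \Otilde\big((n+r+d)\,r^\omega\big)$ operations in $\Fq$, that is $\Opower\big((n r^\omega + d r^\omega)\log q\big)$ bit operations after absorbing the subdominant contribution.

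Summing the three pieces and adding the one-time $\Otilde(d\log^2 q)$ precomputation for the Frobenius of $K$ (see \S\ref{sssec:complexitymodel}) produces the announced bound. The only genuinely non-routine point is the recurrence for \MotiveCoordinates: one must verify that recomputing $\phi_T^m$ at \emph{every} recursion level does not blow up the total cost, which is precisely where the log-concavity of $\SMgeq$ (guaranteeing $\SMgeq(n,d)/n$ non-decreasing) is used.
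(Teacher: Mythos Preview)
Your argument follows the same approach as the paper and is, in its broad strokes, correct; your treatment of the \MotiveCoordinates\ recursion via the log-concavity of $\SMgeq$ is in fact more explicit than the paper's one-line appeal to $\Opower(\SMgeq(n,d)\log q)$.

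There is one slip. The crude bound $n+r$ you give for the $T$-degree of the coefficients of $\pi(M)$ leads to an $r^{\omega+1}$ term that you then dismiss as ``subdominant,'' but in the regime $r > n+d$ this term is \emph{not} dominated by $nr^\omega + dr^\omega$, and the announced bound would fail. The fix is to use the finer structure of Corollary~\ref{cor:bound-coeffs} rather than just the uniform entrywise bound: in any principal minor the row and column index sets coincide, so the shifts $j-i$ telescope away (exactly as in the proof of Proposition~\ref{prop:borne-coeff-frob-charpoly}), giving the sharp bound~$n$ on every coefficient of $\pi(M)$. Feeding that into Lemma~\ref{lem:matrice-comp-1} yields $\Otilde((n+d)r^\omega)$ operations in~$\Fq$, hence $\Opower((nr^\omega + dr^\omega)\log q)$ bit operations, which is what the paper does.
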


\begin{proof}
  The complexity analysis is similar to that of
  Theorem~\ref{theo:motive-charpoly-comp}, except that the Ore Euclidean division
  of Algorithm~\ref{algo:motive-coordinates} now costs 
  $\Otilde(d \log^2 q) + \Opower(\SMgeq(n, d)\log q)$
  bit operations. The computation of the matrix of $\M(u)$ therefore
  requires 
  \[
    \Otilde(d\log^2 q) +
    \Opower\big((\SMgeq(n, d) + dr(n + r))\log q\big)
  \]
  bit operations.
  Finally, it remains to compute the characteristic polynomial of the matrix. 
  For this, we first notice that all its coefficients of have degree at most $n$ 
  (Corollary~\ref{cor:bound-coeffs}).
  Therefore, using Lemma~\ref{lem:matrice-comp-1}, the computation of the
  characteristic polynomial costs
  $\Opower((n{+}d)r^\omega)$ operations in $\Fq$.
  The theorem follows.
\end{proof}

\begin{rem}
Comparing with the algorithms of~\cite{musleh-schost-2}, we find that 
Algorithm~\ref{algo:motive-endo-charpoly} exhibits a better 
theoretical complexity, except when the degree of $\gamma(T)$ is close
to~$d$ and the rank~$r$ is very small compared to $d$ and $n$; in this
case, the algorithm of~\cite[Theorem 2(1)]{musleh-schost-2} has quadratic 
complexity in $\max(n,d)$, beating the term $\SMgeq(n, d)$.
\end{rem}

When $u$ is the Frobenius endomorphism, 
Algorithm~\ref{algo:motive-endo-charpoly} leads to the algorithm
\FMFF discussed in the introduction, whose complexity is given by
Corollary~\ref{cor:charpoly-finitefield-frobenius}.

\begin{cor}[Variant~\FMFF]
\label{cor:charpoly-finitefield-frobenius}
  If $K$ is a finite field of degree $d$ over $\Fq$,
  Algorithm~\ref{algo:motive-endo-charpoly} computes the
  characteristic polynomial of the Frobenius endomorphism of $\phi$ 
  for a cost of
  \[
    \Otilde(d\log^2 q) +
    \Opower\big(\big(\SMgeq(d, d) + d^2r + dr^\omega\big) \cdot \log q\big)
  \]
  bit operations.
\end{cor}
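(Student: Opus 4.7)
The plan is to derive the corollary as a direct specialization of Theorem~\ref{theo:charpoly-finitefield} to the case $u = F_\phi$. The only fact I need beyond the theorem is that the Frobenius endomorphism $F_\phi = \tau^d$ has $\tau$-degree exactly equal to $d$, which is immediate from the assumption $[K : \Fq] = d$ and the definition of $F_\phi$ given in \S \ref{ssec:backgrounddrinfeld}.

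Substituting $n = d$ in the bound of Theorem~\ref{theo:charpoly-finitefield}, the summand $\SMgeq(n, d)$ becomes $\SMgeq(d, d)$, the summand $ndr$ becomes $d^2 r$, and the summands $nr^\omega + dr^\omega$ both become $dr^\omega$, which collapse into a single $O(dr^\omega)$ term. The additive precomputation cost $\Otilde(d\log^2 q)$ is unchanged, yielding
\[
  \Otilde(d\log^2 q) + \Opower\big(\big(\SMgeq(d, d) + d^2 r + d r^\omega\big)\cdot\log q\big)
\]
as announced. There is no real obstacle: the entire content of the statement is the observation that, for the Frobenius endomorphism, the parameter $n$ coincides with~$d$, so the generic bound of Theorem~\ref{theo:charpoly-finitefield} simplifies as indicated.
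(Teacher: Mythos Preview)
Your proof is correct and follows exactly the same approach as the paper: the authors simply note that the corollary is a direct application of Theorem~\ref{theo:charpoly-finitefield} with $n = d$.
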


\begin{proof}
  This is a direct application of Theorem~\ref{theo:charpoly-finitefield} with
  $n = d$.
\end{proof}

\subsubsection{The case of the Frobenius endomorphism: another approach}
\label{sssec:charpoly-kedlaya-umans}

Below, we present yet another method to compute the characteristic polynomial
of the Frobenius endomorphism $F_\phi$.
This leads to the algorithm
\FMKU, as mentioned in the introduction, which performs better for some
ranges of parameters (at least theoretically).
It is based on the following two remarks:
\begin{itemize}[itemsep=0.5ex,parsep=0ex,topsep=0.5ex]
  \item As the Ore polynomial $\tau^d$ is central in $\Ktau$, and its action on the
    motive can unambiguously be defined as a left or right multiplication.
  \item The left multiplication by $\tau$ on $\M(\phi)$ is a semi-linear
    application, whose matrix is the companion matrix appearing 
    in Equation~\eqref{eq:mult-tau}, which is easy to compute.
\end{itemize}
More precisely, for a nonnegative integer $s$, let $\mu_s$ be the $K[T]$-semi-linear
endomorphism of $\M(\phi)$ defined by $f \mapsto \tau^s f$. We denote its
matrix by $M_s$. In other words, $M_s$ is the matrix whose $j$-th column contains
the coefficients of $\tau^{j+s} \in \M(\phi)$ in the canonical basis.
The matrix $M_1$ is the companion matrix of Equation~\eqref{eq:mult-tau}
and, by definition, the matrix of $\M(u)$ is $M_d$.

For a polynomial $P \in K[T]$ and an integer $s$, we define $P^{\tau^s}$ 
as the polynomial obtained by raising each coefficient of $P$ to power $q^s$.
Similarly, given a matrix $M$ with entries in $K[T]$, we write
$M^{\tau^s}$ for the matrix obtained from $M$ by applying $P \mapsto
P^{\tau^s}$ to each of its entry. A calculation shows that
\[
  M_s = M_1 \cdot M_1^\tau \cdots M_1^{\tau^{s-1}}.
\]
This equation leads to the following \emph{square and multiply}-like formulas:
\begin{align}
  M_{2s} & = M_s \cdot M_s^{\tau^s}, \label{eq:Mseven} \\
  M_{2s + 1} & = M_1 \cdot M_s^{\tau} \cdot M_s^{\tau^{s+1}}. \label{eq:Msodd}
\end{align}
Let $\alpha$ be a generator of $K$ over $\Fq$. Elements of $K$ are 
classically represented as polynomials in $\alpha$ with coefficients in 
$\Fq$ and degree $d-1$. Applying $\tau^s$ to an element 
$\sum_{i=0}^{r-1} a_i \alpha^i \in K$ amounts to applying the 
substitution $\alpha \mapsto \tau^s(\alpha)$. Thus, this can be 
efficiently computed using Kedlaya-Umans' algorithm for modular 
composition \cite{kedlaya-umans} for a cost of 
$\Opower(d \log q)$ bit operations. As mentioned in
\S \ref{sssec:complexitymodel}, an initial precomputation of $\alpha^q$ must be
performed once and for all, for a cost of $\Otilde(d \log^2 q)$ bit operations.

\begin{theo}[Variant~\FMKU]
\label{theo:kedlaya-umans}
  If $K$ is a finite extension of $\Fq$ of degree~$d$, the characteristic
  polynomial of the Frobenius endomorphism of a Drinfeld module $\phi$
  of rank~$r$ can be computed for a cost of
  \[
    \Otilde(d\log^2 q) +
    \Opower\big((d^2 r^{\omega - 1} + d r^\omega)\cdot \log q\big)
  \]
  bit operations.
\end{theo}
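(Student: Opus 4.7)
The plan is to carry out the square-and-multiply scheme described in equations (\ref{eq:Mseven}) and (\ref{eq:Msodd}): starting from the companion matrix $M_1$, which can be read off directly from the definition of $\phi_T$ and Equation~(\ref{eq:mult-tau}), I would compute $M_d$ in $O(\log d)$ doubling steps and then extract its characteristic polynomial, which, by Theorem~\ref{th:norm-endomorphism}, coincides with the characteristic polynomial of the Frobenius endomorphism of $\phi$.

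Before entering the loop, I would perform the Kedlaya--Umans precomputation, writing $\alpha^q$ on the monomial basis of $K = \Fq[\alpha]$, for a one-off cost of $\Otilde(d\log^2 q)$ bit operations; after this, every application of $\tau^s$ to a single element of $K$ costs $\Opower(d\log q)$ bit operations via fast modular composition, for any $s \geqslant 1$. Lemma~\ref{lemma:bound-coeffs} applied to the Ore polynomials $\tau^{s+j}$ (for $0 \leqslant j < r$) shows that the entries of $M_s$ are polynomials in $K[T]$ of degree at most $\lfloor s/r \rfloor + 1$, so that $M_s$ is represented by $O(rs + r^2)$ coefficients in~$K$. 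Passing from $M_s$ to $M_{2s}$ (or $M_{2s+1}$) therefore costs $\Opower((rs + r^2)\, d \log q)$ bit operations for the coefficient-wise Frobenius applications, plus $\Otilde(r^\omega \cdot s/r) = \Otilde(r^{\omega-1} s)$ operations in $K$ for the one or two polynomial matrix multiplications, that is $\Otilde(r^{\omega-1} s d \log q)$ bit operations.

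Because $s$ doubles at each of the $O(\log d)$ iterations up to $s = d$, the geometric sums are dominated by the final iteration, and $\omega \geqslant 2$ (so $r \leqslant r^{\omega-1}$) absorbs the Frobenius cost into the matrix multiplication cost. The total cost of computing $M_d$ is thus $\Opower(r^{\omega-1} d^2 \log q)$ bit operations. By Proposition~\ref{prop:borne-coeff-frob-charpoly}, the coefficients of $\pi(M_d)$ have degree at most $d$; applying Lemma~\ref{lem:matrice-comp-1} with $L = K$, $s = r$ and $n = d$ then produces $\pi(M_d)$ for a cost of $\Otilde(d r^\omega)$ operations in $\Fq$, i.e.\ $\Otilde(d r^\omega \log q)$ bit operations. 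Summing the three contributions yields the announced total.

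The main thing to be careful about is complexity bookkeeping: one must verify that the coefficient degrees of the intermediate matrices $M_s$ remain $O(s/r)$ even after successive products (they do, because $M_s$ represents left multiplication by $\tau^s$ on $\M(\phi)$, to which Lemma~\ref{lemma:bound-coeffs} applies uniformly), and that a single precomputation of $\alpha^q$ is enough to make every subsequent Kedlaya--Umans call cost $\Opower(d \log q)$, uniformly in~$s$. There is no deeper obstacle beyond these accountings.
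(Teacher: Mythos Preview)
Your proposal is correct and follows essentially the same route as the paper's own proof: square-and-multiply on the matrices $M_s$ using \eqref{eq:Mseven}--\eqref{eq:Msodd}, Kedlaya--Umans for the coefficient-wise Frobenius, polynomial matrix multiplication for the products, and Lemma~\ref{lem:matrice-comp-1} for the final characteristic polynomial. One point of bookkeeping is slightly off, however: knowing $\alpha^q$ alone is \emph{not} enough to apply $\tau^s$ in $\Opower(d\log q)$ for arbitrary $s$---modular composition requires $\tau^s(\alpha)=\alpha^{q^s}$ as input. The paper handles this by carrying the pair $\mathcal P_s=(M_s,\tau^s(\alpha))$ through the recursion and updating $\tau^{2s}(\alpha)$ from $\tau^s(\alpha)$ by a single self-composition; with that amendment your cost accounting goes through exactly as written.
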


\begin{proof}
  Let $\mathrm{C}(s)$ be the cost, counted in bit operations, of computing the pair 
  $\mathcal P_s = (M_s, \tau^s(\alpha))$.
  To compute $\mathcal P_{2s}$ and $\mathcal P_{2s+1}$, one uses the
  recurrence relations~\eqref{eq:Mseven} and~\eqref{eq:Msodd}.
  As $M_s$ has $r^2$ polynomial coefficients of degree at most
  $s/r$ (Lemma~\ref{lemma:bound-coeffs}), computing $\tau^s(M)$ requires $O(sr)$ modular compositions of degree
  $d$. As previously mentioned, we use Kedlaya-Umans' algorithm~\cite{kedlaya-umans} for this task, leading to a total cost of
  $\Opower(nrd{\cdot}\log q)$ bit operations. 
  Similarly $\tau^{2s}(\alpha)$ can be computed by composing $\tau^s(\alpha)$
  with itself; using again Kedlaya-Umans' algorithm, this can be done with
  $\Opower(d{\cdot}\log q)$ bit operations. Moreover, the matrix product
  $M_s {\cdot} \tau^s(M)$ requires $\Otilde (dsr^{\omega-1})$ extra operations in~$\Fq$.
  Given that one operation in $\Fq$ corresponds to $\Otilde(\log q) \subset \Opower(\log q)$ bit operations,
  we conclude that
  \[
    \mathrm{C}(2s) \leqslant \mathrm{C}(s) + \Opower(dsr^{\omega - 1}\log q).
  \]
  A similar analysis provides a similar bound for $\mathrm{C}(2s+1)$.
  Solving the recurrence, we obtain
  $\mathrm{C}(s) \in \Opower(d s r^{\omega - 1} \log q)$. 
  Therefore, the computation of $\M(u)$ can be done
  with $\Opower(d^2 r^{\omega - 1} \log q)$ bit operations.

  Finally, the characteristic polynomial of the matrix of $\M(u)$ is computed as
  previously, using Lemma~\ref{lem:matrice-comp-1}, for a cost of $\Otilde(d r^\omega)$ operations in $\Fq$,
  which is no more than $\Opower(d r^\omega{\cdot}\log q)$ bit
  operations. Adding both contributions and taking into account the
  precomputation of $\alpha^q$, we obtain the corollary.
\end{proof}

\subsection{Algorithms: the case of a general curve}

We now drop the assumption that $A = \Fq[T]$.
In full generality, it is not true that the motive $\M(\phi)$
is free over $A_K$, and the matrix of $\M(u)$ is not defined. One can
nevertheless easily work around this difficulty, by extending scalars
to the fraction field of $A_K$, denoted by $\Frac(A_K)$. Indeed, 
$\Frac(A_K) \otimes_{A_K} \M(\phi)$ is obviously free over $\Frac
(A_K)$ given that the latter is a field. It is also clear that 
the determinants of $\M(u)$ and $\Frac(A_K) \otimes_{A_K} \M(u)$ are
equal.

Our first need is to design an algorithm for computing a basis
of $\Frac(A_K) \otimes_{A_K} \M(\phi)$. For this, we will rely on the
case of $\Fq[T]$, previously treated.
We consider an element $T \in A$, $T \not\in \Fq$. Since the underlying 
curve $C$ is absolutely irreducible, $T$ must be transcendental 
over~$\Fq$. This gives an embedding $\Fq[T] \hookrightarrow A$, which 
extends to an inclusion of fields $K(T) \hookrightarrow \Frac(A_K)$. The 
resulting extension is finite of degree $t = \deg(T)$.
Let 
$(b_1, \ldots, b_t)$ be a basis of $\Frac(A_K)$ over $K(T)$.

In what follows, $T$ and $(b_1, 
\ldots, b_t)$ are assumed to be known. Finding them depends on the way 
$C$ is given, but we believe that our hypothesis is reasonable. For instance, if $C$ is presented as a plane smooth
curve, \emph{i.e.} if $A$ is given as
\[
  A = \Fq[X,Y] / P(X,Y)
  \quad \text{with} \quad
  P \in \Fq[X,Y]
\]
one may choose $T = X$, $t = \deg_Y P$ and $b_i = Y^{i-1}$ for $1
\leq j \leq t$.

\begin{rem}
Let $g$ be the genus of $C$.
The Riemann-Roch theorem indicates that the Riemann-Roch space $\mathcal 
L\big((g{+}1){\cdot}[\infty]\big)$ has dimension at least~$2$. Hence it 
must contain a transcendental function, which shows that there always 
exists $T$ for which $t \leq g{+}1$.
In practice, $T$ can be computed through various different algorithms
(see~\cite{legluher-spaenlehauer,abelard-couvreur-lecerf} and the
references therein).
\end{rem}

Now given a Drinfeld module $\phi : A \to K\{\tau\}$ over $A$, we
restrict it to $\Fq[T]$ \emph{via} the
embedding $\Fq[T] \to A$,
obtaining a second Drinfeld module
$\phi' : \Fq[T] \to K\{\tau\}$ (see \S \ref{sssec:restriction}). Then 
$\M(\phi') = \M(\phi)$, with the same structure of $K[T]$-modules.
Moreover, if~$\phi$ has rank~$r$, we have
\[
\deg \phi'_T = \deg \phi_T = r \cdot \deg(T) = rt
\]
showing that $\phi'$ has rank $rt$. The family $(1, \tau,
\ldots, \tau^{rt-1})$ is a basis of $\M(\phi)$ over $K[T]$, and we can
use Algorithm~\ref{algo:motive-coordinates} to compute the coordinates
of any element of $\M(\phi)$ with respect to this basis.
Let $\coord : \M(\phi) \to K[T]^{rt}$ be the map taking an element
of $\M(\phi)$ to the column vector representing its coordinate in
the above basis. Both $\coord$ and $\coord^{-1}$ are efficiently
computable.

Let $e_1$ be an arbitrary nonzero element of $\M(\phi)$, \emph{e.g.}
$e_1 = 1$. A $K(T)$-basis of the $\Frac(A_K)$-line generated by $e_1$
is explicitly given by the family $e_1 \phi_{b_1}, \ldots, e_1 
\phi_{b_t}$. For $1 \leq j \leq t$, we set $C_{1,j} = \coord(e_1
\phi_{b_j})$ and we form the following matrix, with $rt$ rows and $t$
columns:
\[
  M_1 = \left( \begin{matrix}
   C_{1,1} & \cdots & C_{1,t}
  \end{matrix} \right).
\]
We now consider a column vector $E_2$ outside the image of $M_1$
and define $e_2 = \coord^{-1}(E_2)$; $e_2$ is not 
$\Frac(A_K)$-collinear to $e_1$, and we have constructed a free
family of cardinality~$2$. We then continue the same process,  by 
setting $C_{2,j} = \coord(e_2 \phi_{b,j})$ and considering the $rt \times 
2t$ matrix
\[
  M_2 = \left( \begin{matrix}
   C_{1,1} & \cdots & C_{1,t} &
   C_{2,1} & \cdots & C_{2,t}
  \end{matrix} \right).
\]
We pick a column vector $E_3$ outside the image of $M_2$ and
define $e_3 = \coord^{-1}(E_3)$, as well as $M_3$. We repeat this construction until
we reach $e_r$. The vectors $e_1, \ldots, e_r$ being linearly
independent over $\Frac(A_K)$, they form a $\Frac(A_K)$-basis 
of $\Frac(A_K) \otimes_{A_K} \M(\phi)$. The matrix $M_r$ is nothing but the
change-of-basis matrix from the canonical $K[T]$-basis of $\M(\phi)$ to the
newly computed basis 
$\B = (e_1 \phi_{b_1}, \ldots, e_1 \phi_{b_t}, \ldots, 
  e_r \phi_{b_1}, \ldots, e_r \phi_{b_t})$. If $f \in \M(\phi)$, the product 
$M_{r}^{-1} \cdot \coord^{-1}(f)$ gives the coordinates of $f$ in $\B$.
From this, we eventually read the coordinates of $f$ in the
$\Frac(A_K)$-basis $(e_1, \ldots, e_r)$.

To summarize, we have constructed a $\Frac(A_K)$-basis of $\Frac(A_K)
\otimes_{A_K} \M(\phi)$
and designed an algorithm to compute coordinates in this
basis. Using these inputs as primitives, it is now straightforward
to extend the results of \S \ref{subsec:algo-motive} to the case of a general
curve.

\section{Norms of isogenies}
\label{sec:isogenies}

In Section~\ref{sec:endomorphisms}, we have only covered the case of
\emph{endomorphisms} between Drinfeld modules. We now consider
general morphisms and isogenies. Let $\phi, \psi$ be two rank $r$ Drinfeld
$A$-modules, and let $u : \phi \to \psi$ be an isogeny.
In this setting, the caracteristic polynomial is no longer defined but 
the norm of $u$ continues to make sense (see 
\S \ref{ssec:norm}); we recall that it is an ideal of $A$, 
denoted by $\norm(u)$.
The purpose of this section is twofold: first, to establish explicit formulas
that recover $\norm(u)$ at the motive level, and secondly, to offer efficient
algorithms for the computation of $\norm(u)$ using those formulas.

\subsection{Reading norms on the motive}
\label{ssec:normonmotive}

In our general context, the determinant of $u$ can no longer be defined as
previously. In
§\ref{sssec:determinants-proj}, we set up important definitions and
statements about determinants in projective modules. Our main results are
stated in \S \ref{sssec:norm-main-results}.

\subsubsection{Determinants on projective modules}
\label{sssec:determinants-proj}

Let $\A$ be a Dedekind domain. Let $M, M'$ be two finitely generated 
projective $\A$-modules of rank $n$. Let $f: M \to M'$ be
an $\A$-linear mapping. The morphism $f$ gives rise to the $\A$-linear map $\det f :
\det M \to \det M'$. However, when $f$ has different domain and codomain,
\emph{i.e.} $M \neq M'$, it no
longer makes sense to interpret $\det f$ as the multiplication by some
scalar.
Instead, we define the ``determinant'' of $f$, denoted by $\detideal f$,
as the ideal quotient 
$(\det M' : \Im(\det f))$, that is
\[
  \detideal f
  = (\det M' : \Im(\det f))
  = \left\{a \in \A: a \det M' \subset \Im(\det f)\right\}.
\]
Equivalently $\detideal f$ is the annihilator ideal of the cokernel
of $\det f$.

Since $\A$ is a Dedekind domain, $\detideal f$ can be decomposed as a
product
\[
  \detideal f = \prod_{\q} \q^{v_\q(\detideal f)},
\]
where the product runs over all maximal ideals $\q$ of $\A$ and the exponent
$v_\q(\detideal f)$ is a nonnegative integer referred to as the \emph{$\q$-adic
valuation} of $\detideal f$.

For the purpose of this article, it is fundamental to notice that
$v_\q(\detideal f)$ can be found out by computing the classical determinant of an
actual matrix. Indeed, letting as before $\Aq$ denote the
completion\footnote{When studying projective modules, it is more common to
consider the
localization $\A_{(\q)}$ instead of the completion $\A_\q$. Although the
first setting is simpler, the second better suits our needs.} of
$\A$ at $\q$, we define $M_\q = \Aq \otimes_\A M$ and 
$M'_\q = \Aq \otimes_\A M'$. The map $f$ induces a $\Aq$-linear morphism $f_\q : M_\q \to
M'_\q$. We deduce from the flatness of $\Aq$ over $\A$ that
\begin{equation}
\label{eq:deltafq}
  \detideal f_\q = \Aq \otimes_\A \detideal f 
               = (\q{\cdot} \Aq)^{v_\q(\detideal f)},
\end{equation}
where $\detideal f_\q$ is defined, similarly to $\detideal f$, as the
annihilator ideal of the cokernel of $f_\q$.

On the other hand, we know that $\Aq$ is a principal domain. Hence both
$M_\q$ and $M'_\q$ are free of rank $n$ over $\Aq$. We choose bases
$\mathcal B_{(\q)}$ and $\mathcal B'_{(\q)}$ of $M_\q$ and $M'_\q$
respectively, and let $F_\q$ denote the matrix of $f_\q$ in these bases.
It follows from the definition
that $\detideal f_\q = \det(F_\q) \: \Aq$. Comparing with
Equation~\eqref{eq:deltafq}, we finally conclude that
\[
  v_\q(\detideal f) = v_\q(\det F_\q).
\]
We notice in particular that, although the determinant itself depends on the
choices of $\mathcal B_{(\q)}$ and $\mathcal B'_{(\q)}$, its $\q$-adic
valuation does not. Indeed, changing $\mathcal B_{(\q)}$ (resp. $\mathcal
B'_{(\q)}$) boils down to multiplying $F_\q$ by an invertible matrix on the
left (resp. on the right), which only multiplies the determinant a unit, and as
such, does not affect its $\q$-adic valuation.

In a similar fashion, one can relate $\detideal f$ to the Euler-Poincaré
characteristic of the cokernel of~$f$, which is essential to establish
our main theorem.

\begin{prop}\label{prop:coker-delta}
  We have
  \[
    \detideal f = \chi_{\A}(\Coker f).
  \]
\end{prop}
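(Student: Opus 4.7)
The plan is to reduce the statement to a local computation at each maximal ideal of $\A$. Since both $\detideal f$ and $\chi_\A(\Coker f)$ are ideals of the Dedekind domain $\A$, they are determined by their $\q$-adic valuations as $\q$ ranges over the maximal ideals of $\A$, so it suffices to check the identity place by place.

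Fix a maximal ideal $\q$ of $\A$ and complete. The paper has already observed that $\detideal f$ commutes with flat scalar extension (Equation~\eqref{eq:deltafq}), and the same is explicitly stated for the Euler-Poincaré characteristic; applied to the finitely generated $\A$-module $\Coker f$, one obtains $\chi_{\Aq}(\Coker f_\q) = \Aq \otimes_\A \chi_\A(\Coker f)$, where $\Coker f_\q = \Coker(f_\q)$ thanks to the flatness of $\Aq$ over $\A$. It is therefore enough to prove the analogous equality $\detideal f_\q = \chi_{\Aq}(\Coker f_\q)$ over the discrete valuation ring $\Aq$.

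Over $\Aq$, both $M_\q$ and $M'_\q$ are free of rank $n$, so I would invoke the Smith Normal Form for matrices over a DVR: in suitable bases, the matrix $F_\q$ of $f_\q$ is diagonal with entries $\pi^{a_1}, \ldots, \pi^{a_n}$, where $\pi$ is a uniformizer of $\Aq$ and the $a_i$ are nonnegative integers (with the convention $a_i = +\infty$ if $f$ is not injective, in which case both sides turn out to be zero). In these bases, the cokernel decomposes as $\bigoplus_{i=1}^n \Aq/\pi^{a_i}\Aq$, and the two axioms defining $\chi_{\Aq}$ immediately give $\chi_{\Aq}(\Coker f_\q) = (\pi^{a_1 + \cdots + a_n}) \Aq$. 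On the other hand, the formula $v_\q(\detideal f) = v_\q(\det F_\q)$ recalled right before the statement yields $\detideal f_\q = (\pi^{a_1 + \cdots + a_n})\Aq$, matching the previous computation.

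I do not expect a serious obstacle here: the real content lies in the two reduction steps (to each $\q$, then to the completion), both of which rely on general properties already stated in the excerpt, and the final local identification is just the Smith Normal Form computation. The only points requiring some care are justifying the compatibility of $\chi_\A$ with the completion at $\q$ on a possibly non-torsion module, and handling the degenerate case where $f$ fails to be injective uniformly; both are straightforward once one notes that $\Coker f$ sits in a short exact sequence with the projective modules $M$ and $M'$.
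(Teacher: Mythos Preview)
Your proposal is correct and follows essentially the same route as the paper: reduce to each maximal ideal $\q$, pass to the completion $\Aq$, diagonalize $f_\q$ via the structure theorem (Smith Normal Form) for finitely generated modules over a PID, and read off both sides as $(\delta_1\cdots\delta_n)\Aq$. The only difference is cosmetic---you comment on the degenerate non-injective case, which the paper leaves implicit.
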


\begin{proof}

  As we have seen, the Euler-Poincaré characteristic commutes with
  localization. Therefore, it is enough to prove that $\detideal f_\q =
  \chi_{\Aq}(\Coker f_\q)$ for each maximal ideal $\q$ of $\A$.

  Let then $\q$ be a maximal ideal of $\A$. It follows from the structure
  theorem of finitely generated modules over principal domains that there exist
  bases $\mathcal B_\q$ and $\mathcal B'_\q$ in which the matrix $F_\q$ of
  $f_\q$ is diagonal. If $\delta_1, \ldots, \delta_r$ denote its diagonal
  coefficients, we have
  \[
    \Coker f_\q \simeq 
    \big(\Aq / \delta_1 \Aq\big) \times \cdots \times
    \big(\Aq / \delta_r \Aq\big).
  \]
  Hence
  \[
    \chi_{\Aq}(\Coker f_\q) 
    = \delta_1 \cdots \delta_r \cdot \Aq 
    = (\det F_\q) {\cdot} \Aq = \detideal f_\q
  \]
  which is what we wanted to prove.
\end{proof}

\subsubsection{Main results}
\label{sssec:norm-main-results}

We may now state and prove the main theoretical results of this subsection.

\begin{theo}\label{th:norm-general}
  Let $\phi$ and $\psi$ be two Drinfeld modules, and let $u: \phi \to \psi$ be
  an isogeny. We have
  \[
    \norm(u) = \detideal \M(u).
  \]
\end{theo}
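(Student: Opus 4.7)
The plan is to factor $u = u_s \circ \tau^{h(u)}$ into its separable and purely inseparable parts (as described in \S \ref{ssec:backgrounddrinfeld}) and to deal with each piece in turn. Writing $\M(u) = \M(\tau^{h(u)}) \circ \M(u_s)$ as a composition of two injective $A_K$-linear maps (each being right-multiplication by a nonzero Ore polynomial), the snake lemma yields the short exact sequence
\[
  0 \to \Coker \M(u_s) \to \Coker \M(u) \to \Coker \M(\tau^{h(u)}) \to 0.
\]
Combining this with Proposition~\ref{prop:coker-delta}, the multiplicativity of $\chi_{A_K}$ in short exact sequences, and the multiplicativity of $\norm$, the theorem reduces to two special cases: $u$ separable, and $u = \tau^h$ with $h$ a multiple of $\deg(\p)$.

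For the separable case, the strategy is to transport the computation of $\chi_{A_K}(\Coker \M(u))$ to that of $\chi_A(\ker \E(u))$ via the duality of Theorem~\ref{th:pairing}. I first choose a nonzero ideal $\a \subset A$ coprime to $\p$ that annihilates both $\ker \E(u)$ and $\Coker \M(u)$. This is possible because $\ker \E(u)$ is a finite $A$-module (by separability of $u$) with no $\p$-torsion (since $\phi_a$ is purely inseparable for $a \in \p$), and because $u$ right-divides $\phi_a$ for any $a \in \Ann_A(\ker \E(u))$, so the annihilator of $\ker \E(u)$ also annihilates $\Coker \M(u)$. With such an $\a$, the natural maps $\ker \E_\a(u) \to \ker \E(u)$ and $\Coker \M(u) \to \Coker \M_\a(u)$ are isomorphisms. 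Theorem~\ref{th:pairing} then identifies $(\ker \E_\a(u))_\Kbar$ with $\ker(\M_\a(u)^*)_\Kbar = ((\Coker \M_\a(u))^*)_\Kbar$ as $A_\Kbar$-modules, and Lemma~\ref{lem:isomorphisme-dual} lets me remove the $^*$ without changing Euler--Poincaré characteristics. Using compatibility of $\chi$ with flat base change together with faithful flatness of $A_K \to A_\Kbar$, this gives $\chi_{A_K}(\Coker \M(u)) = \chi_A(\ker \E(u)) \cdot A_K = \norm(u) \cdot A_K$.

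For the purely inseparable case $u = \tau^h$ with $h \in \deg(\p) \cdot \NN$, I compute $\Coker \M(\tau^h) = \M(\phi)/\M(\phi)\tau^h$ directly through the descending filtration $\M(\phi) \supset \M(\phi)\tau \supset \cdots \supset \M(\phi)\tau^h$. For $0 \leqslant i < h$, the $i$-th graded piece is one-dimensional over $K$; using the commutation rule $\tau \cdot a = a^q \cdot \tau$, one finds that $a \in A$ acts on it by $\gamma(a)^{q^i}$. Hence this graded piece is isomorphic to $A_K/\mathfrak{P}^{(i)}$, where $\mathfrak{P}^{(i)}$ is the kernel of the map $A_K \to K$ sending $a \otimes k$ to $\gamma(a)^{q^i} k$. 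Since $\gamma(A) = A/\p \simeq \Fq^{\deg(\p)}$, the sequence $(\mathfrak{P}^{(i)})_i$ is periodic of period $\deg(\p)$, and $\mathfrak{P}^{(0)}, \ldots, \mathfrak{P}^{(\deg(\p)-1)}$ are the $\deg(\p)$ distinct primes of $A_K$ above $\p$, each with residue field $K$. Because $A_K/\p A_K \simeq A/\p \otimes_{\Fq} K \simeq K^{\deg(\p)}$, their product equals $\p A_K$. Summing the contributions of the filtration,
\[
  \chi_{A_K}\bigl(\Coker \M(\tau^h)\bigr) = \prod_{i=0}^{h-1} \mathfrak{P}^{(i)} = (\p A_K)^{h/\deg(\p)} = \norm(\tau^h) \cdot A_K.
\]

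The main obstacle I foresee is the bookkeeping in the purely inseparable case: one has to identify the primes of $A_K$ above $\p$, verify their periodicity along the filtration, and check that their product realizes $\p A_K$. The separable case is conceptually driven by Theorem~\ref{th:pairing} and Lemma~\ref{lem:isomorphisme-dual}, and its delicate point is the simultaneous choice of $\a$ annihilating both $\ker \E(u)$ and $\Coker \M(u)$ while remaining coprime to $\p$.
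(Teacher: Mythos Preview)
Your reduction and your separable case follow the paper's argument closely. The purely inseparable case, however, is handled differently: the paper treats only $u=\tau^{\deg\p}$ and argues indirectly, first using Theorem~\ref{th:pairing} to show that $\chi_{A_K}(\Coker\M(u))$ is supported only above~$\p$, and then pinning down the exponent by the degree equality $\deg(\detideal\M(u))=\dim_K\Coker\M(u)=\deg(u)=\deg\p$. Your filtration computation is more explicit and equally valid; it handles all $\tau^h$ at once and makes the primes of $A_K$ above~$\p$ visible, at the cost of a little extra bookkeeping that the paper's degree argument avoids.

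There is one genuine gap. In the separable case you assert that $\ker\E(u)$ has no $\p$-torsion ``since $\phi_a$ is purely inseparable for $a\in\p$''. Neither the conclusion nor the justification is correct: for $a\in\p$ the Ore polynomial $\phi_a$ has positive height but is generally not purely inseparable. Concretely, take $A=\Fq[T]$, $\gamma(T)=0$, $\phi_T=\tau+\tau^2$, and the separable endomorphism $u=1+\tau$. For $z\in\ker\E(u)$ one has $z^q=-z$, hence $T\cdot z=\phi_T(z)=z^q+z^{q^2}=-z+z=0$, so $\ker\E(u)$ is entirely $(T)$-torsion and no $\a$ coprime to $\p$ annihilates it. (The paper's proof is silent on this point too: it applies Theorem~\ref{th:pairing} with an annihilator $a$ that is not verified to be away from the characteristic.) A clean patch, in the spirit of the paper's inseparable case, is to note that the proof of Proposition~\ref{prop:pairing} gives the \emph{injection} $\E_\a(\phi)_{\Kbar}\hookrightarrow\M_\a(\phi)^\ast_{\Kbar}$ without any coprimality hypothesis; running your argument with this injection yields $\norm(u)\,A_K\supseteq\chi_{A_K}(\Coker\M(u))$, and equality then follows from the degree count $\deg_K\big(A_K/\norm(u)A_K\big)=\dim_{\Fq}\ker\E(u)=\deg(u)=\dim_K\Coker\M(u)$.
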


\begin{proof}
  Writing $u$ as the product of a purely inseparable isogeny
  with a separable isogeny, and noticing that (1)~$\detideal$ is
  multiplicative and (2)~$\M$ is functorial, we are reduced to prove the theorem
  when $u = \tau^{\deg(\p)}$ on the one hand and when $u$ is separable
  on the other hand.

  \textbf{Purely inseparable case.}
  We assume that $u = \tau^{\deg(\p)}$.
  We follow Gekeler's idea for proving the multiplicativity of the norm
  \cite[Lemma~3.10]{gek91}.
  Let $\q \subset \A$ be a maximal ideal away from the characteristic.
  Note that the map $\E_\q(u): \E_\q(\phi) \to \E_\q(\psi)$ is an
  isomorphism because $\tau$ is coprime with the right gcd of
  $\phi_q$ for $q$ varying in $\q$.
  By Theorem~\ref{th:pairing}, we conclude that
  $\M_\q(u) : \M_\q(\psi) \to \M_\q(\phi)$ is an isomorphism as
  well, showing that $\q$ is coprime with $\chi_{A_K}(\Coker \M(u))$.
  Consequently, $\detideal \M(u)$ is a power of~$\p$.
  On the other hand, observe that, by definition,
  \[
    \deg(u) = \dim_K(\Coker \M(u)) = \deg(\chi_{A_K}(\M(u))).
  \]
  Proposition~\ref{prop:coker-delta} then implies that
  $\deg(\detideal \M(u)) = \deg(u) = \deg(\p)$. Putting all together,
  we conclude that $\detideal \M(u) = \p = \norm(u)$.

  \textbf{Separable case.}

  Given that $u$ is nonzero, the kernel of the $A$-linear map $\E(u)$ is a torsion 
  $A$-module. Let $a \in A$ such that $a \cdot \ker \E(u) = 0$.
  For all elements $z \in \Kbar$, we then have the following
  implication: if $u(z) = 0$, then $\phi_a(z) = 0$. Since $u$ is
  separable, this implies that $u$ right-divides $\phi_a$, from
  which we deduce that $a$ annihilates $\Coker \M(u)$ as well.
  Applying successively the right exact functor 
  $- \otimes_A A/aA$ and the left exact functor $\Hom_K(-, \Kbar)$
  to the exact sequence of $A_K$-modules
  \[
    0 \; \to \; \M(\psi) \; \to \; \M(\phi) \; \to \; \Coker \M(u) \; \to \; 0,
  \]
  we get the following exact sequence of $A_{\Kbar}$-modules
  \[
    0
    \; \to \; \big(\Coker\M(u)\big)^\ast \otimes_K \Kbar
    \; \to \; \M_a(\phi)^\ast \otimes_K \Kbar
    \; \to \; \M_a(\psi)^\ast \otimes_K \Kbar.
  \]
  This shows that 
  \[
    \big(\Coker\M(u)\big)^\ast \otimes_K \Kbar 
    \simeq \ker \big(\M_a(u)^\ast\big) \otimes_K \Kbar
    \simeq \ker \big(\M_a(u)^\ast \otimes_K \Kbar\big).
  \]
  From Theorem~\ref{th:pairing}, we then derive the following isomorphisms of
  $A_{\Kbar}$-modules:
  \begin{align*}
    \big(\Coker\M(u)\big)^\ast \otimes_K \Kbar 
    & \simeq \Ker\big(\E_a(u) \otimes_\Fq \Kbar\big) \\
    & = \Ker\big(\E(u) \otimes_\Fq \Kbar\big) \\
    & \simeq \Ker \E(u) \otimes_\Fq \Kbar.
  \end{align*}
  Consequently, $u$ being separable, we find that
  \[ 
    \norm(u) = \chi_A\big(\Ker\E(u)\big) = \chi_{A_K}\big((\Coker \M(u))^\ast\big).
  \]
  Using finally Lemma~\ref{lem:isomorphisme-dual}, we end up with
  $\norm(u) = \chi_{A_K}(\Coker \M(u)) = \detideal \M(u)$, proving the theorem.
\end{proof}

An interesting consequence of Theorem~\ref{th:norm-general} is 
a compatibility result between norms of isogenies and restrictions
of Drinfeld modules (see \S \ref{sssec:restriction}), which will be
particularly useful to us when Drinfeld $A$-modules are restricted 
to $A' = \Fq[T]$.

\begin{cor}
\label{cor:restrictionnorm}
Let $\gamma' : A' \to K$ be a second base for Drinfeld modules 
satisfying the assumptions of \S \ref{ssec:backgrounddrinfeld},
coming together with an injective homomorphism of rings $f : A' \to A$ 
such that $\gamma = \gamma' \circ f$. 
Let $\phi, \psi : A \to K\{\tau\}$ be two Drinfeld $A$-modules and
let $u : \phi \to \psi$ be a morphism. Then
\[
  \norm\big(f^* u\big) = N_{A/A'}\big(\norm(u)\big)
\]
where $N_{A/A'} : A \to A'$ is the norm map from $A$ to $A'$
\emph{via} $f$.
\end{cor}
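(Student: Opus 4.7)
The plan is to apply Theorem~\ref{th:norm-general} to both sides and reduce the statement to the classical compatibility between Euler--Poincaré characteristics and norm maps in a finite extension of Dedekind domains. By Theorem~\ref{th:norm-general} applied to $f^*u$, viewed as an isogeny of Drinfeld $A'$-modules, together with Proposition~\ref{prop:coker-delta}, we have
\[
  \norm(f^*u) \cdot A'_K \;=\; \detideal \M(f^*u) \;=\; \chi_{A'_K}\big(\Coker \M(f^*u)\big).
\]
From the description recalled in \S\ref{sssec:restriction}, $\M(f^*u)$ and $\M(u)$ are the same underlying map of $K$-vector spaces, with only the ring acting on source and target being restricted along $A'_K \hookrightarrow A_K$; hence $\Coker \M(f^*u)$ is literally $\Coker \M(u)$ equipped with its restricted $A'_K$-action. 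Applying Theorem~\ref{th:norm-general} to $u$ itself gives $\norm(u) \cdot A_K = \chi_{A_K}(\Coker \M(u))$. After extension of scalars to $A'_K$, the corollary therefore reduces to the single identity
\[
  \chi_{A'_K}(N) \;=\; N_{A_K/A'_K}\big(\chi_{A_K}(N)\big)
\]
for $N = \Coker \M(u)$, combined with the base-change compatibility $N_{A_K/A'_K}(\norm(u)\cdot A_K) = N_{A/A'}(\norm(u)) \cdot A'_K$ (which holds because $A' \to A$ is finite flat and norms commute with flat base change), and the injectivity of the extension-of-ideals map $A' \to A'_K$ for descending the final equality back to $A'$.

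The heart of the proof is therefore the following general lemma, which I would isolate: for a finite flat extension $B/A$ of Dedekind domains and a finitely generated torsion $B$-module $N$, one has $\chi_A(N) = N_{B/A}(\chi_B(N))$. To prove it, I would invoke Noether's structure theorem for finitely generated torsion modules over a Dedekind domain and use the fact that both sides of the identity are multiplicative on short exact sequences (additivity of $\chi$ and multiplicativity of $N_{B/A}$ on ideals). It then suffices to establish the identity for $N = B/\mathfrak{q}$ where $\mathfrak{q}$ is a maximal ideal of $B$. Setting $\mathfrak{p} = \mathfrak{q} \cap A$ and $f = [B/\mathfrak{q} : A/\mathfrak{p}]$, one has $\chi_B(B/\mathfrak{q}) = \mathfrak{q}$ and $N_{B/A}(\mathfrak{q}) = \mathfrak{p}^f$ by the standard definition of the norm on primes, while the isomorphism $B/\mathfrak{q} \simeq (A/\mathfrak{p})^f$ of $A$-modules yields $\chi_A(B/\mathfrak{q}) = \mathfrak{p}^f$, matching the right-hand side.

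The main obstacle is bookkeeping rather than depth: one must carefully track which of the rings $A$, $A'$, $A_K$, $A'_K$ each ideal lives in, and verify that the natural extension and norm maps between them commute as expected. The underlying commutative algebra is entirely classical and, once the diagram of scalar extensions is set up cleanly, the computation falls out immediately.
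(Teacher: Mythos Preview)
Your proof is correct and complete, but it takes a genuinely different route from the paper's. Both arguments invoke Theorem~\ref{th:norm-general} on each side, reducing the statement to a compatibility between $\detideal$ (or equivalently $\chi$) under restriction of scalars from $A_K$ to $A'_K$. The paper then proceeds locally: it completes at a prime $\p$ of $A'_K$, chooses bases of $\M(\phi)_\p$ and $\M(\psi)_\p$ over the semilocal ring $A_{K,\p}$ together with a basis of $A_{K,\p}$ over $A'_{K,\p}$, writes the matrix of $\M(f^*u)$ as the block matrix $D = (M(c_{ij}))$ built from the matrix $C=(c_{ij})$ of $\M(u)$, and invokes Silvester's block-determinant identity $\det D = N_{A_{K,\p}/A'_{K,\p}}(\det C)$ to conclude. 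You instead stay global, pass through Proposition~\ref{prop:coker-delta}, and prove directly the module-theoretic identity $\chi_{A'_K}(N) = N_{A_K/A'_K}(\chi_{A_K}(N))$ by d\'evissage to the case $N=B/\q$, which is a clean textbook computation.

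Your approach is arguably more conceptual and self-contained (no external determinant identity needed), while the paper's matrix computation has the advantage of being reusable verbatim in the later algorithmic sections (notably Theorem~\ref{theo:normcurve}), where one genuinely needs to manipulate determinants of explicit matrices rather than abstract Euler--Poincar\'e characteristics. The only point I would tighten in your write-up is the descent step at the end: rather than speaking of ``injectivity of the extension-of-ideals map $A' \to A'_K$'', it is cleaner to say that $A' \to A'_K$ is faithfully flat, so that two ideals of $A'$ extending to the same ideal of $A'_K$ must already coincide in $A'$.
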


\begin{proof}
Let $\p$ be a prime ideal of $A'_K$, and let $A'_{K,\p}$ be the
completion of $A'_K$ at $\p$.
Write $A_{K,\p} = A'_{K,\p} \otimes_{A'_K} A_K$,
$\M(\phi)_\p = A'_{K,\p} \otimes_{A'_K} \M(\phi)$, and
$\M(\psi)_\p = A'_{K,\p} \otimes_{A'_K} \M(\psi)$. 
Since $A_{K,\p}$ is a product
of local rings, the module $\M(\phi)_\p$ is free over $A_{K,\p}$.
We pick a basis
 $\B_\phi = (e_{\phi,i})_{1 \leq i \leq r}$ of it,
together with a basis $\B = (a_m)_{1 \leq m \leq n}$ of $A_{K,\p}$ 
over $A'_{K,\p}$.
Note that the family $\B'_\phi = (a_m{\cdot}e_{\phi,i})_{1 \leq i 
\leq r, 1 \leq m \leq n}$ is a $A'_{K,\p}$-basis of $\M(\phi)_\p
= \M(f^* \phi)_\p$.
We define similarly $\B_\psi$ and $\B'_\psi$.
Let $C = (c_{ij})_{1 \leq i,j \leq r}$ be the matrix of $\M(u)$
with respect to the bases $\B_\psi$ and $\B_\phi$ and, for $a
\in A'_{K,\p}$, let $M(a) \in (A'_{K,\p})^{n \times n}$ be the matrix of 
the multiplication by $a$ over $A_{K,\p}$. The matrix of $f^* u$ in the
bases $\B'_\psi$ and $\B'_\phi$ is the block matrix
\[
  D = 
  \left( \begin{matrix}
  M(c_{1,1}) & \cdots & M(c_{1,r}) \\
  \vdots & & \vdots \\
  M(c_{r,1}) & \cdots & M(c_{r,r}) \\
  \end{matrix} \right)
\]
The main result of~\cite{silvester} implies that $\det D = 
N_{A_{K,\p}/A'_{K,\p}}(\det C)$.
The proposition then follows from Theorem~\ref{th:norm-general}.
\end{proof}

\subsection{Algorithms: the case of $\mathbb P^1$}
\label{subsec:norm-p1}

Let $A = \Fq[T]$ as in \S \ref{subsec:algo-motive}.
Theorem~\ref{th:norm-general} readily translates to an algorithm 
for computing the norm of an isogeny between Drinfeld modules;
this is Algorithm~\ref{algo:motive-iso-norm}.

\begin{algorithm}[h]
    \caption{\IsogenyNorm}
    \label{algo:motive-iso-norm}
    \KwIn{An isogeny $u : \phi \to \psi$ encoded by its defining Ore polynomial}
    \KwOut{The norm of $u$}
  
    Compute $M = \MotiveMatrix(u)$ \;
    \KwRet the ideal generated by determinant of $M$
\end{algorithm}

\begin{theo}
\label{theo:iso-norm}
  Let $\phi$ and $\psi$ be two Drinfeld $\Fq[T]$-modules of rank $r$
  and let $u: \phi \to \psi$ be an isogeny of $\tau$-degree $n$.
  Algorithm~\ref{algo:motive-iso-norm} computes the norm of~$u$ for
  a cost of $O (n^2 + r^2)$ applications of the Frobenius endomorphism of $K$
  and $\Otilde(n^2 + nr^{\omega-1} + r^\omega)$ operations in~$K$.

\end{theo}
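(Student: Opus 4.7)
The plan is to argue correctness using the motivic interpretation of the norm, and then bound the complexity by adding the cost of computing the matrix of $\M(u)$ to the cost of computing the determinant of a polynomial matrix of controlled size and degree.

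For correctness, I would first invoke Theorem~\ref{th:norm-general} to reduce the claim to the statement that the ideal of $A_K = K[T]$ generated by the determinant of the matrix returned by \MotiveMatrix coincides with $\detideal \M(u)$. Since $A = \Fq[T]$, the ring $A_K = K[T]$ is a principal ideal domain, so both $\M(\phi)$ and $\M(\psi)$ are free of rank~$r$ with canonical basis $(1, \tau, \ldots, \tau^{r-1})$. I would then observe that Algorithm~\ref{algo:motive-matrix}, applied to the Ore polynomial representing~$u$, produces precisely the matrix of $\M(u): \M(\psi) \to \M(\phi)$ in these canonical bases: the $j$-th column computed in the loop is the coordinate vector of $\tau^j u \in \M(\phi)$, which is exactly $\M(u)(\tau^j)$ expressed in the canonical basis of $\M(\phi)$. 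The ideal generated by its determinant is therefore $\detideal \M(u)$ by the discussion in \S\ref{sssec:determinants-proj} applied in the free case.

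For the complexity, I would split the analysis into two independent stages. The call to \MotiveMatrix costs $O(n^2 + r^2)$ applications of the Frobenius endomorphism of $K$ and $O(n^2 + r^2)$ arithmetic operations in $K$, by Lemma~\ref{lem:complexity-motive-matrix}. It then remains to bound the cost of computing the determinant of the resulting $r \times r$ matrix over $K[T]$. By Corollary~\ref{cor:bound-coeffs} applied to $u$ of $\tau$-degree $n$, each entry of this matrix has degree at most $1 + n/r$ in $T$. Invoking a fast polynomial-matrix determinant algorithm (as recalled in \S\ref{sec:matrices-computations}, from \cite{matrices:poly-computations-03, matrices:poly-computations-05}), the determinant can be computed in $\Otilde\bigl(r^\omega (1 + n/r)\bigr) = \Otilde(r^\omega + nr^{\omega-1})$ operations in $K$, with no further applications of the Frobenius.

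Summing the two contributions yields $O(n^2 + r^2)$ applications of the Frobenius and $\Otilde(n^2 + nr^{\omega-1} + r^\omega)$ operations in~$K$, which is the announced bound. The only delicate point, and the one I would spend most care on, is verifying that \MotiveMatrix remains correct when its input is an isogeny between two \emph{different} Drinfeld modules rather than an endomorphism: one must check that the coordinate-extraction step uses the canonical basis of the \emph{target} motive $\M(\phi)$ while the iterated $\tau$-action produces the image of the canonical basis of the \emph{source} motive $\M(\psi)$. Once this identification is made, everything else is a direct assembly of Lemma~\ref{lem:complexity-motive-matrix}, Corollary~\ref{cor:bound-coeffs} and Theorem~\ref{th:norm-general}.
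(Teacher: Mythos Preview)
Your proposal is correct and follows essentially the same route as the paper: invoke Theorem~\ref{th:norm-general} for correctness, Lemma~\ref{lem:complexity-motive-matrix} for the cost of building the matrix, the degree bound $1 + n/r$ on its entries, and the $\Otilde((n{+}r)r^{\omega-1})$ cost of a polynomial-matrix determinant. The only cosmetic difference is that the paper cites Lemma~\ref{lemma:bound-coeffs} directly (noting it also applies to isogenies) rather than Corollary~\ref{cor:bound-coeffs}, which is formally stated for endomorphisms; your closing remark about checking that \textsc{MotiveMatrix} behaves correctly between two different Drinfeld modules covers exactly this point.
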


\begin{proof}
  Per Lemma~\ref{lem:complexity-motive-matrix}, the cost of computing the
  matrix of $\M(u)$ is $O(n^2{+}r^2)$ applications of the Frobenius
  endomorphism, and $O(n^2{+}r^2)$ operations in $K$. Besides,
  this matrix has size $r$ and its entries have degrees all less than 
  $1 + \frac n r$ (Lemma~\ref{lemma:bound-coeffs}, which is also
  valid for isogenies).
  Therefore, using the algorithmic primitives of
  $\S \ref{sec:matrices-computations}$, computing its determinant requires $\Otilde((n{+}r) r^{\omega-1})$
  operations in~$K$.
\end{proof}

When $K$ is a finite field, one can speed up
Algorithm~\ref{algo:motive-iso-norm} using the optimized primitives
of \S \ref{subsec:ore-computations} for manipulating Ore polynomials, as for
the endomorphism case.
Precisely, we have the following.

\begin{theo}
\label{theo:iso-norm-finite}
  If $K$ is a finite field of degree $d$ over $\Fq$,
  Algorithm~\ref{algo:motive-iso-norm} computes the norm of the isogeny $u$ for
  a cost of
  \[
    \Otilde(d\log^2 q) +
    \Opower\big(\big(\SMgeq(n, d) + ndr + n\min(d,r)r^{\omega-1} + dr^\omega\big)
    \cdot \log q\big)
  \]
  bit operations.
\end{theo}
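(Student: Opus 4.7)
The plan is to mirror the proof of Theorem~\ref{theo:charpoly-finitefield} closely, analyzing the two steps of Algorithm~\ref{algo:motive-iso-norm} separately. Correctness is given by Theorem~\ref{th:norm-general} together with the fact that $\M(\phi)$ is free over $K[T]$ with canonical basis $(1, \tau, \ldots, \tau^{r-1})$, so that $\norm(u)$ is the ideal of $\Fq[T]$ generated by the determinant of the matrix $M$ returned by $\MotiveMatrix(u)$.

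For the first step (computation of $M$), I would repeat verbatim the analysis made in the proof of Theorem~\ref{theo:charpoly-finitefield}. Algorithm~\ref{algo:motive-matrix} calls once $\MotiveCoordinates$ on an Ore polynomial of $\tau$-degree $n$; using fast Ore division (\S\ref{subsec:ore-computations}) this amounts to $\Otilde(d\log^2 q) + \Opower(\SMgeq(n,d)\log q)$ bit operations. The subsequent $r-1$ applications of $\MotiveTauAction$ cost $\Opower((nr + r^2)d\log q)$ bits in total, using Lemma~\ref{lemma:comp-motive-shift-coordinates} together with the convention of \S\ref{sssec:complexitymodel}. Since $dr^2 \leq dr^\omega$, this step is covered by
\[
  \Otilde(d\log^2 q) + \Opower\big((\SMgeq(n,d) + ndr + dr^\omega)\log q\big)
\]
bit operations.

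For the second step (determinant of $M$), the key observation is that a determinant is only one scalar rather than an entire characteristic polynomial, so we have strictly more algorithmic flexibility than in the endomorphism case. I would use a hybrid strategy depending on the relative sizes of $r$ and $d$. Recall that $M$ has size $r$ with entries in $K[T]$ of degree at most $\lceil n/r\rceil + 1$, by Corollary~\ref{cor:bound-coeffs} (whose proof applies unchanged to isogenies, as it only uses the $\tau$-degree bound):
\begin{itemize}[itemsep=0.5ex,parsep=0ex,topsep=0.5ex]
\item When $r \geq d$, use the direct determinant algorithm for polynomial matrices referenced in \S\ref{sec:matrices-computations}, for a cost of $\Otilde(nr^{\omega-1})$ operations in $K$, i.e., $\Otilde(ndr^{\omega-1}\log q)$ bit operations.
\item When $r < d$, compute $\pi(M)$ via Lemma~\ref{lem:matrice-comp-1} and extract the determinant as $(-1)^r\pi_0$; since the coefficients of $\pi(M)$ have degree at most $n+r$ and $r < d$, this costs $\Opower((n+d)r^\omega\log q)$ bit operations.
\end{itemize}
In both cases, this step is bounded by $\Opower\big((n\min(d,r)r^{\omega-1} + dr^\omega)\log q\big)$ bit operations, and summing with the cost of the first step yields the announced bound.

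The only real subtlety is this hybrid choice: neither algorithm alone delivers the $n\min(d,r)r^{\omega-1}$ saving uniformly. The direct polynomial-matrix algorithm is efficient when operations in $K$ are cheap (small $d$), while the Lemma~\ref{lem:matrice-comp-1} approach benefits from concentrating the work in a single extension of $\Fq$ (efficient when $r$ is small); switching between them exactly captures the $\min(d,r)$ factor that distinguishes the norm complexity from the characteristic polynomial complexity of Theorem~\ref{theo:charpoly-finitefield}. Beyond this bookkeeping, no additional obstacle arises.
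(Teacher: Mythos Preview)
Your proof is correct and follows essentially the same route as the paper: you reuse the analysis of Theorem~\ref{theo:charpoly-finitefield} for the matrix computation, then split on $r \geq d$ versus $r < d$, using direct polynomial-matrix determinant in the first case and Lemma~\ref{lem:matrice-comp-1} in the second. The only cosmetic difference is that you make explicit the extraction of the determinant from $\pi(M)$ in the second branch, whereas the paper simply cites Lemma~\ref{lem:matrice-comp-1}.
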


\begin{proof}
  Per the first part of the proof of Theorem~\ref{theo:charpoly-finitefield},
  the computation of $\M(u)$ requires 
  \[
    \Otilde(d \log^2 q) + \Opower\big((\SMgeq(n, d) + dr(n+r))\log q\big)
  \]
  bit operations.
  Then, for the computation of the determinant, we distinguish
  between two cases. If $d \leq r$, we keep on using the algorithms
  of~\cite{matrices:poly-computations-03, matrices:poly-computations-05}, for a
  cost of $\Otilde(nr^{\omega-1} + r^\omega)$ operations in~$K$, that is
  $\Otilde(ndr^{\omega-1} + dr^\omega)$ operations in~$\Fq$. On the contrary,
  when $d \geq r$, we use Lemma~\ref{lem:matrice-comp-1}, performing then
  $\Opower(nr^\omega + dr^\omega)$ operations in~$\Fq$. Putting all together,
  and remembering that an operation in~$\Fq$ corresponds to $\Otilde(\log q)$
  bit operations, we get the theorem.
\end{proof}

\begin{rem}

  When $u$ is an endomorphism, the norm can be computed as the constant
  coefficient of the characteristic polynomial of $u$, up to a sign. We notice
  that the algorithms of the present subsection in some cases run faster than
  those of \S \ref{subsec:algo-motive}. This is because we compute the
  determinant of the matrix of $\M(u)$ instead of its whole characteristic
  polynomial. However, we stress that the asymptotic costs of computing the
  characteristic polynomial and the norm of an endomorphism may be equal. This
  owes to the fact that in some cases, computing the characteristic polynomial
  of a matrix, or computing its determinant, both reduces to matrix
  multiplication.

\end{rem}

\begin{rem}
\label{rem:frobenius-norm}
In the special case where $u = F_\phi$ is the Frobenius endomorphism, the norm is given
by a simple closed-formula (see \cite[Theorem~2.11]{gek08} and
\cite[Theorem~2.4.7]{papikian_drinfeld_2023}), namely
\begin{equation}
\label{eq:frobenius-norm}
  \norm(F_\phi) = (-1)^{rd - r - d} N_{K/\Fq}(\Delta)^{-1}
  \p^{\frac{d}{\deg(\p)}},
\end{equation}
where $\Delta$ is the leading coefficient of $\phi_T$. Computing 
the Frobenius norm using Equation~\eqref{eq:frobenius-norm} costs 
$\Otilde(d \log^2 q) + \Opower(d \log q)$ bit 
operations~\cite[Proposition~3]{musleh-schost-1}. Noticing that the
\emph{Frobenius norm} is a degree $d$ polynomial in $\Fq[T]$, this complexity
is essentially optimal with respect to~$d$, and asymptotically better 
than other algorithms mentioned in this paper (see also 
Appendix~\ref{appendix:review}).
\end{rem}

\subsection{Algorithms: the case of a general curve}
\label{ssec:normcurve}

When $A$ is arbitrary, determining the norm of an isogeny $u: \phi \to \psi$
becomes more complex due to the nonfreeness of the motives $\M(\phi)$ and
$\M(\psi)$ in general. This necessitates working with arbitrary torsion-free
modules over Dedekind rings. While this approach appears viable, we will 
follow an alternative strategy that simplifies the general scenario by
reducing the computation to the previously addressed case of $\Fq[T]$.

From now on, we assume for simplicity that $A$ is presented as
\[
  A = \Fq[X,Y] / P(X,Y)
\]
and that $\deg(x) > \deg(y)$, where $x$ and $y$ denote the images 
in $A$ of $X$ and $Y$ respectively.
Let $\phi, \psi : A \to K\{\tau\}$ be two Drinfeld modules of rank~$r$, 
and let $u : \phi \to \psi$ be an isogeny between them. We consider a 
new variable $\Lambda$ and form the polynomial rings $K[\Lambda]$ and 
$A_K[\Lambda]$. We set
\[
  \M(\phi)[\Lambda] = A_K[\Lambda] \otimes_{A_K} \M(\phi)
\]
and endow it with the structure of $K[T,\Lambda]$-module 
inherited from its 
structure of $A_K[\Lambda]$-module through the ring homomorphism
\[
  f : K[T, \Lambda] \;\to\; A_K[\Lambda], \quad 
  T \mapsto x + \Lambda{\cdot}y, \,\,
  \Lambda \mapsto \Lambda.
\]
Similarly, we define $\M(\psi)[\Lambda]$ and endow it with a structure
of $K[T,\Lambda]$-module.

The assumption $\deg(x) > \deg(y)$ ensures that $\phi_x + \Lambda
{\cdot}\phi_y$ is an Ore polynomial of degree $r {\cdot} \deg(x)$ 
with leading coefficient lying in~$K$. Writing $s = r {\cdot} \deg(x)$, 
we deduce that the family $(1, \tau, \ldots, \tau^{s-1})$ is a 
$K[T,\Lambda]$-basis of both $\M(\phi)[\Lambda]$ and $\M(\psi)
[\Lambda]$. 
On the other hand, we observe that, after extending scalars to 
$A_K[\Lambda]$, the morphism $\M(u) : \M(\psi) \to \M(\phi)$ induces a 
$K[T,\Lambda]$-linear map $\M(u)[\Lambda] : 
\M(\psi)[\Lambda] \to \M(\phi)[\Lambda]$.
Its determinant in the aforementioned distinguished bases is a 
bivariate polynomial, that we call $\delta(T,\Lambda)$. Evaluating it at 
$T = x + \Lambda y$, we obtain a univariate polynomial in $\Lambda$
with coefficients in $A_K$.

\begin{theo}
\label{theo:normcurve}
With the above notation and hypothesis, the leading coefficient of
$\delta(T,\Lambda)$ with respect to $T$ is a nonzero constant $c \in
K^\times$.
Moreover, if we write
\[
  \delta(x{+}\Lambda y, \Lambda) =
  \delta_0 + \delta_1{\cdot}\Lambda + \cdots + \delta_n{\cdot}\Lambda^n
  \qquad (n \in \NN, \delta_i \in A_K),
\]
then $c^{-1} \delta_0, \ldots, c^{-1} \delta_n$ all lie in $A$ and generate
$\norm(u)$.
\end{theo}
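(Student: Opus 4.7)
My plan has three parts.

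\textit{Part 1 (Basis).} Since $\deg(\phi_x)=r\deg(x)>r\deg(y)=\deg(\phi_y)$, the Ore polynomial $\phi_x+\Lambda\phi_y$ has $\tau$-degree $s=r\deg(x)$ and leading coefficient equal to the leading coefficient $\Delta_x\in K^\times$ of $\phi_x$. I would perform right Euclidean division by $\phi_x+\Lambda\phi_y$ in $K\{\tau\}[\Lambda]$, which is unobstructed since $\Delta_x$ is invertible; iterating, every $f\in K\{\tau\}[\Lambda]$ admits a unique expression $f=\sum_{i=0}^{s-1}\tau^i\cdot f_i(\phi_x+\Lambda\phi_y,\Lambda)$ with $f_i\in K[T,\Lambda]$, establishing the basis claim for $\M(\phi)[\Lambda]$ and $\M(\psi)[\Lambda]$.

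\textit{Part 2 (Leading $T$-coefficient).} Let $n=\deg_\tau u$ and let $m_{ij}(T,\Lambda)$ denote the matrix entries of $\M(u)[\Lambda]$. I would first bound $\deg_T m_{ij}\le\lfloor(n+j-i)/s\rfloor$ via a $\tau$-degree bookkeeping analogous to Lemma~\ref{lemma:bound-coeffs}. Writing $n+j=q_js+i_0(j)$ with $0\le i_0(j)<s$, this refines to $\deg_T m_{i_0(j),j}=q_j$ (exactly) and $\deg_T m_{ij}\le q_j-1$ for $i>i_0(j)$. Since $j\mapsto i_0(j)$ is itself a cyclic permutation of $\{0,\ldots,s-1\}$, a double-counting argument shows that $\sigma^*=i_0$ is the unique permutation maximizing $\sum_j\deg_T m_{\sigma(j),j}$ (with sum equal to $n$). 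Tracking leading $\tau$-coefficients through the Euclidean division -- exploiting that $\Delta_x\in K^\times$ introduces no $\Lambda$-dependence and that the leading $\tau$-coefficient of $\tau^j u$ is a power of that of $u$ -- shows that the leading $T$-coefficient of each $m_{i_0(j),j}$ lies in $K^\times$. Hence $\deg_T\delta=n$ and its leading coefficient $c\in K^\times$.

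\textit{Part 3 (Descent to $A$).} I would pass to $L=K(\Lambda)$. The ring $A_L=L\otimes_{\Fq}A$ is a Dedekind domain finite over $L[T]$ via $T\mapsto x+\Lambda y$, and the restriction framework of \S\ref{sssec:restriction} (applied over base $L$) makes $\M(\phi)_L$ a free $L[T]$-module of rank $s$ with the same basis as in Part~1. Hence $\delta\in K[T,\Lambda]\subset L[T]$ is the determinant of $\M(u)_L$ as an $L[T]$-linear map, and Theorem~\ref{th:norm-general} combined with Corollary~\ref{cor:restrictionnorm} over $L$ gives $\delta\cdot L[T]=N_{A_L/L[T]}(\norm(u)\cdot A_L)$. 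Setting $p:=c^{-1}\delta$ (monic in $T$ of degree $n$), the $\Fq$-rationality of the norm construction -- after inverting the leading $y$-coefficient $a\in\Fq[\Lambda]$ of the minimal polynomial of $y$ over $\Fq[T,\Lambda]$, so that $A[\Lambda][a^{-1}]$ becomes finite free over $\Fq[T,\Lambda][a^{-1}]$ -- identifies $p$ as the monic-in-$T$ generator of an $\Fq$-defined ideal, forcing $p\in K[T,\Lambda]\cap\Fq(T,\Lambda)=\Fq[T,\Lambda]$. Substituting $T\mapsto x+\Lambda y$ then lands $p(x+\Lambda y,\Lambda)$ in $A[\Lambda]$, so each $c^{-1}\delta_j\in A$. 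To verify they generate $\norm(u)$, I would specialize $\Lambda$ to $\lambda_0\in\Fq$ in a Zariski-dense subset, identify each $p(x+\lambda_0 y,\lambda_0)$ as a generator of $N_{A/\Fq[T]}(\norm(u))$ via Corollary~\ref{cor:restrictionnorm} applied to $f_{\lambda_0}:T\mapsto x+\lambda_0 y$, and conclude by a Bezout-type argument over the family of these specializations.

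The hardest step will be the descent in Part~3: both establishing $p\in\Fq[T,\Lambda]$ and identifying the ideal generated by its $\Lambda$-coefficients with $\norm(u)$ demand a careful use of the $\Fq$-structure underlying $A$ and of the compatibility of norms with base change and specialization. By contrast, Parts~1 and~2 are largely routine degree accounting.
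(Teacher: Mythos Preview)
Your Part~2 is heavier than needed. The paper handles the leading-coefficient claim in one line: for every $\lambda\in\Kbar$, the specialization $\delta(T,\lambda)$ generates $\detideal$ of the restricted map, so $\deg_T\delta(T,\lambda)=\dim_K\Coker\M(u)=\deg_\tau(u)$ independently of~$\lambda$; hence the leading $T$-coefficient of $\delta(T,\Lambda)$ cannot involve~$\Lambda$. Your permutation bookkeeping would also work, but it is not the simplest route.

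Part~3 contains a genuine gap in the final ``Bezout-type argument''. From each specialization $\lambda_0\in\Fq$ you obtain $p(x+\lambda_0 y,\lambda_0)\in N_{f_{\lambda_0}}(\norm(u))\cdot A$, and since $N(\a)\cdot A\subset\a$ always holds, a Vandermonde argument yields only the containment $(c^{-1}\delta_j)_j\subset\norm(u)$. For the reverse you would need, for every prime $\q\mid\norm(u)$, some $\lambda_0\in\Fq$ with $e_\q f_\q=1$ along $f_{\lambda_0}$; but $f_\q=1$ means $\bar x+\lambda_0\bar y$ generates the residue field $A/\q$ over~$\Fq$, a primitive-element condition that may fail for \emph{all} $\lambda_0$ in the finite set~$\Fq$. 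When it fails, $v_\q\big(N_{f_{\lambda_0}}(\norm(u))\cdot A\big)=e_\q f_\q\cdot v_\q(\norm(u))$ strictly overshoots, and the specializations cannot generate~$\norm(u)$.

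The paper avoids this by passing to $\Kbar$ rather than~$\Fq$. Writing $\norm(u)\cdot A_{\Kbar}=\prod_i\m_{(x_i,y_i)}$, each factor has residue field~$\Kbar$, so its norm along $f_\lambda$ is exactly $(T-x_i-\lambda y_i)$; this gives the explicit identity $c^{-1}\delta(T,\Lambda)=\prod_i(T-x_i-\Lambda y_i)$. Substituting $T=x+\Lambda y$ and computing the $\Lambda$-content locally (Gauss's lemma; the transcendence of $\Lambda$ prevents cancellation between $x-x_i$ and $y-y_i$) yields exactly $\prod_i\m_{(x_i,y_i)}$; Galois descent then gives the statement over~$A$. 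Your $L=K(\Lambda)$ framework can be repaired along the same lines: the \emph{generic} parameter $\Lambda$ forces $e f=1$ at every relevant prime of $A_L$, so you should argue directly that $p(x+\Lambda y,\Lambda)\cdot A_L=\norm(u)\cdot A_L$ and read off the $\Lambda$-content, rather than descend to $\Fq$-specializations.
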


\begin{proof}
For any fixed element $\lambda \in \Kbar$, notice that the degree of
the univariate polynomial $\delta(T, \lambda)$ is equal to the $\tau$-degree
of~$u$. Since the 
latter remains constant when $\lambda$ varies in $\Kbar$, so does the
former. The first assertion of the theorem follows.

Set $I = \Kbar \otimes_{Fq} \norm(u)$, which is an ideal of $A_{\Kbar}$. 
Recall that the maximal ideals of $A_{\Kbar}$ are all of the form
\[
  \m_{(x_0, y_0)} = (x-x_0) A_{\Kbar} + (y-y_0) A_{\Kbar}
\]
with $x_0, y_0 \in \Kbar$. We write the decomposition of $I$ into a
product of prime ideals:
\begin{equation}
\label{eq:decompI}
  I = \m_{(x_1, y_1)} \cdot \m_{(x_2, y_2)} \cdots \m_{(x_\ell,y_\ell)}
\end{equation}
where $\ell$ is a nonnegative integer and $x_i, y_i \in \Kbar$ for
all $i$ between $1$ and $\ell$.

We fix an element $\lambda \in \Kbar$ and consider the ring 
homomorphism $f_\lambda : \Kbar[T] \to A_{\Kbar}$ defined by $T \mapsto x +
\lambda y$. The map $f_\lambda$ is the specialization of $f$ at $\lambda$, and a finite 
morphism whose degree does not depend on $\lambda$.
Let $N_\lambda : A_{\Kbar} \to \Kbar[T]$ denote the norm map with respect to 
$f_\lambda$. It follows from the decomposition~\eqref{eq:decompI}
that $N_\lambda(I)$
is the ideal of $\Kbar[T]$ generated by the polynomial
\[
  P_\lambda(T) = 
  (T - x_1 - \lambda y_1) \cdots (T - x_\ell - \lambda y_\ell).
\]
On the other hand, repeating the proof of
Corollary~\ref{cor:restrictionnorm}, we find that $N_\lambda(I)$
is also the ideal
generated by $\delta(T, \lambda)$. Therefore $\delta(T, \lambda) =
c \cdot P_\lambda(T)$. Since this equality holds for any $\lambda
\in \Kbar$, it is safe to replace $\lambda$ by the formal variable
$\Lambda$. Specializing at $T = x + \Lambda y$, we obtain
\[
  \delta(x + \Lambda y, \Lambda) = 
  c \cdot \prod_{i=1}^\ell \big((x - x_i) + \Lambda{\cdot}(y - y_i)\big)
\]
Expanding the latter product and comparing with the definition 
of~$I$, we find that $I$ is the ideal of $A_{\Kbar}$ generated by
$\delta_0, \ldots, \delta_n$.
Finally, the fact that $I$ is defined over $A$ implies that the
pairs $(x_i, y_i)$ are conjugated under the Galois action, which
eventually shows that the $c^{-1} {\cdot} \delta_i$'s are in $A$.
The theorem follows.
\end{proof}

Theorem~\ref{theo:normcurve} readily translates to an algorithm for
computing the norm $\norm(u)$, namely:
\begin{enumerate}[itemsep=0.5ex,parsep=0ex,topsep=0.5ex]
\item we compute the matrix of $\M(u)[\Lambda]$ using 
Algorithm~\ref{algo:motive-matrix}
(treating $\Lambda$ as a formal parameter),
\item we compute the determinant $\delta(T,\Lambda)$ of this matrix
and let $c \in K^\times$ be its leading coefficient with respect to
$T$,
\item we write
\[
  c^{-1} \cdot \delta(x{+}\Lambda y, \Lambda) =
  \delta'_0 + \delta'_1{\cdot}\Lambda + \cdots + \delta'_n{\cdot}\Lambda^n
  \qquad (\delta'_i \in A_K).
\]
\item we return the ideal of $A$ generated by $\delta'_0, \ldots, \delta'_n$.
\end{enumerate}

\medskip

It follows from the proof of Theorem~\ref{theo:normcurve} that the degree 
$n$ of $\delta(x{+}\Lambda y, \Lambda)$ is equal to $\ell$, on the one 
hand, and to the $\tau$-degree of the isogeny~$u$, on the other hand. 
Unfortunately, this quantity may be large, especially when we compare 
it with the minimal number of generators of $\norm(u)$, which is at most 
$2$ because $A$ is a Dedeking domain.

To overcome this issue, an option could be to compute the 
$\delta'_i$'s one by one by using relaxed arithmetics~\cite{vdH97}: each
time a new $\delta'_i$ is computed, we form the ideal $I_i$ generated
by $\delta'_0, \ldots, \delta'_i$ and stop the process when $I_i$
has degree~$n$; we then have the guarantee that $\norm(u) = I_i$ and that
we have computed the ideal we were looking for.
When $x_1, \ldots, x_\ell$ are pairwise disjoint (which is the
most favorable case), we already have $\norm(u) = I_1$, so that the 
above procedure stops very rapidly.

Another option consists in picking random elements $\lambda \in K$ and 
computing the evaluations $\delta(T, \lambda)$ and $c^{-1} {\cdot} 
\delta(x{+}\lambda y, y)$. Doing so, we obtain elements in $\norm(u)$ and 
we can hope, as above, that only a few number of them will generate the 
ideal. Again, this can be checked by looking at the degree of the 
candidate ideals.

\section{The \emph{central simple algebra} method}
\label{sec:CSA}

Throughout this section, we assume that $K$ is a finite extension 
of $\Fq$ and we let $d$ denote the degree of $K/\Fq$.
Our aim is to design an alternative algorithm (namely the algorithm
referred to as \FCSA in the introduction) for computing the
characteristic polynomial of the Frobenius endomorphism $F_\phi$
of a rank $r$ Drinfeld $A$-module $\phi$.
We recall that, by definition, $F_\phi$ is the endomorphism 
corresponding to the Ore polynomial $\tau^d \in \Ktau$.

Our algorithm is based on Theorem~\ref{theo:charpolyNrd}, which
provides a formula for the characteristic polynomial of
$F_\phi$ by means of reduced norms in a certain central simple
algebra.

\subsection{The characteristic polynomial of the Frobenius as a reduced norm}
\label{ssec:charpolyNrd}

Theorem~\ref{theo:charpolyNrd}, the main result of this section and stated in
\S\ref{sssec:csa-main_results}, requires a preliminary introduction on general
Ore polynomials and reduced norms. This is the goal of
\S\ref{sssec:ore_pols-reduced_norms}.

\subsubsection{General Ore polynomials and reduced norms}
\label{sssec:ore_pols-reduced_norms}

We first recall some standard facts about Ore
polynomials\footnote{For a more detailed survey on this topic,
we refer to~\cite[\S I]{jacobson}.}.
Given a ring $L$ equipped with a ring endomorphism $\varphi :
L \to L$, we form the ring $L[\ttau;\varphi]$ whose elements are formal
expressions of the form
$$a_0 + a_1 \ttau + \cdots + a_n \ttau^n \quad
  (n \in \NN, \, a_0, \ldots, a_n \in L)$$
subject to the usual addition and multiplication driven by the rule
$\ttau b = \varphi(b) \ttau$ for $b \in L$. The ring $L[\ttau;\varphi]$ is the
so-called ring of {Ore polynomials} over $L$ twisted by $\varphi$;
it is noncommutative unless $\varphi$ is the identity morphism.

From this point onward, we focus on the case where $L$ is a field, as it holds
significant importance for this section.
The ring $L[\ttau;\varphi]$ then shares many properties with classical
polynomial rings over a field. Notably, it is equipped with a 
notion of degree and with an Euclidean division on the right: given 
two Ore polynomial $A, B \in L[\ttau;\varphi]$ with $B \neq 0$, there exist
uniquely determined $Q, R \in L[\ttau;\varphi]$ such that $A = QB + R$ and
$\deg R < \deg B$. As in the classical commutative case, this implies
that $L[\ttau;\varphi]$ is left Euclidean, \emph{i.e.} all left ideals of
$L[\ttau;\varphi]$ are generated by one element. From this property, we
derive the existence of right gcd: given $P, Q \in L[\ttau;\varphi]$,
the right gcd of $P$ and $Q$, denoted by $\rgcd(P,Q)$, is the unique
monic polynomial satisfying the relation
\[
  L[\ttau;\varphi]{\cdot}P +  L[\ttau;\varphi]{\cdot}Q = 
  L[\ttau;\varphi]{\cdot}\rgcd(P,Q).
\]

From now on, we assume further that $\varphi$ has finite order~$d$.
This hypothesis ensures in particular that the center of $L[\ttau;\varphi]$
is large; precisely, it is the subring $F[\ttau^d]$ where $F$ denotes the
subfield of $L$ fixed by $\varphi$. By standard Galois theory, the
extension $L/F$ has degree~$d$ and it is Galois with cyclic Galois
group generated by~$\varphi$.
In this situation, the field of fractions of $L[\ttau;\varphi]$ can be
obtained by inverting the elements in the center, \emph{i.e.} we
have
\[
  \Frac(L[\ttau;\varphi]) = F(\ttau^d) \otimes_{F[\ttau^d]} L[\ttau;\varphi].
\]
Besides, the latter is a central simple algebra over 
$F(\ttau^d)$~\cite[Theorem~1.4.6]{jacobson}.
This provides us with a reduced norm map
\[
  \Nrd : \quad \Frac(L[\ttau;\varphi]) \;\to\; F(\ttau^d)
\]
which is multiplicative and acts as the $d$-th power on $F(\ttau^d)$.
Let $P \in L[\ttau;\varphi]$, $P \neq 0$.
We form the quotient $D_P = L[\ttau;\varphi] / L[\ttau;\varphi] P$, which is
a $L$-vector space of dimension $\deg(P)$ with basis $(1, x, \ldots,
\ttau^{\deg(P) -1})$. Since $\ttau^d$ is a central element in $L[\ttau;\varphi]$,
the multiplication by $\ttau^d$ defines a $L$-linear endomorphism of
$D_P$, which we denote by $\gamma_P$. Its characteristic polynomial
$\pi(\gamma_P)$ is then a monic polynomial of degree~$\deg(P)$.

\begin{prop}
\label{prop:charpolyNrd}
For all $P \in L[\ttau;\varphi]$, $P \neq 0$, we have
\[
  \Nrd(P) = N_{L/F}\big(\lc(P)\big) \cdot \pi(\gamma_P)(\ttau^d)
\]
where $\lc(P)$ is the leading coefficient of $P$ and $N_{L/F}$
is the norm map from $L$ to $F$, \emph{i.e.} $N_{L/F}(x) = 
x \cdot \varphi(x) \cdots \varphi^{r-1}(x)$.
\end{prop}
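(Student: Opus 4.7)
My plan is to prove the identity by multiplicativity, reducing to the monic case, and then computing both sides as the determinant of a matrix over the PID $L[\ttau^d]$.

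First I verify that both sides are multiplicative in $P$. Multiplicativity of $\Nrd$ is a standard feature of the reduced norm on a central simple algebra. On the right-hand side, the Ore product rule $\lc(P_1 P_2) = \lc(P_1) \cdot \varphi^{\deg P_1}(\lc(P_2))$ combined with the $\varphi$-invariance of $N_{L/F}$ yields multiplicativity of $N_{L/F}(\lc(\cdot))$. For the factor $\pi(\gamma_{(\cdot)})(\ttau^d)$, the inclusion $R P_1 P_2 \subset R P_1$ (with $R = L[\ttau;\varphi]$) together with right multiplication by $P_1$ produces a short exact sequence of left $R$-modules
\[
  0 \to D_{P_2} \xrightarrow{\,\cdot P_1\,} D_{P_1 P_2} \to D_{P_1} \to 0
\]
that is equivariant for the central action of $\ttau^d$; since characteristic polynomials multiply on short exact sequences, this yields $\pi(\gamma_{P_1 P_2}) = \pi(\gamma_{P_1}) \pi(\gamma_{P_2})$.

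The formula is immediate when $P \in L^\times$ (in that case $D_P = 0$ and $\Nrd(P) = N_{L/F}(P)$), so by multiplicativity one reduces to $P$ monic. For such $P$, the key structural fact is that $R$ is a free left $L[\ttau^d]$-module of rank $d$ with basis $(1, \ttau, \ldots, \ttau^{d-1})$, and that right multiplication by $P$ is $L[\ttau^d]$-linear by associativity; denote by $M_P$ its matrix in this basis, with entries in $L[\ttau^d]$. Two complementary identifications of $\det(M_P)$ then yield the result. On the one hand, extending scalars to the maximal subfield $E = L(\ttau^d)$ of $D$ realizes $M_P$ as the image of $P$ in a splitting $D \otimes_{F(\ttau^d)} E \simeq M_d(E)$ (viewing $D$ as a free right $E$-module with basis $(1, \ttau, \ldots, \ttau^{d-1})$), so $\det(M_P) = \Nrd(P)$. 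On the other hand, the cokernel of $M_P$ is $D_P$ with $\ttau^d$ acting as $\gamma_P$; the structure theorem for finitely generated modules over the PID $L[\ttau^d]$ then gives $\det(M_P) = \pi(\gamma_P)(\ttau^d)$ up to a unit in $L^\times$.

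The main obstacle will be controlling this unit and checking that, together with the $N_{L/F}(\lc(P))$ correction produced by the nonmonic reduction, it yields the formula exactly as stated. I expect to handle this either by tracking leading coefficients through the invariant-factor decomposition, or by verifying the formula on a few well-chosen generators (scalars in $L$ and the element $\ttau$) and propagating by multiplicativity.
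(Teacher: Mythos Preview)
The paper does not prove this proposition at all: it simply cites \cite[Lemma~2.1.15]{caruso-leborgne-1}. So you are supplying an argument where the paper gives none, and your strategy---multiplicativity, reduction to the monic case, then computing both sides as a determinant of the right-multiplication map over the commutative subring $L[\ttau^d]$---is essentially the standard one and is sound.

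Two small corrections. First, your short exact sequence has the indices reversed: the inclusion of left ideals is $R P_1 P_2 \subset R P_2$ (since $r P_1 P_2 = (rP_1)P_2$), not $R P_1 P_2 \subset R P_1$; hence the correct sequence is
\[
  0 \;\longrightarrow\; D_{P_1} \;\xrightarrow{\ \cdot P_2\ }\; D_{P_1 P_2} \;\longrightarrow\; D_{P_2} \;\longrightarrow\; 0,
\]
with right multiplication by $P_2$ (not $P_1$) giving the injection. This does not affect your conclusion $\pi(\gamma_{P_1 P_2}) = \pi(\gamma_{P_1})\pi(\gamma_{P_2})$. Second, there is a left/right mismatch in your splitting step: you set up $M_P$ via \emph{right} multiplication on the \emph{left} $L[\ttau^d]$-module $R$, but then identify it with the image of $P$ in a splitting obtained from $D$ as a \emph{right} $E$-module. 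Right multiplication realizes $D^{op}\hookrightarrow M_d(E)$ rather than $D$; fortunately $\Nrd_D = \Nrd_{D^{op}}$, so $\det(M_P)=\Nrd(P)$ still holds, but you should say this.

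On the unit: your plan to check on generators is the right instinct, and it will in fact pin the constant down. Note that carrying it out for $P=\ttau$ gives $\det(M_\ttau)=(-1)^{d-1}\ttau^d$ while $\pi(\gamma_\ttau)(\ttau^d)=\ttau^d$, so the verification requires tracking a sign $(-1)^{(d-1)\deg P}$ through the multiplicativity; do this carefully rather than assuming the unit is $1$.
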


\begin{proof}
See \cite[Lemma~2.1.15]{caruso-leborgne-1}.
\end{proof}

\begin{rem}
Proposition~\ref{prop:charpolyNrd} implies in particular that
$\Nrd(P)$ is a polynomial whenever $P \in L[\ttau;\varphi]$ and that
$\pi(\gamma_P)$ has coefficients in~$F$. Both of them are not
immediate from the definition.
\end{rem}

\subsubsection{Main results}
\label{sssec:csa-main_results}

We come back to our setting: we assume that $K$ is a finite 
extension of $\Fq$ of degree~$d$ and consider a Drinfeld module 
$\phi : A \to K\{\tau\}$ of rank~$r$.
We notice that $K\{\tau\}$ can be alternatively depicted as the
ring of Ore polynomials $K[\ttau; \Frob]$ where $\Frob : K \to K$ is
the Frobenius endomorphism taking $x$ to $x^q$.
Recall that we have set $A_K = K \otimes_{\Fq} A$, and define
$\varphi = \Frob \otimes\: \id_A$, which is a ring endomorphism of
$A_K$ of order~$d$ with fixed subring~$A$. 
We form the Ore algebra $A_K[\ttau; \varphi]$; it contains $K[\ttau;\varphi]
\simeq K\{\tau\}$ as a subring. In particular, the elements $\phi_a$
$(a \in A)$ naturally sit in $A_K[\ttau;\varphi]$.

We define the ideal
\[
  I(\phi) = \sum_{a \in A} A_K[\ttau; \varphi] {\cdot} (\phi_a - a).
\]
In other words, $I(\phi)$ is the left ideal of $A_K[\ttau; \varphi]$
generated by the elements $(\phi_a - a)$ for $a$ running over $A$.

\begin{lem}
\label{lem:Iphi}
We assume that $A$ is generated as a $\Fq$-algebra by the elements
$a_1, \ldots, a_n$. Then $I(\phi)$ is generated as a left ideal
of $A_K[\ttau; \varphi]$ by $\phi_{a_1}{-}a_1, \, \ldots, \,
\phi_{a_n}{-}a_n$.
\end{lem}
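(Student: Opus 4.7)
The inclusion $J \subset I(\phi)$, where $J$ denotes the left ideal of $A_K[\ttau;\varphi]$ generated by $\phi_{a_1}{-}a_1, \ldots, \phi_{a_n}{-}a_n$, is immediate from the definitions. The converse amounts to showing that $\phi_a - a \in J$ for every $a \in A$. Since the $a_i$ generate $A$ as an $\Fq$-algebra, each $a \in A$ can be written as an $\Fq$-polynomial expression in the $a_i$'s, so the plan is to argue by induction on the complexity of such an expression. The base cases are scalars $c \in \Fq$, for which $\phi_c - c = 0$ since $\phi$ restricts to the identity on $\Fq$, and the generators themselves, for which $\phi_{a_i} - a_i \in J$ by definition.

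For the inductive steps, I will use that $\phi : A \to K\{\tau\} \hookrightarrow A_K[\ttau;\varphi]$ is a ring homomorphism. Sums are immediate: $\phi_{a+b} - (a{+}b) = (\phi_a - a) + (\phi_b - b)$. For products, the key identity is
\[
  \phi_{ab} - ab \;=\; \phi_a (\phi_b - b) \;+\; (\phi_a - a)\, b.
\]
The first summand lies in $A_K[\ttau;\varphi] \cdot (\phi_b - b) \subset J$ by the inductive hypothesis on~$b$. The second summand is the only delicate point: $J$ is a \emph{left} ideal, and $(\phi_a - a){\cdot}b$ is a priori a right multiple of $\phi_a - a$.

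This is where I will invoke the central observation that $A$ sits inside the center of $A_K[\ttau;\varphi]$. Indeed, $A_K$ is commutative, and since $\varphi = \Frob \otimes \id_A$ fixes the subring $A \subset A_K$, elements of $A$ also commute with $\ttau$. Therefore $(\phi_a - a){\cdot} b = b{\cdot}(\phi_a - a) \in A_K[\ttau;\varphi] \cdot (\phi_a - a) \subset J$ by the inductive hypothesis on~$a$, completing the induction. The whole argument is essentially formal; the only genuine content is the centrality of $A$ in $A_K[\ttau;\varphi]$, which is precisely what makes the twisted setup behave, for this purpose, as in the commutative case.
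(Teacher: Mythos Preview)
Your proof is correct and follows essentially the same approach as the paper: both establish the reverse inclusion by showing that the set $\{a \in A : \phi_a - a \in J\}$ is closed under $\Fq$-linear combinations and products, the key point in both cases being the centrality of $A$ in $A_K[\ttau;\varphi]$. The only cosmetic difference is that the paper writes the product identity directly as $\phi_{ab} - ab = \phi_a(\phi_b - b) + b(\phi_a - a)$ (absorbing the centrality into the identity itself), whereas you write $\phi_{ab} - ab = \phi_a(\phi_b - b) + (\phi_a - a)b$ and then invoke centrality to bring $b$ to the left.
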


\begin{proof}
Let $I'$ be the left ideal of $A_K[\ttau; \varphi]$ generated by $\phi_{a_1}{-}a_1,
\, \ldots, \, \phi_{a_n}{-}a_n$. We need to prove that $I' = I(\phi)$.
The inclusion $I' \subset I(\phi)$ is obvious.
For the reverse inclusion, consider $\lambda \in \Fq$ and $a,b \in A$ 
such that $\phi_a{-}a, \, \phi_b{-}b \in I'$. The equalities
\begin{align*}
\phi_{\lambda a} - \lambda a 
 & = \lambda \cdot (\phi_a - a) \\
\phi_{a+b} - (a{+}b) 
 & = (\phi_a - a) + (\phi_b - b) \\
\phi_{ab} - ab 
 & = \phi_a \cdot (\phi_b - b) + b \cdot (\phi_a - a)
\end{align*}
(recall that $b$ is central, so it commutes with $\phi_a$)
show that the three elements on the left hand side belong to $I'$
as well. This stability property eventually ensures that $I'$
contains all elements of the form $\phi_a - a$. Hence $I(\phi)
\subset I'$ as desired.
\end{proof}

We recall from \S \ref{ssec:motive} that the $A$-motive of $\phi$, 
denoted by $\M(\phi)$, is isomorphic to $K\{\tau\}$ as a $K$-vector 
space. This gives
a $K$-linear inclusion $\M(\phi) \hookrightarrow A_K[\ttau; \varphi]$
(mapping $\tau$ to $\ttau$). We consider the composite
\[
  \alpha_\phi : \quad \M(\phi) \; \hookrightarrow \; A_K[\ttau; \varphi]
  \; \to \; A_K[\ttau; \varphi] / I(\phi).
\]

\begin{prop}
\label{prop:motiveOre}
The map $\alpha_\phi$ is a $A_K$-linear isomorphism.
\end{prop}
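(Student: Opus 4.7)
The plan is to construct an explicit two-sided inverse $\bar\beta_\phi$ of $\alpha_\phi$ by using the fact that $\M(\phi)$ carries a natural left action of $A_K[\ttau;\varphi]$. Concretely, I would upgrade the $A_K$-module structure on $\M(\phi) = K\{\tau\}$ to a left $A_K[\ttau;\varphi]$-module structure by declaring that $\ttau$ acts as left multiplication by $\tau$. The relations to check are $\ttau\lambda = \lambda^q \ttau$ and $\ttau a = a \ttau$ (for $\lambda \in K$, $a \in A$); they translate, respectively, to $\tau(\lambda f) = \lambda^q(\tau f)$ and $\tau(f\phi_a) = (\tau f)\phi_a$, both of which are immediate. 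Setting $\beta_\phi : A_K[\ttau;\varphi] \to \M(\phi)$, $x \mapsto x \cdot 1$, then yields an $A_K[\ttau;\varphi]$-linear (hence $A_K$-linear) map. Moreover $(\phi_a - a)\cdot 1 = \phi_a - 1\cdot \phi_a = 0$, so by Lemma~\ref{lem:Iphi} the map $\beta_\phi$ annihilates $I(\phi)$ and factors through a well-defined $A_K$-linear map $\bar\beta_\phi : A_K[\ttau;\varphi]/I(\phi) \to \M(\phi)$.

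The identity $\bar\beta_\phi\circ\alpha_\phi = \id_{\M(\phi)}$ is immediate: for $f\in \M(\phi)$ viewed as an Ore polynomial in $K\{\tau\}\subset A_K[\ttau;\varphi]$, the action of $f$ on $1\in \M(\phi)$ is just $f$ itself, because $f$ only involves elements of $K$ and powers of $\ttau$. This also yields the $A_K$-linearity of $\alpha_\phi$ on its image; it can alternatively be checked directly from $\lambda f\phi_a - \lambda fa = \lambda f(\phi_a - a) \in I(\phi)$, which is valid because $I(\phi)$ is a \emph{left} ideal and $\lambda f$ sits on the left.

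The main step is the reverse identity $\alpha_\phi\circ\bar\beta_\phi = \id$, equivalently the congruence $x\cdot 1 \equiv x \pmod{I(\phi)}$ for every $x\in A_K[\ttau;\varphi]$. I would prove the stronger statement that the set
\[
  T = \{\, x \in A_K[\ttau;\varphi] : x\cdot g \equiv xg \pmod{I(\phi)} \text{ for every } g \in \M(\phi)\,\}
\]
is all of $A_K[\ttau;\varphi]$. Clearly $T$ is a $K$-subspace containing $\ttau$ and every $\lambda\in K$ trivially; it also contains each $a\in A$, because $a\cdot g = g\phi_a$ while $ag = ga$ (as $a$ is central), so $ag - a\cdot g = g(a - \phi_a)\in I(\phi)$ by the left-ideal property. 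For closure under multiplication, if $x,y \in T$ and $g\in \M(\phi)$, then $(xy)\cdot g = x\cdot(y\cdot g) \equiv x(y\cdot g) \pmod{I(\phi)}$ by the hypothesis on $x$ applied to $y\cdot g \in \M(\phi)$, and $x(y\cdot g) - xyg = x\bigl((y\cdot g) - yg\bigr) \in I(\phi)$ by the hypothesis on $y$ together with the left-ideal property. Hence $T$ is a $K$-subalgebra containing $K$, $A$, and $\ttau$, so $T = A_K[\ttau;\varphi]$. Specializing $g = 1$ finishes the proof.

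The only subtle point is a clean accounting of left-versus-right structures: the left-ideal definition of $I(\phi)$ is precisely what guarantees $g(\phi_a - a)\in I(\phi)$ whenever $g\in \M(\phi)$ sits on the left, and this is the lone algebraic fact used repeatedly throughout.
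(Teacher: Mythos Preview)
Your proof is correct and follows essentially the same strategy as the paper: both construct the same inverse map $\bar\beta_\phi$ sending (the class of) $f\otimes a$ to $f\phi_a$, and verify it is a two-sided inverse of $\alpha_\phi$. Your framing via a left $A_K[\ttau;\varphi]$-module structure on $\M(\phi)$ is a clean way to see at once that $\ker\beta_\phi$ is a left ideal (so contains $I(\phi)$), and your explicit argument with the subalgebra $T$ makes precise what the paper dismisses as ``formal to check'' for the direction $\alpha_\phi\circ\bar\beta_\phi=\id$; but the underlying mathematics is the same. (Minor remark: the appeal to Lemma~\ref{lem:Iphi} is unnecessary, since you already know $\ker\beta_\phi$ is a left ideal containing every $\phi_a-a$.)
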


\begin{proof}
We first check linearity.
Let $\lambda \in K$, $a \in A$ and $f \in \M(\phi)$. By definition,
we have $(\lambda \otimes a) {\cdot} f = \lambda f \phi_a$. Hence
\[
  \alpha_\phi\big((\lambda \otimes a) {\cdot} f\big)
= \lambda f \phi_a \equiv \lambda f a \pmod{I(\phi)}.
\]
Moreover $a$ is a central element in $A_K[\ttau; \varphi]$.
We conclude that 
$\alpha_\phi\big((\lambda \otimes a) {\cdot} f\big)
= \lambda a f$ and linearity follows.

In order to prove that $\alpha_\phi$ is an isomorphism, we 
observe that $A_K[\ttau; \varphi] \simeq K\{\tau\} \otimes_{\Fq} A$
and we define the $K$-linear map
$\beta_\phi : A_K[\ttau; \varphi] \to K\{\tau\}$ (as sets, $\Ktau = \M(\phi)$)
that takes $f \otimes a$ to $f \phi_a$ (for $f \in K\{\tau\}$
and $a \in A$).
We claim that $\beta_\phi$ vanishes on $I(\phi)$. Indeed, 
for $a, b \in A$ and $g \in K\{\tau\}$, we have
\begin{align*}
    \beta_\phi\big((g \otimes b){\cdot}(\phi_a \otimes 1 - 1 \otimes a)\big)
& = \beta_\phi(g\phi_a \otimes b - g \otimes ab) \\
& = g\phi_a \phi_b - g \phi_{ab} = 0.
\end{align*}
Consequently, $\beta_\phi$ induces a mapping
$\bar \beta_\phi : A_K[\ttau; \varphi]/I(\phi) \to \M(\phi)$.
It is now formal to check that $\bar \beta_\phi$ is a left and 
right inverse of $\alpha_\phi$, showing that $\alpha_\phi$ is an
isomorphism.
\end{proof}

We write $\Frac(A_K)$ for the field of fractions of $A_K$.
The morphism $\varphi$ extends to a ring endomorphism of $\Frac(A_K)$
that, in a slight abuse of notation, we continue to denote by 
$\varphi$. On $\Frac(A_K)$, $\varphi$ has order~$d$ and its fixed
subfield is $\Frac(A)$. 
We consider the Ore polynomial ring $\Frac(A_K)[\ttau; \varphi]$. By what we 
have seen previously, its center is $\Frac(A)[\ttau^d]$
and there is a reduced norm map
\functionname
  {\Nrd}
  {\Frac(A_K)[\ttau; \varphi]}
  {\Frac(A)[\ttau^d].}
We define $I_0(\phi) = 
\Frac(A_K) \otimes_{A_K} I(\phi)$; it is a left ideal 
of~$\Frac(A_K)[\ttau; \varphi]$. Since the latter is a principal ideal domain, $I_0(\phi)$
is generated by a unique element $g(\phi)$, which we assume to be
monic.
Concretely $g(\phi)$ is the right gcd of the elements $(\phi_a - a)$
when $a$ varies in $A$. After Lemma~\ref{lem:Iphi}, we even have
$g(\phi) = 
 \rgcd\big(\phi_{a_1}{-}a_1, \, \ldots, \phi_{a_n}{-}a_n\big)$
as soon as $a_1, \ldots, a_n$ generate $A$ as an $\Fq$-algebra.

\begin{theo}
\label{theo:charpolyNrd}
We keep the previous notation and assumptions.
Let $F_\phi$ be the Frobenius endomorphism of $\phi$ and 
let $\pi(F_\phi)$ be its monic characteristic polynomial. Then
\[
  \pi(F_\phi)(\ttau^d) = \Nrd\big(g(\phi)\big).
\]
\end{theo}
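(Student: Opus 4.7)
The plan is to translate the characteristic polynomial of $F_\phi$ into the reduced norm of $g(\phi)$ by identifying, after extending scalars from $A_K$ to $\Frac(A_K)$, the motive $\M(\phi)$ with a cyclic quotient of the Ore algebra $\Frac(A_K)[\ttau;\varphi]$ by a principal left ideal, in such a way that $\M(F_\phi)$ is transported to multiplication by $\ttau^d$. Once this dictionary is in place, Theorem~\ref{th:norm-endomorphism} will take care of the left-hand side of the desired equality, and Proposition~\ref{prop:charpolyNrd} will take care of the right-hand side.

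First, I would apply Proposition~\ref{prop:motiveOre} to obtain the $A_K$-linear isomorphism $\alpha_\phi : \M(\phi) \xrightarrow{\sim} A_K[\ttau;\varphi]/I(\phi)$. Since $F_\phi$ corresponds to the central element $\tau^d \in K\{\tau\}$ and $\M(F_\phi) : \M(\phi) \to \M(\phi)$ sends $f$ to $f\tau^d$, the transported endomorphism is the (left or right, they coincide) multiplication by $\ttau^d$ on $A_K[\ttau;\varphi]/I(\phi)$; this uses only that $\alpha_\phi$ respects the $\tau$-variable and that $\ttau^d$ lies in the center of $A_K[\ttau;\varphi]$.

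Next, I would tensor with $\Frac(A_K)$ over $A_K$. Using that $I_0(\phi) = \Frac(A_K)[\ttau;\varphi]\cdot g(\phi)$ by the very definition of $g(\phi)$, this yields a $\Frac(A_K)$-linear isomorphism
\[
  \Frac(A_K) \otimes_{A_K} \M(\phi) \;\simeq\; \Frac(A_K)[\ttau;\varphi]/\Frac(A_K)[\ttau;\varphi]\cdot g(\phi) \;=\; D_{g(\phi)},
\]
under which $\id \otimes \M(F_\phi)$ is carried to the endomorphism $\gamma_{g(\phi)}$ defined in \S\ref{sssec:ore_pols-reduced_norms}. Comparing characteristic polynomials, on one side we recover the characteristic polynomial of $\M(F_\phi)$ (unchanged by flat base change), which by Theorem~\ref{th:norm-endomorphism} is exactly $\pi(F_\phi) \in A[X]$; on the other side it is $\pi(\gamma_{g(\phi)})$. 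Therefore these two polynomials are equal.

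To conclude, I would apply Proposition~\ref{prop:charpolyNrd} with $L = \Frac(A_K)$, $F = \Frac(A)$ and $P = g(\phi)$. Since $g(\phi)$ is monic, $\lc(g(\phi)) = 1$, so the norm factor disappears and we obtain $\Nrd(g(\phi)) = \pi(\gamma_{g(\phi)})(\ttau^d) = \pi(F_\phi)(\ttau^d)$, which is the announced equality. The main conceptual obstacle is the second step, namely ensuring that the isomorphism $\alpha_\phi$ of Proposition~\ref{prop:motiveOre} genuinely intertwines $\M(F_\phi)$ with multiplication by $\ttau^d$ and that this compatibility is preserved when passing to the localization $\Frac(A_K)$; once this is carefully established, the remainder is essentially bookkeeping through the two matching results mentioned above.
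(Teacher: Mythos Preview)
Your proposal is correct and follows essentially the same route as the paper's own proof: use Proposition~\ref{prop:motiveOre} to identify $\Frac(A_K)\otimes_{A_K}\M(\phi)$ with $D_{g(\phi)}$, observe that $\M(F_\phi)$ becomes $\gamma_{g(\phi)}$ under this identification, and then combine Theorem~\ref{th:norm-endomorphism} with Proposition~\ref{prop:charpolyNrd} (using that $g(\phi)$ is monic). If anything, you spell out a couple of details---the intertwining of $\M(F_\phi)$ with multiplication by $\ttau^d$, and the vanishing of the leading-coefficient norm factor---that the paper leaves implicit.
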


\begin{proof}
Write $\M_0(\phi) = \Frac(A_K) \otimes_{A_K} \M(\phi)$.
On the one hand, it follows from Proposition~\ref{prop:motiveOre}
that $\alpha_\phi$ induces an isomorphism
\[
  \M_0(\phi) \simeq \Frac(A_K)[\ttau; \varphi]\,/\,\Frac(A_K)[\ttau; \varphi]{\cdot}g(\phi).
\]
With Proposition~\ref{prop:charpolyNrd}, we realize that
$\Nrd(g(\phi))$ is equal to the characteristic polynomial of
the right multiplication by $\ttau^d$ on $\M_0(\phi)$, that is
\[
  \Nrd\big(g(\phi)\big) 
= \pi\big(\Frac(A_K) \otimes_{A_K} \M(F_\phi)\big)
= \pi\big(\M(F_\phi)\big).
\]
We conclude by invoking Theorem~\ref{th:norm-endomorphism}.
\end{proof}

\begin{rem}\label{rem:csa}

  When $A = \Fq[T]$, we recover a result given in \cite{papikian_drinfeld_2023}
  (see Lemma~4.3.1, Theorems~4.2.2 and 1.7.16, and Equation (4.1.3)). It
  follows from Lemma~\ref{lem:Iphi} that $g(\phi)$ is $K(T)$-collinear to
  $\phi_T - T$. Therefore, the reduced norm of $\phi_T - T$ corresponds to the
  reduced characteristic polynomial of $\phi_T$. Let $\chi(\tau^d, V) \in
  \Fq[\tau^d][V]$ be this characteristic polynomial, and let $\pi$ be the
  characteristic polynomial of the Frobenius endomorphism of $\phi$. Then we
  have shown the polynomials $\pi(T, X)$ and $\chi(X, T)$ are equal up to a
  nonzero element in $\Fq$.

\end{rem}

\subsection{Algorithms: the case of $\mathbb P^1$}
\label{ssec:CSAP1}

We move to algorithmical purpose.
By Theorem~\ref{theo:charpolyNrd}, the computation of the characteristic 
polynomial of $F_\phi$ reduces to the computation of a reduced norm.
On the other hand, it is a classical fact that the 
reduced norm of a polynomial $P \in A_K[\ttau;\varphi]$ can be computed as 
a usual norm. Precisely, we consider the subalgebra $A[\ttau]$ of 
$A_K[\ttau;\varphi]$; it is commutative. 
Moreover $A_K[\ttau;\varphi]$ appears as a free left module of rank $d$ over 
$A[\ttau]$. Thus, there exists a norm map
$N_{A_K[\ttau;\varphi]/A[\ttau]}$ which takes a polynomial $P$ to the 
determinant of the $A[\ttau]$-linear endomorphism of
\functiondefname
  {\mu_P}
  {A_K[\ttau;\varphi]}
  {A_K[\ttau;\varphi]}
  {Q}
  {QP.} 
With this notation, we have
\[
  \Nrd(P) = N_{A_K[\ttau;\varphi]/A[\ttau]}(P) \in A[\ttau].
\]

We now assume that $A = \Fq[T]$ and
fix a Drinfeld module $\phi : \Fq[T] \to K\{\tau\}$. It follows
from Lemma~\ref{lem:Iphi} that $g(\phi)$ is $K(T)$-collinear 
to $\phi_T - T$.
Fix a basis $\mathcal B = (e_1, \dots, e_d)$ of $K$ over
$\Fq$ and observe that $\mathcal B$ is an $A[\ttau]$-basis of 
$A_K[\ttau;\varphi]$ as well.
Let $M$ be the matrix of $\mu_{\phi_T}$ in $\mathcal B$. Its entries all 
lie in $\Fq[\ttau]$ given that $\phi_T$ has coefficients in $K$. Observing
moreover that $\mu_{g(\phi)} = \mu_{\phi_T} - \mu_T = \mu_{\phi_T} - T$,
we conclude that 
\begin{equation}
\label{eq:charpolyFphiFqT}
  \pi(F_\phi)(\ttau^d) = \pi(M)(T)
\end{equation}
where $\pi(M)$ is the characteristic polynomial of $M$.
We emphasize that the two variables $t$ and $T$ play different
roles in the two sides of the Equality~\eqref{eq:charpolyFphiFqT}:
in the left hand side, $t$ appears in the variable at which the 
characteristic polynomial is evaluated whereas, in the right hand
side, it is an internal variable appearing in the matrix $M$; and
conversely for~$T$.

In order to explicitly compute the matrix of $\mu_P$ for a 
given Ore polynomial $P \in K[\ttau;\varphi]$, we can proceed as
follows. We write 
$P = g_0 + g_1 t + \cdots + g_n \ttau^n$ ($g_i \in K$) and
notice that
\[
  \mu_P = 
  \mu_{g_0} + \mu_\ttau \circ \mu_{g_1} + \cdots + \mu_\ttau^n \circ \mu_{g_n}.
\]
Moreover the set of equalities $e_i \ttau = \ttau e_i^{1/q}$ for $1 \leq i 
\leq d$ shows that the matrix 
of $\mu_\ttau$ is $\ttau{\cdot}F^{-1}$ where $F$ is the matrix of the Frobenius
endomorphism acting on~$K$ (which is $\Fq$-linear).
These observations readily lead to Algorithm~\ref{algo:ore-matrix}.

\begin{algorithm}[h]
    \caption{\CSAMatrix}
    \label{algo:ore-matrix}
    
    \KwIn{An Ore polynomial $P = \sum_{j=0}^n g_j \ttau^j \in K[\ttau;\varphi]$,
      a basis $\mathcal B = (e_1, \ldots, e_d)$ of $K$ over~$\Fq$}
    \KwOut{The matrix of $\mu_P$ in the basis $\mathcal B$}

    Compute the matrix $F \in \FF_q^{d \times d}$ of the Frobenius
    $K \to K, x \mapsto x^q$ in the basis~$\mathcal B$\;

    \For { $0 \leq j \leq n$}{
        Compute the matrix $G_j \in \FF_q^{d \times d}$
        of the map $K \to K, x \mapsto g_j x$ in the basis~$\mathcal B$\
    }

    \KwRet $\sum_{j=0}^n F^{-j} {\cdot} G_j {\cdot} \ttau^j$ \;
\end{algorithm}

\begin{lem}
\label{lem:complexité-matrice-csa}
  If $\mathcal B$ is the working basis of $K/\Fq$,
  Algorithm~\ref{algo:ore-matrix} requires $d$
  applications of the Frobenius endomorphism and $\Otilde(n d^\omega)$ 
  operations in $\Fq$.
\end{lem}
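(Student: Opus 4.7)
The plan is to walk through the algorithm line by line and bound each contribution. I would first handle the Frobenius matrix $F$: its columns are the coordinates of $e_1^q, \dots, e_d^q$ in the basis $\mathcal B$, so computing $F$ requires exactly $d$ applications of the Frobenius endomorphism, and the resulting $\Fq$-arithmetic is absorbed in what follows. This accounts for the entire Frobenius cost announced in the lemma, and every subsequent step will only require operations in $\Fq$.

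I would then handle the matrices $G_j$. Each $G_j$ has for its $i$-th column the coordinates of $g_j e_i$ in $\mathcal B$. Since multiplication in $K$ costs $\Otilde(d)$ operations in $\Fq$, each $G_j$ can be obtained in $\Otilde(d^2)$ operations in $\Fq$, and the full family $(G_j)_{0 \leq j \leq n}$ in $\Otilde(n d^2)$ operations in $\Fq$. This is absorbed in the final $\Otilde(n d^\omega)$ bound since $\omega \geq 2$.

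The dominant contribution comes from line~4. I would first compute $F^{-1}$ in $O(d^\omega)$ operations in $\Fq$, then compute the iterates $F^{-j} = F^{-1} \cdot F^{-(j-1)}$ for $j=1,\dots,n$ by successive matrix multiplications, each costing $O(d^\omega)$ operations, for a total of $O(n d^\omega)$. Next, for each $j$ I would compute the product $F^{-j} \cdot G_j$, which again costs $O(d^\omega)$ per index, giving $O(n d^\omega)$ overall. Assembling the output matrix $\sum_{j=0}^n F^{-j} G_j\, \ttau^j$ amounts to storing the $n+1$ resulting $d \times d$ matrices as the coefficients of a matrix polynomial in $\ttau$, which is linear in the output size $O(n d^2)$ and hence negligible.

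Adding up all the contributions yields $d$ Frobenius applications and $\Otilde(n d^\omega)$ operations in $\Fq$, as claimed. There is no real obstacle here: the only mild subtlety is recognizing that the matrix of $\mu_\ttau$ in the basis $\mathcal B$ is $\ttau \cdot F^{-1}$ (as stated just before the algorithm), so that the decomposition $\mu_P = \sum_j \mu_\ttau^j \circ \mu_{g_j}$ translates into the expression on line~4 and its complexity is controlled by polynomial-free $d \times d$ matrix multiplications over~$\Fq$ rather than by manipulations of matrices over $\Fq[\ttau]$.
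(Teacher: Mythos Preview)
Your proof is correct and follows essentially the same line-by-line approach as the paper: $d$ Frobenius applications to build $F$, $\Otilde(d^2)$ operations in $\Fq$ per $G_j$, and one inversion plus $O(n)$ matrix multiplications of size $d \times d$ over $\Fq$ for line~4. The paper's version is slightly terser (it does not spell out the iterative computation of the powers $F^{-j}$), but the argument is the same.
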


\begin{proof}
  Since $\mathcal B$ is the working basis, writing the coordinates of
  an element of~$K$ in $\mathcal B$ costs nothing. Therefore,
  computing the matrix $F$ amounts to computing each $g_i^q$ for $1 \leq i \leq
  d$. This then requires $d$ applications of the
  Frobenius endomorphism. Similarly computing each $G_j$ requires $d$ 
  multiplications in~$K$, corresponding to $\Otilde(d^2)$ operations
  in~$\Fq$.
  Finally, the computation on line 4 requires one inversion and $O(n)$ 
  multiplications of $r \times r$ matrices over~$\Fq$. The cost of this 
  computation is then $\Otilde(n d^\omega)$ operations in $\Fq$.
\end{proof}

We now have everything we need to compute the characteristic polynomial
of the Frobenius endomorphism: see Algorithm~\ref{algo:frobenius-charpoly}.

\begin{algorithm}[h]
    \caption{\FrobeniusCharpoly}
    \label{algo:frobenius-charpoly}
    
    \KwIn{A Drinfeld $\Fq[T]$-module $\phi$}
    \KwOut{The characteristic polynomial of the Frobenius endomorphism of
    $\phi$}

    Compute $M = \CSAMatrix(\phi_T)$ \;
    Compute the characteristic polynomial of $M$ and write it
    $\sum_{i=0}^d (\sum_{j=0}^{r} \lambda_{i, j} \ttau^{jd}) X^i$ \;

    \KwRet $\sum_{j = 0}^r (\sum_{i=0}^d \lambda_{i, j} T^j) X^i$ \;
\end{algorithm}

\begin{theo}[Variant~\FCSA]
\label{theo:CSA}
  Algorithm~\ref{algo:frobenius-charpoly} computes
  the characteristic polynomial of the Frobenius endomorphism of $\phi$ for a
  cost of $\Otilde(d\log^2 q) + \Opower(r d^\omega \log q)$ bit operations.
\end{theo}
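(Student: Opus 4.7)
The plan is to check correctness using Equation~\eqref{eq:charpolyFphiFqT} and then control the cost of each line of Algorithm~\ref{algo:frobenius-charpoly}.

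For correctness, I would first recall that the matrix $M$ produced on line~1 satisfies $\pi(F_\phi)(\ttau^d) = \pi(M)(T)$ by Equation~\eqref{eq:charpolyFphiFqT}. Writing $\pi(F_\phi)(X) = \sum_{j=0}^r a_j(T) X^j$, the left-hand side expands to $\sum_j a_j(T) \ttau^{jd}$, which involves only monomials $\ttau^k$ with $d \mid k$. Viewing both sides as polynomials in $T$ with coefficients in $\Fq[\ttau]$, it follows that the coefficients $b_i(\ttau)$ of $\pi(M)(X) = \sum_{i=0}^d b_i(\ttau) X^i$ actually lie in $\Fq[\ttau^d]$ and have degree at most~$r$ as polynomials in $\ttau^d$. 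This justifies the decomposition $\sum_{i,j} \lambda_{ij} \ttau^{jd} X^i$ used on line~2, and matching monomials in $\Fq[\ttau,T]$ identifies the bivariate polynomial returned on line~3 with $\pi(F_\phi)(X)$.

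Next I would bound the cost of line~1. Applying Lemma~\ref{lem:complexité-matrice-csa} with $n = r$ (the Ore degree of $\phi_T$) yields $d$ applications of the Frobenius of $K$ and $\Otilde(r d^\omega)$ operations in $\Fq$. In the complexity model of \S\ref{sssec:complexitymodel}, this translates to $\Otilde(d \log^2 q) + \Opower(d \log q) + \Otilde(r d^\omega \log q)$ bit operations, which fits within the claimed bound.

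Finally, for line~2, the matrix $M$ is $d \times d$ over $\Fq[\ttau]$ with entries of degree at most~$r$ in $\ttau$, and the correctness paragraph shows that the coefficients of $\pi(M)$ lie in $\Fq[\ttau^d]$. This places us exactly in the hypothesis of Lemma~\ref{lem:matrice-comp-2} (with $s = d$, $n = r$, and the variable $T$ of that lemma playing the role of $\ttau$), giving $\Otilde(r d^\omega)$ operations in $\Fq$, i.e.\ $\Otilde(r d^\omega \log q) \subseteq \Opower(r d^\omega \log q)$ bit operations; the rearrangement on line~3 has negligible cost. Summing all contributions produces the announced total. I expect the decisive conceptual point to be the identification of the $\Fq[\ttau^d]$-structure on the coefficients of $\pi(M)$: without it, computing $\pi(M)$ through the general-purpose algorithm of \S\ref{sec:matrices-computations} would introduce the exponent~$\Omega$ and yield only the weaker bound $\Otilde(r d^\Omega \log q)$.
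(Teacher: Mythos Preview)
Your proof is correct and follows the same approach as the paper: invoke Lemma~\ref{lem:complexité-matrice-csa} for line~1, Lemma~\ref{lem:matrice-comp-2} for line~2, and convert to bit operations via the model of \S\ref{sssec:complexitymodel}. You add two things the paper leaves implicit: an explicit correctness check from Equation~\eqref{eq:charpolyFphiFqT}, and an explicit verification that the coefficients of $\pi(M)$ lie in $\Fq[\ttau^d]$ so that the hypothesis of Lemma~\ref{lem:matrice-comp-2} is actually met --- both are welcome clarifications, and your closing remark about why the general $\Otilde(r d^\Omega)$ bound is avoided is exactly the point of that lemma.
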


\begin{proof}
  Per Lemma~\ref{lem:complexité-matrice-csa}, computing the matrix of
  $\mu_{\phi_T}$ requires $O(d)$ applications of the Frobenius and
  $\Otilde(rd^\omega)$ operations in $\Fq$.
  Using Lemma~\ref{lem:matrice-comp-2}, computing its characteristic 
  polynomial can be achieved for an extra cost of $\Otilde(rd^\omega)$
  operations in $\Fq$.
  All of this correspond to $\Otilde(d\log^2 q) + \Opower(r d^\omega \log q)$
  bit operations in our complexity model (see \S \ref{sssec:complexitymodel}).
\end{proof}

\subsection{Algorithms: the case of a general curve}

When $A$ is a general curve, it is possible to follow the same strategy as
before. However several simplifications that were previously applicable cannot
be implemented in this case.
First of all, finding $g(\phi)$ requires some computation.
By Lemma~\ref{lem:Iphi}, however, $g(\phi)$ can be obtained as the right
gcd of a finite number of Ore polynomials, as soon as we have a
finite presentation of the ring $A$.
Fortunately, such a right gcd can be computed using a noncommutative
variant of the Euclidean algorithm.
Once $g(\phi)$ is known, one can compute its reduced norm using the 
method of \S \ref{ssec:CSAP1}: we form the matrix of the 
$\Frac(A)[\ttau]$-linear map $\mu_{g(\phi)} : \Frac(A_K)[\ttau;\varphi] \to 
\Frac(A_K)[\ttau;\varphi]$, defined by $Q \mapsto Q {\cdot} g(\phi)$, and view 
$\Nrd(g(\phi))$ as the determinant of $\mu_{g(\phi)}$.

This approach yields a working algorithm for computing $\pi(F_\phi)$.
It has nevertheless two drawbacks.
First, the computation of the right gcd may be costly and have an
impact on the size of the coefficients in the base ring $\Frac(A_K)$,
which is not finite. One may gain a certain level of control
by using the theory of noncommutative subresultants introduced 
by Li in~\cite{li98}, but this requires additional caution.
The second disadvantage is that the Ore polynomial $g(\phi)$ is
in general not of the form $\phi_a - a$, implying that the
computation of its reduced norm no longer boils down to finding
the characteristic polynomial of a matrix with entries in~$\Fq$.
Instead, we need to compute the determinant of a general matrix
over $\Frac(A)[\ttau]$, which can be a more costly operation.

It turns out that we can overcome these two issues by following
the same strategy as in \S \ref{ssec:normcurve} and reducing the
problem to the case of $\Fq[T]$.
For simplicity, we assume again that $A$ is presented as
\[
  A = \Fq[X,Y] / P(X,Y)
  \quad \text{with} \quad
  P \in \Fq[X,Y]
\]
and that $\deg(x) > \deg(y)$ where $x$ and $y$ denote the images
in $A$ of the variables $X$ and $Y$. We introduce a new variable
$\Lambda$ and the Ore polynomial ring $K[T,\Lambda][\ttau; \varphi]$
where $\varphi$ acts on $K$ \emph{via} the Frobenius map $x \mapsto x^q$
and acts trivially on $T$ and $\Lambda$. In this setting, we have
a reduced norm map
\[
  \Nrd : K[T,\Lambda][\ttau; \varphi] \to \Fq[T,\Lambda][\ttau^d].
\]
We consider the trivariate polynomial
$\varpi(T, \Lambda, \ttau^d) = \Nrd\big(\phi_x + \Lambda{\cdot}\phi_y
- T\big)$ and write
\[
  \varpi(x+\Lambda y, \Lambda, \ttau^d) =
  \varpi_0(\ttau^d) + \varpi_1(\ttau^d) {\cdot} \Lambda + \cdots +
  \varpi_n(\ttau^d) {\cdot} \Lambda^n
\]
where the $\varpi_i$'s are univariate polynomials over $\Frac(A)$.
This gives the following theorem, which is an analogue of
Theorem~\ref{theo:normcurve} and whose proof is similar.

\begin{theo}
\label{theo:charpolycurve}
We keep the previous notation and assumptions.
Let $F_\phi$ be the Frobenius endomorphism of $\phi$ and 
let $\pi(F_\phi)$ be its monic characteristic polynomial. Then
\[
  \pi(F_\phi) = \gcd(\varpi_0, \varpi_1, \ldots, \varpi_n).
\]
\end{theo}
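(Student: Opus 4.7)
The plan is to follow the strategy of Theorem~\ref{theo:normcurve} mutatis mutandis, replacing Theorem~\ref{th:norm-general} by its \FCSA counterpart Theorem~\ref{theo:charpolyNrd}. First, by Proposition~\ref{prop:charpolyNrd} applied to $P_\Lambda = \phi_x + \Lambda \phi_y - T$ in $K(T,\Lambda)[\ttau;\varphi]$, the reduced norm $\varpi(T,\Lambda,\ttau^d)$ equals, up to the constant $c = N_{K/\Fq}(\lc(\phi_x)) \in \Fq^\times$, the characteristic polynomial of multiplication by $\ttau^d$ on the quotient $K(T,\Lambda)[\ttau;\varphi]/K(T,\Lambda)[\ttau;\varphi]{\cdot}P_\Lambda$. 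The hypothesis $\deg(x) > \deg(y)$ ensures that $\lc(P_\Lambda) = \lc(\phi_x)$ is independent of $T$ and $\Lambda$, so $c$ is genuinely a nonzero scalar.

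Next, I would specialize at $\lambda \in \Fq$: since $x + \lambda y \in A$, one can form the restriction $\phi^{(\lambda)} = f_\lambda^* \phi$ along $f_\lambda : \Fq[T] \to A$, $T \mapsto x + \lambda y$, a Drinfeld $\Fq[T]$-module of rank $rt$ with $t = \deg(x)$. By Lemma~\ref{lem:Iphi}, the generator $g(\phi^{(\lambda)})$ is $K(T)$-collinear to $\phi_x + \lambda \phi_y - T$, so Theorem~\ref{theo:charpolyNrd} gives $\pi(F_{\phi^{(\lambda)}})(\ttau^d) = c \cdot \varpi(T,\lambda,\ttau^d)$. On the other hand, an analog for characteristic polynomials of the block-matrix computation used in Corollary~\ref{cor:restrictionnorm} (applied to $X{\cdot}I - M$, with $M$ the matrix of $\M(F_\phi)$) should yield
\[
  \pi(F_{\phi^{(\lambda)}})(X) = N_{A[X]/\Fq[T][X];\, f_\lambda}\bigl(\pi(F_\phi)(X)\bigr) = \prod_{k=1}^t \sigma_k^{(\lambda)}\bigl(\pi(F_\phi)\bigr)(X),
\]
where the $\sigma_k^{(\lambda)}$ range over the embeddings of $A$ into an algebraic closure of $\Fq(T)$ above $f_\lambda$. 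Since both sides are polynomial in $\lambda$, these identities remain valid when $\lambda$ is replaced by the formal variable $\Lambda$.

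Substituting $T = x + \Lambda y$ then yields $c^{-1} \cdot \varpi(x{+}\Lambda y, \Lambda, \ttau^d) = \prod_k \sigma_k^{(\Lambda)}(\pi(F_\phi))(\ttau^d)|_{T = x+\Lambda y}$. The embedding corresponding to $(x_k, y_k) = (x,y)$ becomes the identity after this substitution, contributing a factor of $\pi(F_\phi)(\ttau^d)$; hence $\pi(F_\phi)$ divides every $\varpi_i$. The remaining $t{-}1$ factors correspond to the other solutions $(x_k, y_k)$ of the system $\{P(x_k,y_k) = 0,\, x_k + \Lambda y_k = x + \Lambda y\}$, which involve $\Lambda$ nontrivially. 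A specialization argument at generic $\lambda \in \Kbar$, in the same spirit as the one concluding the proof of Theorem~\ref{theo:normcurve} and relying on the fact that the $(x_k, y_k)$ are pairwise distinct when $\Lambda$ varies, shows that no polynomial strictly larger than $\pi(F_\phi)$ can divide all the $\varpi_i$'s, giving $\pi(F_\phi) = \gcd(\varpi_0, \ldots, \varpi_n)$. The main obstacle will be establishing the restriction formula for characteristic polynomials cleanly: while morally parallel to Corollary~\ref{cor:restrictionnorm}, one must carry $X$ as a formal variable through the block-matrix identity and track the coefficients over $\Frac(A)$ rather than $A$ itself, since the $\varpi_i$'s are only a priori elements of $\Frac(A)[\ttau^d]$.
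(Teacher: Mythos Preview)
Your proposal is correct and follows essentially the same approach the paper intends: the paper states only that the proof of Theorem~\ref{theo:charpolycurve} ``is similar'' to that of Theorem~\ref{theo:normcurve}, and your outline carries out precisely this adaptation, replacing the norm/determinant machinery by the reduced-norm formula of Theorem~\ref{theo:charpolyNrd} and the restriction compatibility of Corollary~\ref{cor:restrictionnorm} by its characteristic-polynomial analogue (which, as you note, follows from the same block-matrix identity applied to $XI - C$). One minor slip: the constant should read $\varpi(T,\lambda,\ttau^d) = c \cdot \pi(F_{\phi^{(\lambda)}})(\ttau^d)$ rather than the reverse, but since the conclusion is a gcd this is immaterial.
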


The formula of Theorem~\ref{theo:charpolycurve} readily provides an algorithm
for computing $\pi(F_\phi)$. This strategy is not hindered by the two
aforementioned disadvantages. Moreover, as mentioned in \S
\ref{ssec:normcurve}, it may occur that $\pi(F_\phi)$ is already the gcd of the
first polynomials $\varpi_0, \ldots, \varpi_i$, for some $i < n$. Therefore, it
can be beneficial to compute the $\varpi_i$'s one by one (using relaxed
arithmetics), determining the corresponding gcd at each step, and stopping the
computation as soon as the resulting polynomial reaches degree~$d$. As also
discussed in \S \ref{ssec:normcurve}, another option is to work with
evaluations at random values $\lambda \in \Kbar$ instead of working with the
formal variable $\Lambda$.

\bibliographystyle{alpha} 
\bibliography{norm}

\begin{thebibliography}{ACLM23}

\bibitem[ACL22]{abelard-couvreur-lecerf}
Simon Abelard, Alain Couvreur, and Grégoire Lecerf.
\newblock Efficient computation of {Riemann}–{Roch} spaces for plane curves
  with ordinary singularities.
\newblock {\em AAECC}, 2022.

\bibitem[ACLM23]{software-presentation}
David Ayotte, Xavier Caruso, Antoine Leudière, and Joseph Musleh.
\newblock Drinfeld modules in {SageMath}.
\newblock {\em ACM Communications in Computer Algebra}, 57(2):65--71, 2023.

\bibitem[And86]{anderson_t-motives_1986}
Greg~W. Anderson.
\newblock t-{Motives}.
\newblock {\em Duke Mathematical Journal}, 53(2):457--502, June 1986.
\newblock Publisher: Duke University Press.

\bibitem[Ang94]{angles}
Bruno Anglès.
\newblock {\em Modules de Drinfeld sur les corps finis}.
\newblock PhD thesis, 1994.

\bibitem[BCDA22]{bombar_codes_2022}
Maxime Bombar, Alain Couvreur, and Thomas Debris-Alazard.
\newblock On {Codes} and {Learning} with {Errors} over {Function}
  {Fields}.
\newblock In {\em Advances in {Cryptology} – {CRYPTO} 2022}, Lecture {Notes}
  in {Computer} {Science}, pages 513--540. Springer Nature Switzerland, 2022.

\bibitem[Car35]{carlitz_certain_1935}
Leonard Carlitz.
\newblock On certain functions connected with polynomials in a {Galois} field.
\newblock {\em Duke Mathematical Journal}, 1(2), June 1935.

\bibitem[Car18]{caranay-thesis}
Perlas Caranay.
\newblock {\em Computing isogeny volcanoes of rank two {D}rinfeld Modules}.
\newblock PhD thesis, University of Calgary, 2018.

\bibitem[CG24]{caruso-gazda}
Xavier Caruso and Quentin Gazda.
\newblock Computation of classical and $v$-adic $l$-series of $t$-motives.
\newblock {\em RNT}, to appear, 2024.

\bibitem[CGS20]{brenner_computing_2020}
Perlas Caranay, Matthew Greenberg, and Renate Scheidler.
\newblock Computing modular polynomials and isogenies of rank two {Drinfeld}
  modules over finite fields.
\newblock {\em Contemporary mathematics}, 754:283--313, 2020.

\bibitem[Che40]{chevalley_theorie_1940}
Claude Chevalley.
\newblock La {Th\'eorie} du {Corps} de {Classes}.
\newblock {\em Annals of Mathematics}, 41(2):394--418, 1940.

\bibitem[CL09]{couveignes-lercier}
Jean-Marc Couveignes and Reynald Lercier.
\newblock Elliptic periods for finite fields.
\newblock {\em Finite Fields Appl.}, 15(1):1--22, 2009.

\bibitem[CL13]{couveignes-lercier-2}
Jean-Marc Couveignes and Reynald Lercier.
\newblock Fast construction of irreducible polynomials over finite fields.
\newblock {\em Israel J. Math.}, 194(1):77--105, 2013.

\bibitem[CLB17a]{caruso-leborgne-2}
Xavier Caruso and Jérémy Le~Borgne.
\newblock Fast multiplication for skew polynomials.
\newblock {\em Proceedings of the 2017 International Symposium on Symbolic and
  Algebraic Computation}, 2017.

\bibitem[CLB17b]{caruso-leborgne-1}
Xavier Caruso and Jérémy Le~Borgne.
\newblock A new faster algorithm for factoring skew polynomials over finite
  fields.
\newblock {\em Journal of Symbolic Computation}, 79:411--443, 2017.

\bibitem[CLRS22]{cormen_introduction_2022}
Thomas~H. Cormen, Charles~E. Leiserson, Ronald~L. Rivest, and Clifford Stein.
\newblock {\em Introduction to {Algorithms}, fourth edition}.
\newblock MIT Press, 2022.

\bibitem[Con09]{conrad_history_2009}
Keith Conrad.
\newblock History of {Class} {Field} {Theory}.
\newblock 2009.

\bibitem[DNS21]{doliskani-narayanan-schost}
Javad Doliskani, Anand~Kumar Narayana, and Éric Schost.
\newblock Drinfeld modules with complex multiplication, hasse invariants and
  factoring polynomials over finite fields.
\newblock {\em Journal of Symbolic Computation}, 105:199--213, 2021.

\bibitem[Dri74]{drinfeld-paper}
Vladimir~G. Drinfeld.
\newblock Elliptic modules.
\newblock {\em Mathematics of the Ussr-Sbornik}, 23(4):561--592, 1974.

\bibitem[DWZ22]{matrices:omega}
Ran Duan, Hongxun Wu, and Renfei Zhou.
\newblock {Faster matrix multiplication via asymmetric hashing}, 2022.
\newblock Technical Report 2210.10173, arXiv.

\bibitem[Eis95]{eisenbud}
David Eisenbud.
\newblock {\em Commutative Algebra with a View Toward Algebraic Geometry}.
\newblock Springer, 1995.

\bibitem[Gek91]{gek91}
Ernst-Ulrich Gekeler.
\newblock On finite {D}rinfeld modules.
\newblock {\em Journal of algebra}, 1(141):187--203, 1991.

\bibitem[Gek08]{gek08}
Ernst-Ulrich Gekeler.
\newblock Frobenius distributions of drinfeld modules over finite fields.
\newblock {\em Transactions of the American Mathematical Society},
  4(360):1695--1721, 2008.

\bibitem[GJV03]{matrices:poly-computations-03}
Pascal Giorgi, Claude-Pierre Jeannerod, and Gilles Villard.
\newblock On the complexity of polynomial matrix computations.
\newblock In {\em Proceedings of the 2003 International Symposium on Symbolic
  and Algebraic Computation}, ISSAC '03, pages 135--142. Association for
  Computing Machinery, 2003.

\bibitem[GL20]{grishkov_introduction_2020}
Alexandre Grishkov and Dmitry Logachev.
\newblock Introduction to {Anderson} t-motives: a survey, August 2020.
\newblock arXiv:2008.10657v3.

\bibitem[Gos98]{gos98}
David Goss.
\newblock {\em Basic Structures of Function Field Arithmetic}.
\newblock Springer, 1998.

\bibitem[GP20]{garai-papikian}
Sumita Garai and Mihran Papikian.
\newblock Endomorphism rings of reductions of {D}rinfeld modules.
\newblock {\em Journal of Number Theory}, 212:18--39, 2020.

\bibitem[Hay74]{hayes_explicit_1974}
David~R. Hayes.
\newblock Explicit class field theory for rational function fields.
\newblock {\em Transactions of the American Mathematical Society},
  189(0):77--91, 1974.

\bibitem[Hay11]{hayes_brief_2011}
David~R. Hayes.
\newblock A {Brief} {Introduction} to {Drinfeld} {Modules}.
\newblock In {\em A {Brief} {Introduction} to {Drinfeld} {Modules}}, pages
  1--32. De Gruyter, June 2011.

\bibitem[Hil32]{hilbert_neuer_1932}
David Hilbert.
\newblock Ein neuer {Beweis} des {Kroneckerschen} {Fundamentalsatzes} über
  {Abelsche} {Zahlkörper}.
\newblock In David Hilbert, editor, {\em Gesammelte {Abhandlungen}: {Erster}
  {Band} {Zahlentheorie}}, pages 53--62. Springer, 1932.

\bibitem[Jac96]{jacobson}
Nathan Jacobson.
\newblock {\em Finite-dimensional division algebras over fields}.
\newblock Springer-Verlag, Berlin, 1996.

\bibitem[JN19]{joux_drinfeld_2019}
Antoine Joux and Anand~Kumar Narayanan.
\newblock Drinfeld modules may not be for isogeny based cryptography, 2019.
\newblock Report Number: 1329.

\bibitem[JV05]{matrices:poly-computations-05}
Claude-Pierre Jeannerod and Gilles Villard.
\newblock Asymptotically fast polynomial matrix algorithms for multivariable
  systems.
\newblock {\em International Journal of Control}, 79:1359--1367, 2005.

\bibitem[Kal92]{matrices:kaltofen}
Erich Kaltofen.
\newblock On computing determinants of matrices without divisions.
\newblock In {\em Proceedings of the 1992 International Symposium on Symbolic
  and Algebraic Computation}, ISSAC '92. Association for Computing Machinery,
  1992.

\bibitem[Kat73]{katz}
Nicholas~M. Katz.
\newblock {$p$}-adic properties of modular schemes and modular forms.
\newblock In {\em Modular functions of one variable, {III} ({P}roc. {I}nternat.
  {S}ummer {S}chool, {U}niv. {A}ntwerp, {A}ntwerp, 1972)}, pages 69--190.
  Lecture Notes in Mathematics, Vol. 350. Springer, 1973.

\bibitem[Kro53]{kronecker-weber}
Leopold Kronecker.
\newblock Über die algebraisch auflösbaren gleichungen.
\newblock In K.~Hensel, editor, {\em Leopold Kronecker’s Werke, Part 4},
  pages 4--11. American Mathematical Society, 1853.

\bibitem[KU11]{kedlaya-umans}
Kiran~S. Kedlaya and Christopher Umans.
\newblock Fast {Polynomial} {Factorization} and {Modular} {Composition}.
\newblock {\em SIAM Journal on Computing}, 40(6):1767--1802, January 2011.
\newblock Publisher: Society for Industrial and Applied Mathematics.

\bibitem[KV05]{matrices:kaltofen-villard}
Erich Kaltofen and Gilles Villard.
\newblock On the complexity of computing determinants.
\newblock {\em Computational Complexity}, 13(3–4):91--130, feb 2005.

\bibitem[Laf02]{lafforgue}
Laurent Lafforgue.
\newblock Chtoucas de {D}rinfeld, formule des traces d'{A}rthur-{S}elberg et
  correspondance de {L}anglands.
\newblock In {\em Proceedings of the {I}nternational {C}ongress of
  {M}athematicians, {V}ol. {I} ({B}eijing, 2002)}, pages 383--400. Higher Ed.
  Press, Beijing, 2002.

\bibitem[Lau95]{laumon_cohomology_1995}
Gérard Laumon.
\newblock {\em Cohomology of Drinfeld Modular Varieties}.
\newblock Cambridge Studies in Advanced Mathematics. Cambridge University
  Press, 1995.

\bibitem[LGS20]{legluher-spaenlehauer}
Aude Le~Gluher and Pierre-Jean Spaenlehauer.
\newblock A fast randomized geometric algorithm for computing {R}iemann-{R}och
  spaces.
\newblock {\em Math. Comp.}, 89(325):2399--2433, 2020.

\bibitem[Li98]{li98}
Ziming Li.
\newblock A subresultant theory for {O}re polynomials with applications.
\newblock In {\em Proceedings of the 1998 {I}nternational {S}ymposium on
  {S}ymbolic and {A}lgebraic {C}omputation}, pages 132--139. ACM, New York,
  1998.

\bibitem[LS24]{leudiere_computing_2024}
Antoine Leudière and Pierre-Jean Spaenlehauer.
\newblock Computing a group action from the class field theory of imaginary
  hyperelliptic function fields.
\newblock {\em Journal of Symbolic Computation}, 125, 2024.

\bibitem[MS19]{musleh-schost-1}
Yossef Musleh and \'Eric Schost.
\newblock Computing the characteristic polynomial of a finite rank two
  {Drinfeld} module.
\newblock In {\em Proceedings of the 2019 International Symposium on Symbolic
  and Algebraic Computation}, ISSAC '19, pages 307--314. Association for
  Computing Machinery, 2019.

\bibitem[MS23]{musleh-schost-2}
Yossef Musleh and Éric Schost.
\newblock Computing the characteristic polynomial of endomorphisms of a finite
  {Drinfeld} module using crystalline cohomology.
\newblock In {\em Proceedings of the 2023 International Symposium on Symbolic
  and Algebraic Computation}, {ISSAC} '23, pages 461--469. Association for
  Computing Machinery, 2023.

\bibitem[Nar18]{narayanan}
Anand~Kumar Narayanan.
\newblock Polynomial factorization over finite fields by computing
  {Euler}–{Poincaré} characteristics of {D}rinfeld modules.
\newblock {\em Finite Fields and Their Applications}, 54:335--365, 2018.

\bibitem[NP21]{matrices:field-charpoly-mult}
Vincent Neiger and Clément Pernet.
\newblock Deterministic computation of the characteristic polynomial in the
  time of matrix multiplication.
\newblock {\em Journal of Complexity}, 67:101572, 2021.

\bibitem[Pap23]{papikian_drinfeld_2023}
Mihran Papikian.
\newblock {\em Drinfeld {Modules}}, volume 296 of {\em Graduate {Texts} in
  {Mathematics}}.
\newblock Springer International Publishing, 2023.

\bibitem[Poo22]{poonen_introduction_2022}
Bjorn Poonen.
\newblock Introduction to {Drinfeld} modules.
\newblock {\em Arithmetic, Geometry, Cryptography, and Coding Theory}, 779,
  January 2022.

\bibitem[PS07]{matrices:field-charpoly}
Clément. Pernet and Arne Storjohann.
\newblock Faster algorithms for the characteristic polynomial.
\newblock In {\em Proceedings of the 2007 International Symposium on Symbolic
  and Algebraic Computation}, ISSAC '07, pages 307--314. Association for
  Computing Machinery, 2007.

\bibitem[Ros02]{rosen_number_2002}
Michael Rosen.
\newblock {\em Number {Theory} in {Function} {Fields}}, volume 210 of {\em
  Graduate {Texts} in {Mathematics}}.
\newblock Springer, 2002.

\bibitem[Sca01]{scanlon_public_2001}
Thomas Scanlon.
\newblock Public {Key} {Cryptosystems} {Based} on {Drinfeld} {Modules} {Are}
  {Insecure}.
\newblock {\em Journal of Cryptology}, 14(4):225--230, September 2001.

\bibitem[Sil00]{silvester}
John~R. Silvester.
\newblock Determinants of block matrices.
\newblock {\em The Mathematical Gazette}, 84(501):460--467, 2000.

\bibitem[Tak14]{takagi_collected_2014}
Teiji Takagi.
\newblock {\em Collected {Papers}}.
\newblock Springer {Collected} {Works} in {Mathematics}. Springer Tokyo, 2
  edition, November 2014.

\bibitem[vdH97]{vdH97}
Joris van~der Hoeven.
\newblock Lazy multiplication of formal power series.
\newblock In {\em Proceedings of the 1997 International Symposium on Symbolic
  and Algebraic Computation}, ISSAC '97, pages 17--20. Association for
  Computing Machinery, 1997.

\bibitem[vdH04]{weil-pairing}
Gert~Jan van~der Heiden.
\newblock Weil pairing for {D}rinfeld modules.
\newblock {\em Monatshefte f{\"u}r Mathematik}, 143:115--143, 2004.

\bibitem[VS06]{villa_salvador_topics_2006}
Gabriel~Daniel Villa~Salvador.
\newblock {\em Topics in the {Theory} of {Algebraic} {Function} {Fields}}.
\newblock Mathematics: {Theory} \& {Applications}. Birkhäuser, 2006.

\bibitem[vzGG13]{gathen}
Joachim von~zur Gathen and J\"{u}rgen Gerhard.
\newblock {\em Modern computer algebra}.
\newblock Cambridge University Press, Cambridge, third edition, 2013.

\end{thebibliography}

\newpage

\appendix
\section{Review of existing algorithms}
\label{appendix:review}

In all this Section, $\phi$ is a rank $r$ Drinfeld $\Fq[T]$-module over a field
$K$. The field $K$ may not be finite, but when it is, its degree over $\Fq$ is
denoted by $d$. The function field characteristic of $K$ is an ideal $\p$ of
$\Fq[T]$ whose degree is denoted by $m$. We consider an endomorphism or an
isogeny $u$ whose degree as an Ore polynomial is $n$. 
We let $\omega$ be a feasible exponent for matrix multiplication and 
$\Omega$ be a feasible exponent for matrix characteristic polynomial 
computation.

We underline that any algorithm the computes the characteristic 
polynomial of an endomorphism computes its norm as a byproduct.
Furthermore, the Frobenius norm can be computed in
$\Otilde(d \log^2 q) + \Opower(d \log q)$ bit operations (see
Remark~\ref{rem:frobenius-norm}), which is strictly better than 
any other algorithm mentioned in this paper.

In all the tables below, the term $\Otilde(d \log^2 q)$ which appears
in blue on many lines always correspond to the precompution of the image 
of a generator of $K/\Fq$ by the Frobenius endomorphism (see \S 
\ref{sssec:complexitymodel}).

\newcommand{\precomp}{{\color{blue} {} + \Otilde(d \log^2 q)}}

\begin{center}
\medskip

  \begin{threeparttable}

  \captionsetup{font=footnotesize}
  \caption{Algorithms for the characteristic polynomial of the Frobenius
  endomorphism in rank two\hfill\null}

  \footnotesize
  \begin{tabular}{|p{2.7cm}|p{8.2cm}|p{1.4cm}|}
    \hline
    \multicolumn{1}{|c|}{\textit{Algorithm}}
    & \multicolumn{1}{c|}{\textit{Bit complexity}}
    & \multicolumn{1}{c|}{\textit{Constraints}} \\
    \hline
    \hline

    \cite{gek08}
    \tnote{1}
    & $\Opower(d^3 \log q) \precomp$
    & \cellcolor{gray!15} \\ \hline

    \cite[\S~5]{musleh-schost-1}
    \tnote{2}
    & $\Opower(d^{1.885}\log q) \precomp$
    & $m = d$ \\ \hline

    \cite[\S~7]{musleh-schost-1}
    \tnote{3}
    & $\Opower(d^2\log^2 q)$
    & \cellcolor{gray!15} \\ \hline

    \cite[\S~6]{musleh-schost-1}
    \tnote{4}
    & $\Opower(d^2 \log q) \precomp$
    & \cellcolor{gray!15} \\ \hline

    \cite[\S~5.1]{garai-papikian}
    \tnote{\sh}
    & $\Otilde (d^3 \log q)$
    & $m = d$ \\ \hline

    \cite[Th~1]{doliskani-narayanan-schost} 
    \tnote{5}
    & $\Opower(d^{1.5} \log q)
         \precomp$
    & $m = d$ \\ \hline

    \cite[Th.~1(1)]{musleh-schost-2}
    \tnote{\sh}
    & $\Opower(d^{1.5} \log q)
         \precomp$
    & $m = d$ \\ \hline

    \cite[Th.~1(2)]{musleh-schost-2}
    \tnote{\sh}
    & $\Opower\big(\frac {d^2} {\sqrt{m}} \log q\big)
        \precomp$
    & $m < d$ \\ \hline

    \cite[Th.~2(1)]{musleh-schost-2}
    \tnote{\fl}
    & $\Opower(d^2 \frac {d + m}{m} \log q)
        \precomp$
    & \cellcolor{gray!15} \\ \hline

    \cite[Th.~2(2)]{musleh-schost-2}
    \tnote{\fl}
    & $\Opower(\SMgeq(d, d) \log q)
        \precomp$
    & \cellcolor{gray!15} \\ \hline

    \textbf{Cor.~\ref{cor:charpoly-finitefield-frobenius}, \FMFF}
    \tnote{\sh}
    & $\Opower(\SMgeq(d, d) \log q)
        \precomp$
    & \cellcolor{gray!15} \\ \hline

    \textbf{Th.~\ref{theo:kedlaya-umans}, \FMKU}
    \tnote{\sh}
    & $\Opower(d^2 \log q)
        \precomp$
    & \cellcolor{gray!15} \\ \hline

    \textbf{Th.~\ref{theo:CSA}, \FCSA}
    \tnote{\sh}
    & $\Opower(d^\omega \log q)
        \precomp$
    & \cellcolor{gray!15} \\ \hline

  \end{tabular}

  \begin{tablenotes}

    \footnotesize

    \item [1] Deterministic algorithm by Gekeler. The Frobenius norm is
      directly computed, and the Frobenius trace is computed as the solution of
      a linear system. See also \cite[\S~4.1]{musleh-schost-1}.

    \item [2] Monte-Carlo algorithm by Musleh and Schost. The algorithm is
      inspired by ideas from ideas of Narayanan in \cite[\S~3.1]{narayanan}, as
      well as Copersmith's block Wiedemann algorithm.

    \item [3] Monte-Carlo algorithm by Musleh and Schost. The algorithm
      computes the Frobenius norm, and the minimal polynomial of $\phi_T$ using
      a Monte-Carlo algorithm. After, it recovers $F_\phi$ by solving a Hankel
      system.

    \item [4] Deterministic algorithm by Musleh and Schost. Drinfeld analogue
      of Schoof's algorithm for elliptic curves.

    \item [5] Deterministic Algorithm by Doliskani,
      Narayanan and Schost, introduced to factorize polynomials in $\Fq[T]$.
      The algorithm actually computes the \emph{Hasse invariant} of the
      Drinfeld module, from which the Frobenius trace is recovered thanks to
      the assumption that $m = d$. The algorithm gets inspiration from
      elliptic curve algorithms and computes the Hasse invariant as an element
      in a recursive sequence discovered by Gekeler. See
      \cite[\S~2.1]{doliskani-narayanan-schost}.

    \item [\sh] Algorithm described in Table~\ref{table:second}.

    \item [\fl] Algorithm described in Table~\ref{table:third}.

  \end{tablenotes}

  \end{threeparttable}

\vspace{1cm}

  \begin{threeparttable}

  \captionsetup{font=footnotesize}
  \caption{Algorithms for the characteristic polynomial of the Frobenius
  endomorphism in any rank $r$\hfill\null}
  \label{table:second}

  \footnotesize
  \begin{tabular}{|p{2.7cm}|p{8.2cm}|p{1.4cm}|}
    \hline
    \multicolumn{1}{|c|}{\textit{Algorithm}}
    & \multicolumn{1}{c|}{\textit{Bit complexity}}
    & \multicolumn{1}{c|}{\textit{Constraints}} \\
    \hline
    \hline

    \cite[\S~5.1]{garai-papikian}
    \tnote{1}
    & $\Otilde (r^2 d^3 \log q)$
    & $m = d$ \\ \hline

    \cite[Th.~1(1)]{musleh-schost-2}
    \tnote{2}
    & $\Opower(r^\omega d^{\frac 3 2} \log q) \precomp$
    & $m = d$ \\ \hline

    \cite[Th.~1(2)]{musleh-schost-2}
    \tnote{2}
    & $\Opower\big(\big(\frac {r^\Omega} m + \frac {r^\omega} {\sqrt{m}}\big) d^2\log q\big)
        \precomp$
    & $m < d$ \\ \hline

    \cite[Th.~2(1)]{musleh-schost-2}
    \tnote{\fl}
    & $\Opower\big(\big(r^\Omega + \min(dr^2, (d{+}r)r^{\omega-1})\big) \frac {d(d + m)}{m} \log q\big)
        \precomp$
    & \cellcolor{gray!15} \\ \hline

    \cite[Th.~2(2)]{musleh-schost-2}
    \tnote{\fl}
    & $\Opower\big(\big(r^\Omega \frac{d(d + m)}{m} + 
                   r {\cdot} \SMgeq(d + r, d)\big) \log q\big)
        \precomp$
    & \cellcolor{gray!15} \\ \hline

    \textbf{Cor.~\ref{cor:charpoly-finitefield-frobenius}, \FMFF}
    \tnote{3}
    & $\Opower((\SMgeq(d, d) + rd^2 + dr^\omega) \log q)
        \precomp$
    & \cellcolor{gray!15} \\ \hline

    \textbf{Th.~\ref{theo:kedlaya-umans}, \FMKU}
    \tnote{4}
    & $\Opower((d^2 r^{\omega-1} + dr^\omega) \log q)
        \precomp$
    & \cellcolor{gray!15} \\ \hline

    \textbf{Th.~\ref{theo:CSA}, \FCSA}
    \tnote{5}
    & $\Opower(rd^\omega \log q)
        \precomp$
    & \cellcolor{gray!15} \\

    \hline

  \end{tabular}

  \begin{tablenotes}

    \footnotesize

    \item [1] Deterministic algorithm by Garai and Papikian. With
      Proposition~\ref{prop:borne-coeff-frob-charpoly} and the hypothesis $m = d$, the
      coefficients of $F_\phi$ are uniquely determined by their images under
      $\gamma: \Fq[T] \to K$. The Frobenius norm is computed using
      Equation~\eqref{eq:frobenius-norm} and the other coefficients are
      recursively computed.

    \item [2] Two deterministic algorithms by Musleh and Schost. The
      characteristic polynomial of any endomorphism is the characteristic
      polynomial of its action on the crystaline cohomology. In the case of the
      Frobenius endomorphism, algorithmic speed-ups are possible using a
      \emph{baby step-giant step} method.

    \item [3] Probabilistic algorithm. The characteristic polynomial of the
      Frobenius endomorphism is the characteristic polynomial of its action on
      the motive.

    \item [4] Probabilistic algorithm. The characteristic polynomial of the
      Frobenius endomorphism is the characteristic polynomial of its action on
      the motive. The corresponding matrix is recursively
      computed using a \emph{square and multiply}-like procedure.

    \item [5] Probabilistic algorithm. The characteristic polynomial of the
      Frobenius endomorphism is interpreted as the reduced characteristic
      polynomial of $\phi_T$ in the central simple $\Fq[\tau^d]$-algebra $\Ktau$.

    \item [\fl] Algorithm described in Table~\ref{table:third}.

  \end{tablenotes}

  \end{threeparttable}

\vspace{1cm}

  \begin{threeparttable}

  \captionsetup{font=footnotesize}
  \caption{Algorithms for characteristic polynomials of degree $n$
  endomorphisms, in any rank $r$, over a finite field of degree $d$ over
  $\Fq$\hfill\null}
  \label{table:third}

  \footnotesize
  \begin{tabular}{|p{2.7cm}|p{8.2cm}|p{1.4cm}|}
    \hline
    \multicolumn{1}{|c|}{\textit{Algorithm}}
    & \multicolumn{1}{c|}{\textit{Bit complexity}}
    & \multicolumn{1}{c|}{\textit{Constraints}} \\
    \hline
    \hline

    \cite[Th.~2(1)]{musleh-schost-2}
    \tnote{1}
    & $\Opower\big(\big(r^\Omega + \min(nr^2, (n{+}r)r^{\omega-1})\big) \frac {d(n + m)}{m} \log q\big)
        \precomp$
    & \cellcolor{gray!15} \\ \hline

    \cite[Th.~2(2)]{musleh-schost-2}
    \tnote{1}
    & $\Opower\big(\big(r^\Omega \frac{d(n + m)}{m} + r \SMgeq(n + r, d)\big) \log q\big)
        \precomp$
    & \cellcolor{gray!15} \\ \hline

    \textbf{Th.~\ref{theo:charpoly-finitefield}, \FMFF}
    \tnote{2}
    & $\Opower((\SMgeq(n, d) + ndr + nr^\omega + dr^\omega) \log q)
        \precomp$
    & \cellcolor{gray!15} \\ \hline

  \end{tabular}

  \begin{tablenotes}
    \footnotesize

    \item [1] Two deterministic algorithms by Musleh and Schost. The
      characteristic polynomial of any endomorphism is the characteristic
      polynomial of its action on the crystalline cohomology of the Drinfeld
      module.

    \item [2] Probabilistic algorithm. The characteristic polynomial of any
      endomorphism is the characteristic polynomial of its action on the motive
      of the Drinfeld module.

  \end{tablenotes}

  \end{threeparttable}

\vspace{1cm}

  \begin{threeparttable}

  \captionsetup{font=footnotesize}
  \caption{Algorithms for characteristic polynomials of degree $n$
  endomorphisms, in any rank $r$, over a generic field\hfill\null}
  \label{table:fourth}

  \footnotesize
  \begin{tabular}{|p{2.7cm}|p{8.2cm}|p{1.4cm}|}
    \hline
    \multicolumn{1}{|c|}{\textit{Algorithm}}
    & \multicolumn{1}{c|}{\textit{Operations in the base field} \& \textit{Frobenius applications}}
    & \multicolumn{1}{c|}{\textit{Constraints}} \\
    \hline
    \hline

    \textbf{Th.~\ref{theo:motive-charpoly-comp}}
    \tnote{1}
    & $\Otilde(n^2 + (n + r)r^{\Omega - 1})$ \; \& \;
      $O(n^2 + r^2)$
    & \cellcolor{gray!15} \\ \hline

  \end{tabular}

  \begin{tablenotes}
    \footnotesize

    \item [1] Probabilistic algorithm. The characteristic polynomial of any
      endomorphism is the characteristic polynomial of its action on the motive
      of the Drinfeld module.

  \end{tablenotes}

  \end{threeparttable}

\vspace{1cm}

  \begin{threeparttable}

  \captionsetup{font=footnotesize}
  \caption{Algorithms for computing norms of degree $n$ isogenies, in any rank
  $r$, over a finite field of degree $d$ over~$\Fq$\hfill\null}

  \footnotesize
  \begin{tabular}{|p{2.7cm}|p{8.2cm}|p{1.4cm}|}
    \hline
    \multicolumn{1}{|c|}{\textit{Algorithm}}
    & \multicolumn{1}{c|}{\textit{Bit complexity}}
    & \multicolumn{1}{c|}{\textit{Constraints}} \\
    \hline
    \hline

    \textbf{Th.~\ref{theo:iso-norm-finite}}
    \tnote{1}
    & $\Opower\!\big(\big(\SMgeq(n, d) + ndr + n\min(d,r)r^{\omega-1}
               + dr^\omega\big)\log q)
       \precomp$
    & \cellcolor{gray!15} \\ \hline

    \multicolumn{3}{|c|}{See also Table~\ref{table:third}.} \\

    \hline

  \end{tabular}

  \begin{tablenotes}
    \footnotesize

    \item [1] Probabilistic algorithm. The norm of any isogeny is the
      determinant of the motivic application associated to the isogeny.

  \end{tablenotes}

  \end{threeparttable}

\vspace{1cm}

  \begin{threeparttable}

  \captionsetup{font=footnotesize}
  \caption{Algorithms for computing norms of degree $n$ isogenies, in any rank
  $r$ over a finite field of degree $d$ over~$\Fq$\hfill\null}

  \footnotesize
  \begin{tabular}{|p{2.7cm}|p{8.2cm}|p{1.4cm}|}
    \hline
    \multicolumn{1}{|c|}{\textit{Algorithm}}
    & \multicolumn{1}{c|}{\textit{Operations in the base field} \& \textit{Frobenius applications}}
    & \multicolumn{1}{c|}{\textit{Constraints}} \\
    \hline
    \hline

    \textbf{Th.~\ref{theo:iso-norm}}
    \tnote{1}
    & $\Otilde(n^2 + (n + r)r^{\omega-1})$ \; \& \;
      $O(n^2 + r^2)$
    & \cellcolor{gray!15} \\ \hline

  \end{tabular}

  \begin{tablenotes}
    \footnotesize

    \item [1] Probabilistic algorithm. The norm of any isogeny is the
      determinant of the motivic application associated to the isogeny.

  \end{tablenotes}

  \end{threeparttable}
\end{center}

\end{document}